\let\OLDthebibliography\thebibliography
\renewcommand\thebibliography[1]{
	\OLDthebibliography{#1}
	\setlength{\parskip}{0pt}
	\setlength{\itemsep}{0pt plus 0.3ex}
}
\date{}
\newtheorem*{rep@theorem}{\rep@title}
\newcommand{\newreptheorem}[2]{%
	\newenvironment{rep#1}[1]{%
		\def\rep@title{#2 \ref{##1}}%
		\begin{rep@theorem}}%
		{\end{rep@theorem}}}
\newenvironment{lemma-repeat}[1]{\begin{trivlist}
		\item[\hspace{\labelsep}{\bf\noindent Lemma \ref{#1} }]\em }%
	{\end{trivlist}}
\newenvironment{theorem-repeat}[1]{\begin{trivlist}
		\item[\hspace{\labelsep}{\bf\noindent Theorem \ref{#1} }]\em }%
	{\end{trivlist}}
\newcommand{\qedsymb}{\qed}
\newenvironment{proofof}[1]{\begin{trivlist}
		\item[\hspace{\labelsep}{\bf\noindent Proof of #1: }]
	}{\qedsymb\end{trivlist}}
\newtheorem{theorem}{Theorem}
\newtheorem{claim}{Claim}
\newtheorem{lemma}{Lemma}
\newtheorem{definition}{Definition}
\newtheorem{corollary}{Corollary}
\def\eps{\varepsilon}
\DeclareMathOperator{\bin}{bin}
\newcommand{\size}[1]{\ensuremath{\left|#1\right|}}
\newcommand{\set}[1]{\left\{ #1 \right\}}
\DeclareMathOperator{\false}{{\scriptstyle{FALSE}}}
\DeclareMathOperator{\true}{{\scriptstyle{TRUE}}}
\def\disj{\mathrm{DISJ}}
\def\eq{\mathrm{EQ}}
\DeclareMathOperator{\dist}{d}
\DeclareMathOperator{\wdist}{wd}
\DeclareMathOperator{\inedge}{InputEdges}
\DeclareMathOperator{\poly}{poly}
\DeclareMathOperator{\CC}{CC}
\newcommand\singlebar[1]{\bar{#1}}
\newcommand\doublebar[1]{\bar{\bar{#1}}}
\newcommand{\cgst}{\textsc{congest}}
\newcommand{\cgstbcast}{\textsc{congest{-}broadcast}}
\newcommand{\local}{\textsc{local}}
\DeclareFontShape{OT1}{cmr}{bx}{sc}{<-> cmbcsc10}{}
\newcommand{\setdis}{\normalfont{set-disjointness}}
\newcommand{\setdisemph}{\emph{set-disjointness}}
\newcommand{\ThmSpa}
{
	Any algorithm for computing the exact diameter, even of a network of $\Theta(n\log n)$ edges, requires $\Omega\left(\frac{n}{\log^2{n}}\right)$ rounds.
}
\newcommand{\ThmDA}
{
	For any constant $0<\eps<1/2$, 
	any algorithm for computing a $(3/2-\varepsilon)$-approximation of the diameter,
	even of a network of $\Theta(n\log n)$ edges, 
	requires $\Omega\left(\frac{n}{\log^3{n}}\right)$ rounds.
}
\newcommand{\ThmRadius}
{
	Any algorithm for computing the radius, even of a network of $\Theta(n\log n)$ edges, requires $\Omega\left(\frac{n}{\log^2{n}}\right)$ rounds.
}
\newcommand{\ThmMVC}
{
	Any algorithm for computing a minimum vertex cover of the network or deciding whether there is a vertex cover of a given size requires $\Omega(n^2/\log^2n)$ rounds.
}
\newcommand{\ThmColoring}
{
	Any algorithm for coloring a $\chi$-colorable network in $\chi$ colors,
	or for deciding if it is $c$-colorable for a given $c$,
	requires $\Omega(n^2/\log^2n)$ rounds.
}
\begin{document}

\begin{titlepage}
	
	\title{Smaller Cuts, Higher Lower Bounds\thanks{This paper contains material previously published in two conference papers~\cite{AbboudCHK16,Censor-HillelKP17}.}}
	\author{Amir Abboud\footnotemark[1] \and Keren Censor-Hillel\footnotemark[2] \and Seri Khoury\footnotemark[2] \and Ami Paz\footnotemark[3] }

\renewcommand*{\thefootnote}{\fnsymbol{footnote}}
\footnotetext[1]{IBM Almaden Research Center.  \texttt{amir.abboud@ibm.com}.}
\footnotetext[2]{Department of Computer Science, Technion. \texttt{\{ckeren,serikhoury\}@cs.technion.ac.il}. This research has received funding from the European Union's Horizon 2020 research and innovation programme under grant agreement No. 755839. This research is also supported in part by the Israel Science Foundation (grant 1696/14).}
\footnotetext[1]{IRIF, France.  \texttt{amipaz@irif.fr}. Supported by the Fondation Sciences Math\'ematiques de Paris (FSMP).}
\renewcommand*{\thefootnote}{\roman{footnote}}
\maketitle
\begin{abstract}
This paper proves strong lower bounds for distributed computing in the \cgst{} model, by presenting the \emph{bit-gadget}: a new technique for constructing graphs with small cuts.

The contribution of bit-gadgets is twofold. First, developing careful sparse graph constructions with small cuts extends known techniques to show a near-linear lower bound for computing the diameter, a result previously known only for dense graphs. Moreover, the sparseness of the construction plays a crucial role in applying it to approximations of various distance computation problems, drastically improving over what can be obtained when using dense graphs.

Second, small cuts are essential for proving super-linear lower bounds, none of which were known prior to this work. In fact, they allow us to show near-quadratic lower bounds for several problems, such as exact minimum vertex cover or maximum independent set, as well as for coloring a graph with its chromatic number. Such strong lower bounds are not limited to NP-hard problems, as given by two simple graph problems in P which are shown to require a quadratic and near-quadratic number of rounds. All of the above are optimal up to logarithmic factors. In addition, in this context, the complexity of the all-pairs-shortest-paths problem is discussed.

Finally, it is shown that graph constructions for \cgst{} lower bounds translate to lower bounds for the semi-streaming model, despite being very different in its nature.

\end{abstract}
\end{titlepage}
	
	\tableofcontents
	\newpage
\section{Introduction}
This paper studies inherent limitations of distributed graph algorithms with bounded bandwidth, and shows new and strong lower bounds for some classical graph problems.
A fundamental computational model for distributed networks is the \cgst{} model~\cite{Peleg:book00}, where the network graph represents $n$ nodes that communicate in synchronous rounds in which $O(\log n)$-bit messages are exchanged among neighbors.

Many lower bounds for the \cgst{} model rely on reductions from two-party communication problems (see, e.g.,~\cite{FrischknechtHW12,HolzerW12,Nanongkai14,PelegR00,SarmaHKKNPPW12,DruckerKO13,Elkin06,NanongkaiSP11,Censor-HillelKP18}). In this setting, two players, Alice and Bob, are given inputs of $K$ bits and need to compute a single output bit according to some predefined function of their inputs. 

The standard framework for reducing a two-party communication problem of computing a function $f$ to deciding a graph predicate $P$ in the \cgst{} model is as follows. 
Given an instance $(x,y)$ of the two-party problem $f$, a graph is constructed such that the value of $P$ on it can be used to determine the value of $f$ on $(x,y)$.
Some of the graph edges are fixed, while the existence of some other edges depends on the inputs of Alice and Bob. 
Then, given an algorithm $ALG$ for solving $P$ in the \cgst{} model, the vertices of the graph are split into two sets, $V_A$ and $V_B$, and Alice simulates $ALG$ over $V_A$ while Bob simulates $ALG$ over $V_B$. 
The only communication required between Alice and Bob in order to carry out this simulation is the content of messages sent in each direction over the edges of the cut $C=E(V_A,V_B)$. Using this technique for a two-party problem $f$ on $K$ bits
with communication complexity $\CC(f,K)$ and a graph with a cut $C$, 
it can be proven that the complexity of $ALG$ is in $\Omega(\CC(f,K)/|C|\log{n})$.%
\footnote{In this paper, and in many others, the nodes are partitioned into disjoint sets, 
and this partition remains fixed over time.
We remark that in earlier work, 
the partition of the graph nodes between Alice and Bob is not fixed, and their nodes are not disjoint: there are many nodes they both simulate, and in each round each player simulates less nodes. This technique requires a more involved analysis, 
and we do not discuss it further.}

Thus, the lower bound achieved using the reduction depends on two parameters of the graph construction: (i) the size of the input, $K$, and (ii) the size of the cut, $|C|$. All previously known constructions are both \emph{dense} and have \emph{large cuts}, which causes them to suffer from two limitations. 

The first limitation is that lower bounds for global approximation tasks, such as approximating the diameter of the graph, which are typically obtained through stretching edges in the construction into paths by adding new nodes, must pay a significant decrease in the size of the input compared to the number of nodes because of their density. Together with their large cuts, this causes such lower bounds to stay well below linear. For example, the graph construction for the lower bound for computing the diameter~\cite{FrischknechtHW12} has $K=\Theta(n^2)$ and $|C|=\Theta(n)$, which gives an almost linear lower bound of $\Omega(n/\log n)$ using the \setdis{} problem whose communication complexity is known to be $\Theta(K)$~\cite{KushilevitzN:book96}. However, because the construction is dense, although the resulting graph construction for computing a ($3/2-\epsilon$)-approximation of the diameter~\cite{FrischknechtHW12} has a smaller cut of $|C|=\Theta(\sqrt{n})$, this comes at the price of supporting a smaller input size, of $K=\Theta(n)$, which gives a lower bound that is roughly a square-root of $n$.

The second limitation is that large, say, linear cuts can inherently provide only linear lower bounds at best. However, tasks such as computing an exact minimum vertex cover seem to be much harder for the \cgst{} model, despite the inability of previous constructions to prove this.

In this paper, we present the bit-gadget technique for constructing graphs with small cuts that allow obtaining strong lower bounds for the \cgst{} model.
Bit-gadgets are inspired by constructions that are used for proving \emph{conditional} lower bounds for the sequential setting~\cite{RodittyW13,ChechikLRSTW14,AbboudGW15,AbboudWW16,CairoGR16}, whose power is in allowing a logarithmic-size cut. Our constructions allow bringing lower bounds for approximate diameter and radius up to a near-optimal near-linear complexity. Furthermore, they allow us to obtain the first near-quadratic lower bounds for natural graph problems, such as computing a minimum vertex cover or a coloring with a minimal number of colors. These are near-optimal since all of these problems admit simple $O(m)$ solutions in the \cgst{} model. Notably, these are the first super-linear lower bounds for this model.

In addition, this paper discusses the complexity of the weighted all-pairs-shortest-paths problem. 
This is one of the most-studied problems in the \cgst{} model, 
yet its complexity remains elusive. 
This problem was known to have at least almost-linear complexity; we improve this by presenting a linear lower bound, and also prove that the Alice-Bob technique discussed cannot achieve a super-linear lower bound for the problem.

Finally, we show that graph constructions for lower bounds for the \cgst{} model can be used directly to obtain lower bounds for the streaming model of computation in a black-box manner, and so we prove the lower bounds for problems such as computing a minimum vertex cover or a coloring with minimal number of colors. This is an artifact of usage of communication complexity problems with large input size in our lower bound constructions.

\subsection{Contributions}
\subsubsection{Distance Computation}
Frischknecht et al.~\cite{FrischknechtHW12} showed that the diameter is surprisingly hard to compute: $\widetilde{\Omega}(n)$ rounds are needed even in networks with constant diameter\footnote{The notations $\widetilde{\Omega}$ and $\widetilde{O}$ hide factors that are polylogarithmic in $n$. }. This lower bound is nearly tight, due to an $O(n)$ upper bound~\cite{PelegRT12,HolzerW12,LenzenP13}. 
Naturally, approximate solutions are a desired relaxation, and were indeed addressed in several cornerstone studies~\cite{HolzerPRW14,PelegRT12,HolzerW12,LenzenP13,FrischknechtHW12}, bringing us even closer to a satisfactory understanding of the time complexity of computing the diameter in the \cgst{} model. Here we answer several central questions that remained elusive.

\paragraph{Sparse Graphs.} The graphs constructed by Frischknecht et al.~\cite{FrischknechtHW12} have $\Theta(n^2)$ edges and constant diameter, and require any distributed algorithm for computing their diameter to spend $\widetilde{\Omega}(n)$ rounds.
Almost all large networks of practical interest are very sparse~\cite{SNAP14}, e.g., the Internet in 2012 had roughly $4$ billion nodes and $128$ billion edges~\cite{MeuselVLB15}.
The only known lower bound for computing the diameter of a sparse network is obtained by a simple modification to the construction of~\cite{FrischknechtHW12} which yields a much weaker bound of $\widetilde{\Omega}(\sqrt{n})$.
Our first result is to rule out the possibility that the $\widetilde{\Omega}(n)$ bound can be beaten significantly in sparse networks.

\begin{theorem}
	\label{thm: sparseDiam}
	\ThmSpa
\end{theorem}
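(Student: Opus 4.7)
The goal is a reduction from set-disjointness ($\disj$) on $K=\Theta(n)$ bits, whose communication complexity is $\Omega(n)$, to exact diameter computation on a graph with $\Theta(n\log n)$ edges and a vertex partition $V=V_A\cup V_B$ with only $O(\log n)$ cut edges. Once this reduction is in place, the standard Alice--Bob simulation argument applies: any $r$-round \cgst{} algorithm is simulated by exchanging $O(r\log^2 n)$ bits (one $O(\log n)$-bit message per cut edge per round), and since $\disj$ requires $\Omega(n)$ bits one obtains $r=\Omega(n/\log^2 n)$.

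The construction is centered on the bit-gadget, a logarithmic-size interface between the two sides. On Alice's side place index vertices $a_1,\dots,a_n$ and bit-vertices $f_k^c$ for $k\in\{1,\dots,\lceil\log n\rceil\}$ and $c\in\{0,1\}$; mirror this on Bob's side with $b_j$ and $g_k^c$. Include an edge $\{a_i,f_k^c\}$ iff the $k$-th bit of $i$ equals $c$, and an edge $\{b_j,g_k^c\}$ iff the $k$-th bit of $j$ equals $1-c$. The only bit-gadget cut edges are the twin matches $\{f_k^c,g_k^c\}$, contributing $O(\log n)$ to the cut. For $i\neq j$ there exists a bit $k$ on which $i,j$ differ, giving a length-$3$ path from $a_i$ to $b_j$ through $f_k^{\mathrm{bit}_k(i)}$ and $g_k^{\mathrm{bit}_k(i)}$; for $i=j$ no length-$3$ path exists through the bit-gadget.

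To encode the $\disj$ instance, add two auxiliary ``middle'' pairs $(m_A,m_B)$ and $(m_A',m_B')$, connected across the cut by the two edges $\{m_A,m_B\}$ and $\{m_A',m_B'\}$. Include $\{a_i,m_A\}$ (Alice's) iff $i\notin S_A$, and $\{b_i,m_B'\}$ (Bob's) iff $i\notin S_B$; always include the complementary edges $\{a_i,m_A'\}$ and $\{b_i,m_B\}$. Then a length-$3$ path from $a_i$ to $b_i$ through one of the middle pairs exists iff $i\notin S_A$ or $i\notin S_B$, equivalently iff $i\notin S_A\cap S_B$. Combined with the $i\neq j$ case handled by the bit-gadget, one gets $\mathrm{diam}(G)\leq 3$ iff $S_A\cap S_B=\emptyset$, so any exact diameter algorithm distinguishes the two \disj{} cases. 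The edge count is $O(n\log n)$, dominated by the $\log n$ bit-gadget incidences at each of the $2n$ index vertices, and the cut has size $O(\log n)$; plugging into the simulation yields the claimed $\Omega(n/\log^2 n)$ lower bound.

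The main obstacle is to rule out unintended shortcuts that could make $\dist(a_i,b_i)\leq 3$ even when $i\in S_A\cap S_B$. Candidate detours include paths that zig-zag through other index vertices on the same side, or combine one middle-vertex hop with bit-gadget traversals. Ruling these out drives the specific choice of the complementary bit-convention on the two sides of the bit-gadget, and the placement of the always-present middle edges (possibly supplemented by a few auxiliary edges to keep internal distances at most $2$). Once this case analysis is complete, the remainder of the proof is the routine communication-complexity plug-in.
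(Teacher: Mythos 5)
Your plan is correct and follows essentially the same approach as the paper: a bit-gadget with a $\Theta(\log n)$ cut handling $a_i$--$b_j$ pairs with $i\neq j$, hub nodes whose input-dependent edges encode the set-disjointness instance, and a diameter threshold separating the two cases, plugged into the standard Alice--Bob simulation. The paper realizes the hubs as one pair $c_S,\bar c_S$ per side with a single cross-cut edge $(\bar c_A,\bar c_B)$ (threshold diameter $4$ vs.\ $5$), whereas you use two hub pairs $(m_A,m_B),(m'_A,m'_B)$ (threshold $3$ vs.\ $\geq 4$) --- a cosmetic variant --- and the distance case analysis you explicitly flag as remaining (ruling out unintended short paths and ensuring short intra-side distances, possibly by adding a few auxiliary edges) is precisely what the paper completes in its Claim~\ref{claim: gxy properties} and Lemma~\ref{lem: lemmExDiam}.
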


We remark that, as in~\cite{FrischknechtHW12}, our lower bound holds even for networks with constant diameter and even against randomized algorithms.
Due to simple transformations, e.g.,\ adding dummy nodes, our lower bound for computing the diameter also holds for the more strict definition of sparse graphs as having $O(n)$ edges, up to a loss of a log factor. 

\paragraph{Approximation Algorithms.}
An important question is whether one can bypass this near-linear barrier by settling for an approximation to the diameter.
An $\alpha$-approximation algorithm to the diameter returns a value $\hat{D}$ such that $D \leq \hat{D} \leq \alpha \cdot D$, where $D$ is the true diameter of the network.
From~\cite{FrischknechtHW12} we know that $\widetilde{\Omega}(\sqrt{n}+D)$ rounds are needed, even for computing a $(3/2-\varepsilon)$-approximation to the diameter, for any constant $\eps>0$,
while from~\cite{HolzerPRW14} we know that a $3/2$-approximation can be computed in $O(\sqrt{n\log{n}}+D)$ rounds.
This raises the question of whether there is a sharp threshold at a $3/2$-approximation factor, or whether a $(3/2-\epsilon)$-approximation can also be obtained in a sub-linear number of rounds.

Progress towards answering this question was made by Holzer and Wattenhofer~\cite{HolzerW12} who showed that any algorithm that needs to decide whether the diameter is $2$ or $3$ has to spend $\widetilde{\Omega}(n)$ rounds.
However, as the authors point out, their lower bound is not robust and does not rule out the possibility of a $(3/2-\varepsilon)$-approximation when the diameter is larger than $2$, or an algorithm that is allowed an additive $+1$ error in addition to a multiplicative $(3/2-\varepsilon)$ error. 

As mentioned earlier, perhaps the main difficulty in extending the lower bound constructions of Frischknecht et al.~\cite{FrischknechtHW12} and Holzer and Wattenhofer~\cite{HolzerW12} in order to resolve these gaps was that their original graphs are dense.
A natural way to go from a lower bound construction for exact algorithms to a lower bound for approximations is to subdivide each edge into a path;
however, in dense graphs this dramatically blows up the number of nodes, 
resulting in much weaker bounds.
The sparseness of our new construction allows us to tighten the bounds and negatively resolve the above question: we show a $\widetilde{\Omega}(n)$ lower bound for computing a ($3/2-\varepsilon$)-approximation to the diameter,
even if a constant additive approximation factor is also allowed.

\begin{theorem}
	\label{thm:DA}
	\ThmDA
\end{theorem}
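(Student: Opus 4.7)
The plan is to build on the sparse bit-gadget construction behind Theorem~\ref{thm: sparseDiam} and adapt it so that merely distinguishing between the two regimes of a $(3/2-\eps)$-approximation already reveals the answer to the underlying set-disjointness instance. The critical enabler is sparseness: the classical edge-subdivision trick that amplifies an additive diameter gap into a multiplicative one costs only a polylogarithmic blow-up in node count here, instead of the polynomial blow-up it would cost on the dense diameter construction of~\cite{FrischknechtHW12}.

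First, I would revisit the bit-gadget reduction from set-disjointness used in Theorem~\ref{thm: sparseDiam}, and tune its parameters so that on a DISJ instance $(x,y)$ of length $K = \widetilde{\Theta}(n)$ it produces a graph $G(x,y)$ with $\Theta(n\log n)$ edges, a bipartition into Alice's and Bob's vertex sets with cut size $|C| = O(\log n)$, and diameter exactly $2$ on disjoint instances and at least $3$ on intersecting ones. Verifying this base $2$-versus-$3$ gap is the most delicate step: I would enumerate all candidate paths between the designated ``far pair'' of vertices through the bit gadgets and across the cut, and argue that an intersection-witness bit is the unique shortcut that can bring their distance from $3$ down to $2$.

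Second, I would subdivide every edge into a path of length $L = \Theta(\log n)$, obtaining a new graph $G'(x,y)$ whose diameter lies in $\{2L,3L\}$. For any constant $\eps<1/2$, the two intervals $[2L,\,(3-2\eps)L]$ and $[3L,\,(9/2-3\eps)L]$ are disjoint once $L$ is large enough, so any $(3/2-\eps)$-approximation of the diameter of $G'(x,y)$ must pinpoint the case and hence decide DISJ$(x,y)$; the same separation also absorbs any constant additive slack. Internal nodes of each subdivided edge are assigned to Alice or Bob according to which side its endpoints lie on, with cut-crossing paths split evenly, so the new cut satisfies $|C'| = O(|C|) = O(\log n)$. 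The resulting graph has $n' = \Theta(n\log^2 n)$ nodes and $\Theta(n'\log n')$ edges, landing inside the sparse regime of the theorem statement.

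Finally, I would invoke the standard Alice--Bob simulation: a $T$-round \cgst{} algorithm that computes a $(3/2-\eps)$-approximation on $G'(x,y)$ yields a two-party protocol for DISJ on $K$ bits with communication $O(T \cdot |C'| \cdot \log n')$. Combining with the $\Omega(K)$ communication lower bound for set-disjointness and with $K=\widetilde{\Theta}(n)$ gives $T = \Omega\bigl(n/\log^3 n'\bigr)$, matching the statement after renaming $n \leftrightarrow n'$. The main obstacle I foresee is the first step: pinning the base diameters to exactly $2$ and $3$ demands a bit-gadget design that simultaneously packs $\widetilde{\Theta}(n)$ input bits into the $O(\log n)$-cut structure and leaves no accidental low-diameter shortcuts; once this geometric analysis is in place, the subdivision amplification and the Alice--Bob reduction are essentially template steps.
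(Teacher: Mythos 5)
Your plan hinges on a base construction with a $2$-versus-$3$ diameter gap and a logarithmic cut, and this is where the proposal breaks. In the bit-gadget, nodes $a^i$ and $b^j$ with $i\neq j$ are at distance \emph{exactly} $3$: the bit-nodes of $A$ and of $B$ are on opposite sides of the cut, $a^i$ and $b^j$ have no common neighbor, and there is no other vertex adjacent to both. So diameter $2$ is unreachable with this gadget. More generally, diameter $2$ forces every pair of vertices to share a common neighbor, and the only known constructions that achieve a DISJ-controlled $2$-versus-$3$ gap (Holzer--Wattenhofer) are dense, with $\Theta(n)$-size cuts. What you name as ``the most delicate step'' is in fact a wall, not a step.

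The base gap the paper actually achieves (Lemma~\ref{lem: lemmExDiam}) is $4$ versus $\geq 5$. A uniform subdivision of every edge by $L$, as you propose, only preserves this ratio: you would get $4L$ versus $5L$, which falls short of the $3/2$ required. The paper's amplification is more surgical. It subdivides \emph{only} the edges internal to $V_A$ and $V_B$, leaving cut edges and input-dependent edges at unit length; this exploits the fact that the disjoint-case shortest $a^i$--$b^i$ path $(a^i,\singlebar c_A,\singlebar c_B,c_B,b^i)$ uses two unsubdividable edges while the non-disjoint path uses only one (the cut edge), turning $4$ into $2q+2$ but $5$ into $4q+1$. It then attaches pendant $q$-paths $\singlebar a^i$ and $\singlebar b^i$ so that the diameter-achieving pair is forced to be these pendants; without them, subdivision-internal nodes already put the disjoint diameter near $4q$, washing out the gap. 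With the pendants the two cases are $4q+2$ versus $\geq 6q+1$, and a constant $q>\frac{1}{2\eps}-\frac{1}{2}$ separates them under a $(3/2-\eps)$-approximation. Your uniform-$L$ approach gives neither the selective stretching nor the pendant trick, so it cannot reach the $3/2$ threshold from any constant base gap $D$ versus $D+1$.
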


\paragraph{Radius.}
In many scenarios we want one special node to be able to efficiently send information to all other nodes. 
In this case, we would like this node to be the one that is closest to every other node, i.e., the \emph{center} of the graph.
The \emph{radius} of the graph is the largest distance from the center, and it captures the number of rounds needed for the center node to transfer a message to all another node in the network.
While radius and diameter are closely related, the previous lower bounds for diameter do not transfer to radius and it was conceivable that the radius of the graph could be computed much faster.
Obtaining a non-trivial lower bound for radius is stated as an open problem in~\cite{HolzerW12}. Another advantage of our technique is that it extends to computing the radius, for which we show that the same strong near-linear barriers above hold.
\begin{theorem}
	\label{ExactRad}
	\ThmRadius
\end{theorem}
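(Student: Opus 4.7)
The plan is to reduce from \setdis{} via the bit-gadget framework, modifying the sparse-diameter construction of Theorem~\ref{thm: sparseDiam} so that the \emph{radius} (rather than the diameter) encodes the intersection bit. Given a $K$-bit \setdis{} instance $(x,y)$ with $K=\Theta(n)$, I would build a graph $G$ on $\Theta(n)$ vertices with $\Theta(n\log n)$ edges, partitioned into halves $V_A$ and $V_B$ simulated by Alice and Bob, whose cut $C=E(V_A,V_B)$ has size $O(\log n)$. Each half carries $O(\log n)$ bit-vertices encoding indices from $[K]$, together with $\Theta(n)$ element-vertices; the element-vertex representing index $i$ is joined to exactly those bit-vertices dictated by the binary expansion of $i$, producing the $\Theta(n\log n)$ edges. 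The cut consists of the $O(\log n)$ pairs of bit-vertices matched across the two halves.

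Next, I would place a constant number of auxiliary ``candidate-center'' vertices together with the input-dependent edges so that $G$ satisfies the following dichotomy. If $x$ and $y$ are disjoint, a distinguished vertex $c$ has eccentricity at most some $\rho_0$, giving $\mathrm{rad}(G)\le\rho_0$. If instead there exists $i$ with $x_i=y_i=1$, then \emph{every} vertex of $G$ has some other vertex at distance at least $\rho_0+1$, so $\mathrm{rad}(G)>\rho_0$. Thus an algorithm that computes the radius reveals the \setdis{} answer. The ``yes'' direction follows from the bit-gadget geometry: any element-vertex lies within a constant number of hops from $c$ via a matching bit-vertex. The ``no'' direction will be the delicate part, and I expect to secure it by a symmetric pairing that assigns to each would-be center a mirrored vertex on the opposite half whose only short route through the cut passes through a bit-gadget subpath blocked by the intersection index $i$.

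The lower bound then follows by the standard Alice--Bob simulation. A \cgst{} algorithm computing the radius in $R$ rounds induces a two-player protocol exchanging at most $O(R\cdot|C|\log n)=O(R\log^2 n)$ bits across the cut. Combined with the $\Omega(K)=\Omega(n)$ randomized communication lower bound for \setdis{}~\cite{KushilevitzN:book96}, this yields $R=\Omega(n/\log^2 n)$, as claimed, and the bound holds against randomized algorithms by the standard randomized disjointness lower bound.

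The main obstacle is the ``no'' direction of the radius dichotomy. All previous bit-gadget constructions target diameter, where a single far-apart pair witnesses intersection; for radius one must rule out \emph{every} vertex as a possible center simultaneously, uniformly over both the many element-vertices in the two halves and the few but well-connected bit-vertices. Enforcing this while keeping $\Theta(n\log n)$ edges and an $O(\log n)$ cut will require a carefully symmetric placement of input-dependent edges (possibly augmented by a constant number of ``penalty'' vertices reachable from any purported center only through the specific bit-gadget path that an intersection index disables), so that a single $i$ with $x_i=y_i=1$ produces, for every candidate center, a witness it cannot reach within $\rho_0$ hops.
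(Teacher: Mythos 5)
Your reduction framework (simulation across a $\Theta(\log n)$ cut, $K=\Theta(n)$ disjointness, giving $\Omega(n/\log^2 n)$) is correct, but your planned dichotomy runs in the \emph{wrong direction}, and that is not a cosmetic issue — it forces you into exactly the obstacle you flag and cannot resolve.

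You propose: intersection $\Rightarrow$ radius large, disjointness $\Rightarrow$ radius small. But a single coordinate $i$ with $x_i=y_i=1$ only makes the one pair $(a^i,b^i)$ far apart; it does nothing to the other $k-1$ element-vertices $a^j$, $j\neq i$, which in the diameter-style construction remain well-connected to their mirrors and to the hubs. Any such $a^j$ would therefore still be a good center, and the radius would stay small — breaking your ``no'' direction. Your proposed fix (``a symmetric pairing'' with ``penalty vertices reachable only through a bit-gadget subpath blocked by $i$'') cannot work as stated, because the bit-gadget does not give a single index $i$ the power to block short routes for \emph{every} candidate center simultaneously; it only blocks the route between $a^i$ and $b^i$.

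The paper sidesteps this entirely by inverting both sides of the dichotomy: it places the input-dependent edge $(a^i,\bar c_A)$ when $x[i]=\mathbf{1}$ (not $0$), and similarly $(b^i,\bar c_B)$ when $y[i]=1$. Then intersection at $i$ means \emph{both} shortcut edges exist, so $a^i$ itself becomes a good center with $e(a^i)\le 3$ — a single existential witness, which is easy. Conversely, disjointness means that for \emph{every} $i$ at least one shortcut edge is missing, and together with two auxiliary gadgets — the edges $(f_A^h,t_A^h)$ (so that each $a^i$ is within distance $2$ of the entire $A$-side bit-gadget) and a pendant path $(w^0,w^1,w^2)$ hung off $A$ (so that no node outside $A$ can have eccentricity $\le 3$) — one gets $e(v)\ge 4$ uniformly for all $v$. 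The universal direction then falls out for free from the pointwise failure at each $i$, rather than needing a single $i$ to defeat all centers at once. You need this inversion (or an equivalent trick); as written, your construction plan does not close.
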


Our techniques can also be used for proving lower bounds for approximating the network's radius, computing its eccentricity, and for verifying that a given subgraph is a spanner, even on sparse networks with a constant degree.
The interested reader can find the details in~\cite{AbboudCHK16}.

\subsubsection{Near-Quadratic Lower Bounds}
High lower bounds for the \cgst{} model can be obtained rather artificially, 
by forcing large inputs and outputs that must be exchanged, e.g., by having large edge weights, or by requiring a node to output its $t$-neighborhood for some value of $t$. 
However, until this work no super-linear lower bound for a natural problem was known, let alone near-quadratic bound.
We remedy this state of affairs by showing quadratic and near-quadratic lower bound 
for several natural \emph{decision} problems on graphs,
where each input can be represented by $O(\log n)$ bits, and each output value consists of a single bit, or $O(\log n)$ bits.
Specifically, using the bit-gadget we obtain graph constructions with small cuts that lead to the following lower bounds.
\begin{theorem}
	\label{thm:VC}
	\ThmMVC
\end{theorem}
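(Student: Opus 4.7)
The plan is to reduce from the set-disjointness problem on $K=\Theta(n^2)$ bits per player, using a bit-gadget construction whose Alice--Bob cut has size $|C|=O(\log n)$. Since $\CC(\disj,K)=\Theta(K)$, the framework described in the introduction yields a lower bound of
\[
\Omega\!\left(\frac{\CC(\disj,K)}{|C|\log n}\right)=\Omega\!\left(\frac{n^2}{\log n\cdot \log n}\right)=\Omega\!\left(\frac{n^2}{\log^2 n}\right),
\]
which is the claimed bound. Since deciding whether a vertex cover of a given size exists is no harder than computing a minimum vertex cover, the two statements in the theorem will follow from the same construction.

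The construction I would use has $V_A$ on Alice's side and $V_B$ on Bob's side, each of size $\Theta(n)$. View the universe of the set-disjointness instance as $[n]\times[n]$, so Alice holds $S_A\subseteq[n]^2$ and Bob holds $S_B\subseteq[n]^2$. On each side I would place $n$ ``row'' nodes $r_1,\ldots,r_n$ and $n$ ``column'' nodes $c_1,\ldots,c_n$, plus $O(\log n)$ bit-gadget nodes $b_1,\bar b_1,\ldots,b_{\log n},\bar b_{\log n}$ whose adjacencies to the row/column nodes encode the binary representations of their indices; these adjacencies are input-independent. For each $(i,j)\in S_A$ I would add an input-dependent edge (or small fixed-size gadget) attached to $r_i$ and $c_j$ on Alice's side, and symmetrically for $S_B$ on Bob's side. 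The only edges across the Alice--Bob partition are between matching pairs of bit-gadget nodes, giving a cut of size $O(\log n)$.

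The heart of the argument is a gadget with a clean threshold behavior: there is a fixed value $T$, depending only on $n$, such that the whole graph has a vertex cover of size $T$ iff $S_A\cap S_B=\emptyset$, and requires size at least $T+1$ otherwise. Each claimed pair $(i,j)\in S_A$ will force one of two alternatives in any vertex cover: either pay a local cost to cover its edges, or ``propagate the obligation'' along a path that must traverse the bit-gadget, with the routing determined by the binary representation of $(i,j)$; Bob's side mirrors this. A pair claimed by both sides creates an odd-cycle/augmenting obstruction routed through the $O(\log n)$ cross-cut edges, which any vertex cover must pay an extra vertex to cover. Once this is established, Alice and Bob simulate a (possibly randomized) algorithm for minimum vertex cover, or for the decision variant at threshold $T$, communicating only $O(|C|\log n)=O(\log^2 n)$ bits per round, and output $S_A\cap S_B\stackrel{?}{=}\emptyset$ from the answer.

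The main obstacle is the gadget design itself: one must exhibit a concrete vertex cover of size exactly $T$ whenever $S_A\cap S_B=\emptyset$ (the ``completeness'' direction, giving an upper bound on the optimum), and simultaneously argue that any cover of size $T$ necessarily leaves uncovered some edge of the obstruction attached to a common element (the ``soundness'' direction). The sparsity of the cut is essential here, since a single common pair $(i^*,j^*)$ must, through only $O(\log n)$ mediating edges, change the global optimum by exactly $+1$; this is precisely what the bit-gadget makes possible, and what all prior dense constructions failed to achieve.
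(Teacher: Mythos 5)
Your high-level plan --- set-disjointness on $K=\Theta(n^2)$ bits, two copies of a bit-gadget, and a cut of $O(\log n)$ edges, feeding the framework via Theorem~\ref{thm: general lb framework} --- is indeed the route the paper takes, and your bookkeeping $\Omega(K/(|C|\log n))=\Omega(n^2/\log^2 n)$ is correct. But two essential pieces of the construction are missing or wrong. First, the paper puts \emph{cliques} on each of the four $k$-node sets $A_1,A_2,B_1,B_2$; the cliques are what force at least $k-1$ nodes per set into any cover, and the bit-gadget $4$-cycles force two nodes each, so the threshold $M=4(k-1)+4\log k$ is exactly ``one omission per clique, two per $4$-cycle.'' Your proposal has only row/column nodes and bit-gadget nodes, with no device to pin the optimum to a clean threshold, and ``propagate the obligation'' / ``odd-cycle/augmenting obstruction'' is not a substitute: there is no odd-cycle argument here, only a pigeonhole count through the cliques followed by an index-matching argument via $\bin(\cdot)$ and the $4$-cycles.

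Second, your edge convention and the direction of the iff are both backwards, in a way that changes the problem being reduced from. You add an input edge for each $(i,j)\in S_A$ (resp.\ $S_B$) and claim a size-$T$ cover exists iff $S_A\cap S_B=\emptyset$. With the clique/bit-gadget structure, a size-$T$ cover exists precisely when some $(i,j)$ has \emph{no} input edge on either side; under your convention that is $(i,j)\notin S_A\cup S_B$, i.e., $S_A\cup S_B\neq[n]^2$, which is \emph{not} disjointness (disjoint sets can cover the whole universe). The paper therefore flips the encoding, adding $(a_1^i,a_2^j)$ when $x[i,j]=0$ and $(b_1^i,b_2^j)$ when $y[i,j]=0$, so that both edges are absent at some $(i,j)$ if and only if $x[i,j]=y[i,j]=1$; Lemma~\ref{mainLemmaVC} then states that a cover of size $M$ exists iff $x,y$ are \emph{not} disjoint, which is the opposite direction of your threshold. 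The soundness step you would need to supply is: exactly one node omitted per clique forces all of $\bin(a_1^i)$ and $\bin(b_1^{i'})$ into the cover; if $i\neq i'$ some $4$-cycle has two adjacent uncovered bit-nodes, so $i=i'$ (and $j=j'$); hence the single candidate input edge on each side must be absent. That chain is both the missing content and the reason the sign convention matters.
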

This directly applies also to computing an exact maximum independent set, as the latter is the complement of an exact minimum vertex cover.
This lower bound is in stark contrast to the recent $O(\log\Delta/\log\log\Delta)$-round algorithm of~\cite{Bar-YehudaCS16} for obtaining a $(2+\epsilon)$-approximation to the minimum vertex cover.

An additional lower bound that we obtain using the bit-gadget is for coloring, as follows.
	\begin{theorem}
		\label{thm: 3-coloring lb}
		\ThmColoring
	\end{theorem}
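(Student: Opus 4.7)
The plan is to reduce from the two-party set-disjointness problem $\disj$ on inputs of length $K = \Theta(n^2)$, whose randomized communication complexity is $\Omega(K) = \Omega(n^2)$. To obtain the claimed $\Omega(n^2/\log^2 n)$ round lower bound, I need a family of graph constructions on $\Theta(n)$ vertices where (i) Alice simulates one side $V_A$ and Bob simulates the other side $V_B$, (ii) the cut $C = E(V_A, V_B)$ has size $O(\log n)$, and (iii) the input bits $x \in \{0,1\}^K$, $y \in \{0,1\}^K$ control edges inside $V_A$ and $V_B$ respectively so that the chromatic property we are deciding flips with $\disj(x,y)$.

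First I would index the $K$ input bits by pairs $(i,j) \in [n]\times[n]$ (up to constants). On each side I place $\Theta(n)$ \emph{object} vertices and $\Theta(\log n)$ \emph{bit} vertices; the bit-gadget connects object $u_i$ on Alice's side to the subset of bit-vertices encoding the binary representation of $i$, so that any consistent coloring of the bit-vertices effectively selects an address in $[n]\times[n]$, while costing only $O(\log n)$ cut edges. For every pair $(i,j)$, the bit $x_{i,j}$ is encoded inside $V_A$ as an edge (or non-edge) between the object vertex indexed by $(i,j)$ and a fixed coloring-restricting vertex on Alice's side, and analogously $y_{i,j}$ lives inside $V_B$.

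Second, and this is the main obstacle, I would tune the gadget so that in any valid $c$-coloring of the construction the bit-vertex colors on both sides simultaneously ``select'' a single common address $(i,j)$, and this selection is blocked if and only if $x_{i,j} = y_{i,j} = 1$. A natural route is to mirror the vertex-cover lower bound of Theorem~\ref{thm:VC}: the independent-set constraint used there is rephrased as a chromatic constraint by attaching a $(c-1)$-clique to each selector gadget, so that exactly one ``free'' color remains at each position and it must avoid the color forced by a jointly-marked input edge. With $c$ a small constant, the resulting graph is not $c$-colorable exactly when $\disj(x,y) = 0$; for the chromatic-number variant the same graph has $\chi = c$ in the disjoint case and $\chi = c+1$ otherwise, so any algorithm that properly colors a $\chi$-colorable network with $\chi$ colors allows Alice and Bob to read off $\disj(x,y)$ from the colors on their own side.

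Finally I would apply the standard Alice--Bob simulation: in $T$ rounds the two players exchange at most $T \cdot |C| \cdot O(\log n) = O(T \log^2 n)$ bits across the cut, and upon termination each player inspects the output restricted to her half to recover $\disj(x,y)$. Combining with $\CC(\disj, K) = \Omega(n^2)$ yields $T = \Omega(n^2/\log^2 n)$. The remaining steps are routine: padding with dummy vertices to reach exactly $n$ nodes, handling both the decision and the search variant uniformly, and invoking the randomized communication lower bound for disjointness so that the statement holds against randomized \cgst{} algorithms.
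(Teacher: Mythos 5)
Your high-level plan is the right one and matches the paper's strategy: reduce from $\disj_K$ with $K=\Theta(n^2)$, reuse the bit-gadget from the vertex-cover construction to encode indices across an $O(\log n)$-edge cut, and design the coloring constraints so that $\chi=3$ when the sets intersect and $\chi\geq 4$ when they are disjoint. But the central gadget engineering — which is the whole content of the theorem — is left as a placeholder in your write-up, and the one concrete suggestion you offer does not line up with what is actually needed.

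Specifically, there are three intertwined obstacles you do not address. First, the vertex-cover construction you propose to ``mirror'' has each of $A_1,A_2,B_1,B_2$ forming a $k$-clique; if you lift that construction directly, those cliques alone force $\chi\geq k$, and the whole reduction collapses. The paper \emph{removes} those clique edges and replaces their role with a new gadget: alternating paths $(\singlebar s^0_\ell,\doublebar s^0_\ell,\singlebar s^1_\ell,\doublebar s^1_\ell,\ldots)$ whose endpoints are tied to the palette nodes; an even-length-path parity argument (Claim~\ref{claim: 3col c0 exists}) is what forces at least one node of each $A_\ell,B_\ell$ to receive the ``selector'' color $c_0$. Your ``attach a $(c-1)$-clique to each selector gadget'' does not supply this: a local clique constrains the colors of its own members, but it does not force a $c_0$-colored node to \emph{exist} somewhere in each $A_\ell$. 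Second, you need to synchronize the color palette across the cut with $O(1)$ edges; the paper does this by adding two triangles $\{c^0_a,c^1_a,c^2_a\}$ and $\{c^0_b,c^1_b,c^2_b\}$ joined by the six edges $(c^i_a,c^j_b)_{i\neq j}$, which pins down the correspondence between Alice's and Bob's three colors. You gesture at this but never say how it is done within the cut budget. Third, and most importantly, the bit-gadget has to be made to interact with the coloring so that the $c_0$-colored nodes on the two sides carry the \emph{same} index: the paper's Claim~\ref{claim: 3col relating alices and bobs c0 nodes} shows that if $a_1^i$ is $c_0$ then all of $\bin(a_1^i)$ is forced to $c_2$, and the cross-edges of the bit-gadget then force $b_1^{i'}$ with $i'=i$. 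Without an argument of this form, an algorithm could ``select'' index $i$ on Alice's side and an unrelated $i'$ on Bob's side, and the reduction would not read off $\disj(x,y)$. Until you actually produce gadgets that enforce (a) at-least-one selection per set, (b) palette synchronization across $O(\log n)$ cut edges, and (c) index agreement between the two sides, the proof is not complete, and the $(c-1)$-clique idea as stated does not obviously yield any of the three.
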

We further show that certain approximations of $\chi$ are hard, though we believe that such a lower bound should hold for even looser approximations. All these lower bounds hold even for randomized algorithms which succeed with high probability.\footnote{An event occurs with high probability (w.h.p) if it occurs with probability $\frac{1}{n^c}$, for some constant $c>0$.}

We then show that not only NP-hard problems are near-quadratically hard in the \cgst{} model, by showing two simple problems that admit polynomial-time sequential algorithms, 
but require quadratic or near-quadratic time
in the \cgst{} model.
The \emph{weighted cycle detection} problem requires $\Omega(n^2/\log{n})$ rounds, even when using randomized algorithms.
The \emph{identical subgraph detection} problem requires $\Omega(n^2)$ rounds deterministically,
while we present a randomized algorithm for it that completes in only  $O(D)$ rounds,
providing the strongest possible separation between deterministic and randomized complexities for \emph{global problems} in the \cgst{} model.
A slight variant of this problem gives even stronger separation,
for general problems: 
we prove an $\Omega(n^2)$ rounds lower bound for it, and give a constant-time randomized algorithm.

\subsubsection{All Pairs Shortest Paths}
An intriguing question in the \cgst{} model is the complexity of computing exact weighted all-pairs-shortest-paths (APSP).
The complexity of unweighted APSP is known to be $\Theta(n/\log{n})$~\cite{FrischknechtHW12,HuaFQALSJ16}, 
both for deterministic and randomized algorithms.
Several recent works study the complexity of computing weighted APSP~\cite{Elkin17,HuangNS17,BernsteinN18,AgrawalRKP18,AgrawalR18},
and the most recent results are a randomized $\tilde O(n)$-round algorithm~\cite{BernsteinN18},
and a deterministic $\tilde O(n^{3/2})$-round algorithm~\cite{AgrawalRKP18}.

We provide an extremely simple linear lower bound of $\Omega(n)$ rounds for weighted APSP, extending a construction of Nanongkai~\cite{Nanongkai14}, which separates its complexity from that of the unweighted case. 
Moreover, 
we formally prove that the commonly used framework of reducing a two-party communication problem to a problem in the \cgst{} model cannot provide a super-linear lower bound for weighted APSP, regardless of the function and the graph construction used. 
We then extend this claim for $t$-party communication complexity with a shared blackboard.
For the randomized case, this is not surprising in light of the recent randomized $\tilde O(n)$ algorithm~\cite{BernsteinN18};
however, it shows that closing the gap for the deterministic case might require a new technique, unless the true complexity will turn out to be $O(n)$.

\subsubsection{Streaming Algorithms}\label{sec: intro-streaming}
The \emph{semi-streaming} model of computation~\cite{FeigenbaumKMSZ05} is an important model for processing massive graphs. 
Here, a single processing unit with a bounded amount of memory obtains information of the graph edges one-by-one and is required to process them and return an output based on the graph properties
Usually, the memory is assumed to be of $O(n\poly\log n)$ bits 
for an $n$-node graph,
and the number of allowed passes over the edges is one, constant, or logarithmic in $n$.

In the standard model, called the \emph{edge arrival} model, 
the order of the edges is adversarial. 
in the \emph{node arrival} model,
the adversary is restricted in that edges must arrive grouped by nodes---all the edges connecting a node to the previous nodes arrive together;
in the \emph{adjacency streaming model}, all the edges adjacent a node arrive together, regardless of the previous nodes.
For simplicity, we prove our bounds for the edge arrival model, 
but it is immediate to check that they also apply to the edge arrival and the adjacency streaming models.

We prove that constructions for lower bounds for the \cgst{} model translate directly to give lower bounds for the semi-streaming model,
and, with the standard parameters, impossibility results.
Specifically, for the problems for which we obtain near-quadratic lower bounds in the \cgst{} model, 
we establish that the product of the memory size and the number of passes
in the streaming model must be quadratic in $n$.

The mentioned lower bound applies to vertex cover, maximum independent set, coloring and other problem in the semi-streaming model.
Our construction for maximum independent set also easily translates to give the same lower bound for the maximum clique problem.
Some bounds close to ours are known in the literature,
as discussed next, but our work has several advantages:
we give a unified framework, 
using simple lower bound graphs and simple communication complexity problems,
and these bounds are robust to multiple-pass algorithms and 
to variants of the model.

For vertex cover, there is a known lower bound of $\Omega(k^2)$ for deciding the existence of a cover of size $k$ in one pass, and an algorithm for the problem using $\tilde{O}(k^2)$ memory~\cite{ChitnisCHM15}.
Our work matches the lower bound for one-pass,
and extends it to multiple-pass algorithms.
The maximum independent set and maximum clique problems were also previously studied~\cite{HalldorssonHLS16,CormodeDK18,BravermanLSVY18,HalldorssonHLS16};
the known lower bounds also apply to these problems with a gap promise,
so our construction is weaker in that sense. 
On the other hand, we improve upon the best results for maximum independent set~\cite{HalldorssonHLS16} in a poly-logarithmic factor and in the simplicity of our construction,
and on the results for maximum clique~\cite{BravermanLSVY18}
in that we handle multiple-pass algorithms.

Note that the bounds we present are not only for problems with a linear output size, such as finding a maximum independent set or a coloring, 
but even for decision and computation problems, e.g., computing the size of a maximum independent set or approximating the chromatic number. 
For these problems, it is not even trivial that a linear memory is necessary.

\paragraph{Roadmap}
In the following section we describe additional related work about the problems discussed in this paper. 
Section~\ref{sec:preliminaries} contains our preliminaries.
In Section~\ref{sec: gadget} we define the bit gadget and discuss some of its properties.
Sections~\ref{sec: Diam},~\ref{sec:nphard}, and~\ref{sec:P} contains our near-linear lower bounds, near-quadratic lower bounds for NP hard problems, and near-quadratic lower bounds for problems in P, respectively. Section~\ref{sec:APSP} contains our results for computing APSP. 
Finally, in Section~\ref{sec:semi-stream}, we show how our results imply new lower bounds for the streaming model.

\subsection{Additional Related Work}
\noindent\textbf{Vertex Coloring, Minimum Vertex Cover, and Maximum Independent Set:} One of the most central problems in graph theory is vertex coloring, which has been extensively studied in the context of distributed computing (see, e.g.,~\cite{BarenboimEPS16,Barenboim16,BarenboimE11,BarenboimE14,BarenboimEK14,Linial92,EmekPSW14,FraigniaudGIP09,FraigniaudHK16,HarrisSS16,MoscibrodaW08,SchneiderW11,PettieS15,ChungPS14,ChangKP16,ColeV86,Barenboim12} and references therein). The special case of finding a $(\Delta +1)$-coloring, where $\Delta$ is the maximum degree of a node in the network, has been the focus of many of these studies, but is a \emph{local} problem, and can be solved in much less than linear number of rounds.
Much less attention was given to the problem of distributively coloring a graph with the minimal number of colors possible: Linial~\cite{Linial92} discusses this problem
for \emph{rings}, and coloring \emph{planar} graphs in less than $\delta+1$ rounds 
also received increased attention lately~\cite{AboulkerBBE18,ChechikM18}. 
However, all these works focus on the \local{} model and on specific graph topologies, 
and we are unaware of any study of $\chi$-coloring on the \cgst{} model, 
or of general graph topologies.

Our paper suggests a reason for this state of affairs: coloring with a minimal number of colors requires studying almost all the graph edges, so it is very likely that no nontrivial algorithms for this problem exist.

Another classical problem in graph theory is finding a minimum vertex cover (MVC). In distributed computing, the time complexity of approximating MVC has been addressed in several cornerstone studies~\cite{AstrandFPRSU09,Bar-YehudaCS16,AstrandS10,GrandoniKP08,GrandoniKPS08,KhullerVY94,KoufogiannakisY09,KuhnMW16,PolishchukS09,BarenboimEPS16,HanckowiakKP01,PanconesiR01,KuhnMW06}.

Finding a minimum size vertex cover is equivalent to finding a maximum size independent set, as mentioned earlier,
but note that this equivalence is not approximation preserving.
Distributed approximation algorithms for maximum independent set were studied in~\cite{LenzenW08,CzygrinowHW08,BodlaenderHKK16,BYCHGS17}.
Finally, finding a maximum independent set and finding maximum clique are equivalent in sequential execution, but not in distributed or streaming settings. 
Nevertheless, our lower bounds for maximum independent set in the streaming model
do translate to lower bounds for maximum clique.

~\\
\noindent\textbf{Distance Computation:} 
It is known that a $3/2$-approximation for the diameter can be computed in a sublinear number of rounds: both $O(n^{3/4}+D)$-round algorithm~\cite{HolzerW12} and
an (incomparable) $O(D\sqrt{n} \log{n})$ bound algorithm~\cite{PelegRT12} are known.
These bounds were later improved~\cite{LenzenP13}
to $O(\sqrt{n}\log{n}+D)$, and finally~\cite{HolzerPRW14} reduce to $O(\sqrt{n\log{n}}+D)$. 

Additional distance computation problems have been widely studied in the \cgst{} model for both weighted and unweighted networks~\cite{AbboudCHK16,FrischknechtHW12,HolzerW12,HolzerPRW14,PelegRT12,HolzerP14,LenzenP15,LenzenP13,Nanongkai14,HuaFQALSJ16,HenzingerKN16}. 
One of the most fundamental problems of distance computation is computing all pairs shortest paths. For unweighted networks, an upper bound of $O(n/\log n)$ was recently shown~\cite{HuaFQALSJ16},
matching an earlier lower bound~\cite{FrischknechtHW12}.
Moreover, the possibility of bypassing this near-linear barrier for any constant approximation factor was ruled out~\cite{Nanongkai14}. 
For weighted randomized APSP, an $\tilde{O}(n^{5/3})$-round algorithm
was shown~\cite{Elkin17}, improved by an $\tilde{O}(n^{5/4})$-round algorithm~\cite{HuangNS17}, and finally by an $\tilde{O}(n)$-round algorithm~\cite{BernsteinN18}.
For the deterministic case, an $\widetilde{O}(n^{3/2})$-round algorithm was recently presented~\cite{AgrawalRKP18,AgrawalR18}.

~\\
\noindent\textbf{Streaming Algorithms:} 
Streaming algorithms~\cite{HenzingerRR98,Muthukrishnan05} are a way process massive information streams 
that cannot fit into the memory of a single machine.
In this paper we focus on streaming algorithms for graph problems,
and mainly on the semi-streaming algorithms~\cite{FeigenbaumKMSZ05},
where the memory is assume to be in $\Theta(n\poly\log n)$.

Some problems solvable in the semi-streaming model include deciding connectivity and bipartiteness, building a minimum spanning tree, 
finding a 2-approximate maximum cardinality matching (all discussed  in~\cite{FeigenbaumKMSZ05}),
$(1+\epsilon)(\Delta+1)$-coloring~\cite{BeraG18},
finding a $(2+\epsilon)$-approximate maximum weight matching~\cite{PazS17},
building cut sparsifiers~\cite{AhnG09}, 
spectral sparsifiers~\cite{KelnerL13}, spanners~\cite{Baswana08,Elkin11},
and counting subgraphs such as triangles~\cite{Bar-YossefKS02},
unweighted cycles~\cite{ManjunathMPS11},
full bipartite graphs\cite{BuriolFLS07},
and small graph minors~\cite{BordinoDGL08}. 

On the lower bounds side, maximum matching cannot be approximated
better than $e/(e-1)\approx 1.58$ factor~\cite{Kapralov13} in the semi-streaming model.
Deciding $(s,t)$-connectivity requires $\Omega(n)$ bits of memory,
and $\Omega(n/R)$ bits if $R$ passes on the input are allowed,
and so does computing the connected components,
testing planarity and more (see~\cite{HenzingerRR98}).
Maximum cut approximation was studied both for upper and lower bounds~\cite{KapralovKS15}.
The mentioned lower bounds are achieved using reductions to 
communication complexity problems,
and we essentially follow their footsteps in term of techniques,
while achieving new lower bounds for different problems.

The Caro-Wei bound is a degree-sequence based lower bound on the size of a maximum independent set in a graph. 
Finding an independent set matching this bound was studied in~\cite{HalldorssonHLS16},
and evaluating the value of the bound was recently studied in~\cite{CormodeDK18};
note that such a set might not be a maximum independent set.
There is a variety of upper and lower bounds 
for the maximum independent set and maximum clique problems,
under a gap assumption: 
either the graph contains a large independent set (clique),
or only a very small one~\cite{BravermanLSVY18,HalldorssonSSW12}.

\section{Preliminaries}
\label{sec:preliminaries}

\subsection{Computational Models}
\paragraph{The \cgst{} model:} In the \cgst{} model~\cite{Peleg:book00},
the nodes of an undirected connected graph $G=(V,E)$ of size $\size{V}=n$
communicate over the graph edges in synchronous rounds.
In each round, each node can send messages of $O(\log n)$ bits to each of its neighbors.
The complexity measure of a distributed algorithm in this model is the number of rounds the algorithm needs in order to complete.
A weighted graph $G=(V,E,w)$ is a graph augmented with an edge weight function $w:E\to\set{1,\ldots,W}$. 
We assume that the maximum edge weight $W$ is polynomial in $n$, and thus an edge weight, or a sum of $O(n)$ edge weights, can be sent in a single message.

Each node is assumed to have a unique id in $\set{1,\ldots,n}$.
At the beginning of an execution of an algorithm, each node knows its own id and, if the graph is weighted,
also the weights of the edges adjacent to it.
If the algorithm computes a graph parameter,
it terminates when all nodes know the value of this parameter.
If it outputs a labeling (e.g., a coloring or an indication of membership in a set) then
each node should know its label.

\paragraph{The \emph{semi-streaming} model:}
In the semi-streaming model~\cite{FeigenbaumKMSZ05},
a single computational unit executes a centralized algorithm
in order to process a large graph.
The graph nodes are given in advance to the algorithm,
and the edges are read one-by-one (with their weights),
in an adversarial order.
The algorithm is allowed to keep only $M$ bits of memory,
where usually $M=O(n\poly\log n)$, and to make only $R$ passes over the input edges,
where usually $R=O(1)$ or $R=O(\log n)$.

\subsection{Graph Parameters}
We are interested in several classical graph problems.
The distance between two nodes $u,v$ in the graph, denoted $\dist(u,v)$, is the minimum number of hops in a path between them in an unweighted graph,
or the minimum weight of a path between them in a weighted graph.
The \emph{diameter} $D$ of the graph is the maximum distance between two nodes in it.
The \emph{eccentricity} of a node $u$ is $e(u)=\max_v\set{\dist(u,v)}$,
and the \emph{radius} of the graph is $\min_u\set{e(u)}$.
For a given integer $i$, an \emph{$i$-path} in $G$ is a simple path of $i$ hops.

A vertex cover of a graph is a set $U\subseteq V$ such that for each edge $e\in E$ we have $e\cap U \neq\emptyset$. 
A \emph{minimum vertex cover} is a vertex cover of minimum cardinality.
An independent set is a set $U\subseteq V$ for which $u,v\in U\implies (u,v)\notin E$, and a clique is a set $U\subseteq V$ for which $u,v\in U\implies (u,v)\in E$. 
A \emph{maximum independent set} is an independent set of maximum cardinality, and a \emph{maximum clique} is a clique of maximum cardinality.
A (proper) $c$-coloring of a graph is a function $f:V\to\set{1,\ldots,c}$
such that $(u,v)\in E \implies f(u)\neq f(v)$. The \emph{chromatic number} $\chi$ of a graph is the minimum $c$ such that a $c$-coloring of the graph exists.

\subsection{Communication Complexity}
In the two-party communication setting~\cite{Yao79,KushilevitzN:book96}, two players, Alice and Bob, are given two input strings, $x,y\in\set{0,1}^K$, respectively, 
and need to jointly compute a function  $f:\set{0,1}^K\times\set{0,1}^K\to\set{\true,\false}$ on their inputs.
The \emph{communication complexity} of a protocol $\pi$ for computing $f$, denoted $\CC(\pi)$, is the maximal number of bits Alice and Bob exchange in $\pi$, taken over all values of the pair $(x,y)$. The \emph{deterministic communication complexity} of $f$, denoted $\CC(f)$, is the minimum over $\CC(\pi)$, taken over all deterministic protocols $\pi$ that compute $f$.

In a \emph{randomized protocol} $\pi$, Alice and Bob may each use a random bit string. A randomized protocol $\pi$ computes $f$ if the probability, over all possible bit strings, that $\pi$ outputs $f(x,y)$ is at least $2/3$. The \emph{randomized communication complexity} of $f$, $\CC^R(f)$, is the minimum over $\CC(\pi)$, taken over all randomized protocols $\pi$ that compute $f$.

For a vector $x$, let $x[i]$ be the $i$-th bit in the string $x$.
In the \emph{\setdisemph{}} problem ($\disj_K$), the function $f$ is $\disj_K(x,y)$, whose value is $\false$ if there is an index $i\in\set{0,\ldots,K-1}$ such that $x[i]=y[i]=1$, and $\true$ otherwise.
We say that $x$ and $y$ are disjoint if $\disj_K(x,y)=\true$, and not disjoint otherwise.
In the \emph{Equality} problem ($\eq_K$), the function $f$ is  $\eq_K(x,y)$, whose output is $\true$ if $x=y$, and $\false$ otherwise.
When $K$ is clear from the context, or is determined in a later stage,
we omit it from the notation.

Both the deterministic and randomized communication complexities of the $\disj_K$ problem are known to be $\Omega(K)$~\cite[Example 3.22]{KushilevitzN:book96}. The deterministic communication complexity of $\eq_K$ is in $\Omega(K)$~\cite[Example 1.21]{KushilevitzN:book96}, while its randomized communication complexity is in $\Theta(\log K)$~\cite{Razborov92} (see also \cite[Example 3.9]{KushilevitzN:book96}).

~\\
\textbf{Remark:} For some of our constructions which use the $\disj$ function, we need to exclude the all-$0$ or all-$1$ input vectors, in order to guarantee that the graphs are connected, as otherwise proving impossibility is trivial.
However, this restriction does not change the asymptotic bounds for $\disj$, since computing this function while excluding, e.g.\ the all-$1$ input vector, can be reduced to computing this function for inputs that are shorter by one bit (by having the last bit fixed to $0$ or to $1$).

\subsection{Lower Bound Graphs}
To prove lower bounds on the number of rounds necessary in order to solve a distributed problem in the \cgst{} model, we use reductions from two-party communication complexity problems. The reductions are defined as follows.

\begin{definition}(Family of Lower Bound Graphs)\newline
\label{def:family}
Fix an integer $K$, a function $f:\set{0,1}^K\times\set{0,1}^K\to\set{\true,\false}$ and a graph predicate $P$. A family of graphs $\set{G_{x,y}=(V,E_{x,y})\mid x,y\in\set{0,1}^K}$ with a partition $V=V_A\dot\cup V_B$ is said to be a family of \emph{lower bound graphs for the \cgst{} model w.r.t. $f$ and $P$} if the following properties hold:
\begin{enumerate}
  \item \label{ItemInLBGraphs: va}
	  Only the existence or the weight of edges in $V_A\times V_A$ may depend on $x$;
  \item \label{ItemInLBGraphs: vb}
	  Only the existence or the weight of edges in $V_B\times V_B$ may depend on $y$;
  \item \label{ItemInLBGraphs: pandf}$G_{x,y}$ satisfies the predicate $P$ iff $f(x,y)=\true$.
\end{enumerate}
\end{definition}
We use the following theorem, which is standard in the context of communication complexity-based lower bounds for the \cgst{} model (see, e.g.~\cite{AbboudCHK16,FrischknechtHW12,DruckerKO13,HolzerP14}).
Its proof is by a standard simulation argument.

\begin{theorem}
\label{thm: general lb framework}
Fix a function $f:\set{0,1}^K\times\set{0,1}^K\to\set{\true,\false}$ and a predicate $P$. If there is a family $\{G_{x,y}\}$ of lower bound graphs for the \cgst{} model w.r.t.~$f$ and $P$ with $C = E(V_A, V_B)$ then any deterministic algorithm for deciding $P$ in the \cgst{} model requires $\Omega (\CC(f)/\size{C}\log n)$ rounds, and any randomized algorithm for deciding $P$ in the \cgst{} model requires $\Omega (\CC^R(f)/\size{C}\log n)$ rounds.
\end{theorem}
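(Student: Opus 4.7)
The plan is the standard communication-complexity-to-\cgst{} simulation. Given any algorithm $ALG$ that decides $P$ in $T$ rounds, I will build a two-party protocol $\pi$ for $f$ whose total communication is $O(T\cdot |C|\log n)$; combined with $\CC(\pi)\ge\CC(f)$ this yields $T=\Omega(\CC(f)/(|C|\log n))$, and the randomized case follows in the same way.

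On inputs $x,y$, both players implicitly consider $G_{x,y}$. By Items~\ref{ItemInLBGraphs: va}--\ref{ItemInLBGraphs: vb} of Definition~\ref{def:family}, every edge inside $V_A\times V_A$ is determined by $x$ alone so Alice knows it, every edge inside $V_B\times V_B$ is determined by $y$ alone so Bob knows it, and every edge of the cut $C=E(V_A,V_B)$ is independent of both inputs, so both parties know it together with its endpoints' IDs and any fixed edge weights. Alice then maintains, round by round, the local state of every node in $V_A$ as prescribed by $ALG$, and Bob maintains the local state of every node in $V_B$.

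In any simulated round, any message travelling along an edge internal to $V_A$ (or $V_B$) is computed and delivered by a single player and so incurs no inter-player communication. For each of the $|C|$ cut edges, Alice sends to Bob the $O(\log n)$-bit message her endpoint would have transmitted, and Bob does the symmetric thing, for a total of $O(|C|\log n)$ bits per simulated round and $O(T\cdot|C|\log n)$ bits over the entire execution. At termination, by Item~\ref{ItemInLBGraphs: pandf}, some simulated node on each side has learned whether $P$ holds in $G_{x,y}$, which equals $f(x,y)$; one extra exchanged bit lets both players agree on the answer. The randomized version is handled by letting Alice draw the private random bits for her nodes and Bob for his, so $\pi$ inherits the success probability of $ALG$, giving $T=\Omega(\CC^R(f)/(|C|\log n))$.

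The only subtle point I anticipate is that this ``one round per round'' simulation requires the partition $V=V_A\dot\cup V_B$, the node IDs, and the set of cut edges to be fixed in advance and known to both players without any communication — otherwise an extra term would be incurred per round. This is exactly what the fixed, input-independent partition in Definition~\ref{def:family} provides, and it is why Items~\ref{ItemInLBGraphs: va}--\ref{ItemInLBGraphs: vb} explicitly restrict the input-dependence of the construction to edges strictly inside $V_A\times V_A$ or $V_B\times V_B$.
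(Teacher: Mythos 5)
Your proposal is correct and follows essentially the same simulation argument as the paper: Alice and Bob locally simulate the nodes in $V_A$ and $V_B$ respectively, exchange only the $O(\log n)$-bit messages crossing the cut each round for a total of $O(T|C|\log n)$ bits, and read off $f(x,y)$ from the decision of $P$. The small details you add (cut edges being input-independent, one extra bit to agree on the output, Alice and Bob supplying their nodes' private randomness) are all consistent with the paper's argument and do not change the approach.
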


\begin{proof}
Let $ALG$ be a distributed algorithm in the \cgst{} model that decides $P$ in $T$ rounds. Given inputs $x,y \in \set{0,1}^K$ to Alice and Bob, respectively, Alice constructs the part of $G_{x,y}$ for the nodes in $V_A$ and Bob does so for the nodes in $V_B$. This can be done by items~\ref{ItemInLBGraphs: va} and~\ref{ItemInLBGraphs: vb} in Definition~\ref{def:family}, and since $V_A$ and $V_B$ are disjoint. Alice and Bob simulate $ALG$ by exchanging the messages that are sent during the algorithm between nodes of $V_A$ and nodes of $V_B$ in either direction, while the messages within each set of nodes are simulated locally by the corresponding player without any communication. Since item~\ref{ItemInLBGraphs: pandf} in Definition~\ref{def:family} also holds, we have that Alice and Bob correctly output $f(x,y)$ based on the output of $ALG$. For each edge in the cut, Alice and Bob exchange $O(\log{n})$ bits per round. Since there are $T$ rounds and $\size{C}$ edges in the cut, the number of bits exchanged in this protocol for computing $f$ is $O(T\size{C}\log{n})$. The lower bounds for $T$ now follows directly from the lower bounds for $\CC(f)$ and $\CC^R(f)$,
where in the randomized case we note that an algorithm that succeeds w.h.p. definitely succeeds with probability at least $2/3$.
\end{proof}

In what follows, for each decision problem addressed, we describe a fixed graph construction $G=(V,E)$ with a partition $V=V_A\dot\cup V_B$, which we then generalize to a family of graphs $\set{G_{x,y}=(V,E_{x,y})\mid x,y\in\set{0,1}^K}$.
We then show that $\set{G_{x,y}}$ is a family lower bound graphs w.r.t.\ to some communication complexity problem $f$ and the required predicate $P$.
By Theorem~\ref{thm: general lb framework} and the known lower bounds for the two-party communication problem $f$, we deduce a lower bound for any algorithm for deciding $P$ in the \cgst{} model.

We use $n$ for the number of nodes, $K$ for the size of the input strings, and a third parameter $k$ as an auxiliary parameter, usually for the size of the node-set that touches the edges that depend on the input.
The number of nodes $n$ determines $k$ and $K$, and we usually only show the asymptotic relation between the three parameters and leave the exact values implicit.

\section{The Bit-Gadget Construction}
\label{sec: gadget}
The main technical novelty in our lower bounds
comes from the ability to encode large communication complexity problems
in graphs with small cuts.
To this end, we use the following construction (see Figure~\ref{fig: gadget}).
\begin{figure}
	\begin{center}
		\includegraphics[
		trim=3cm 6.5cm 3cm 2cm,clip]{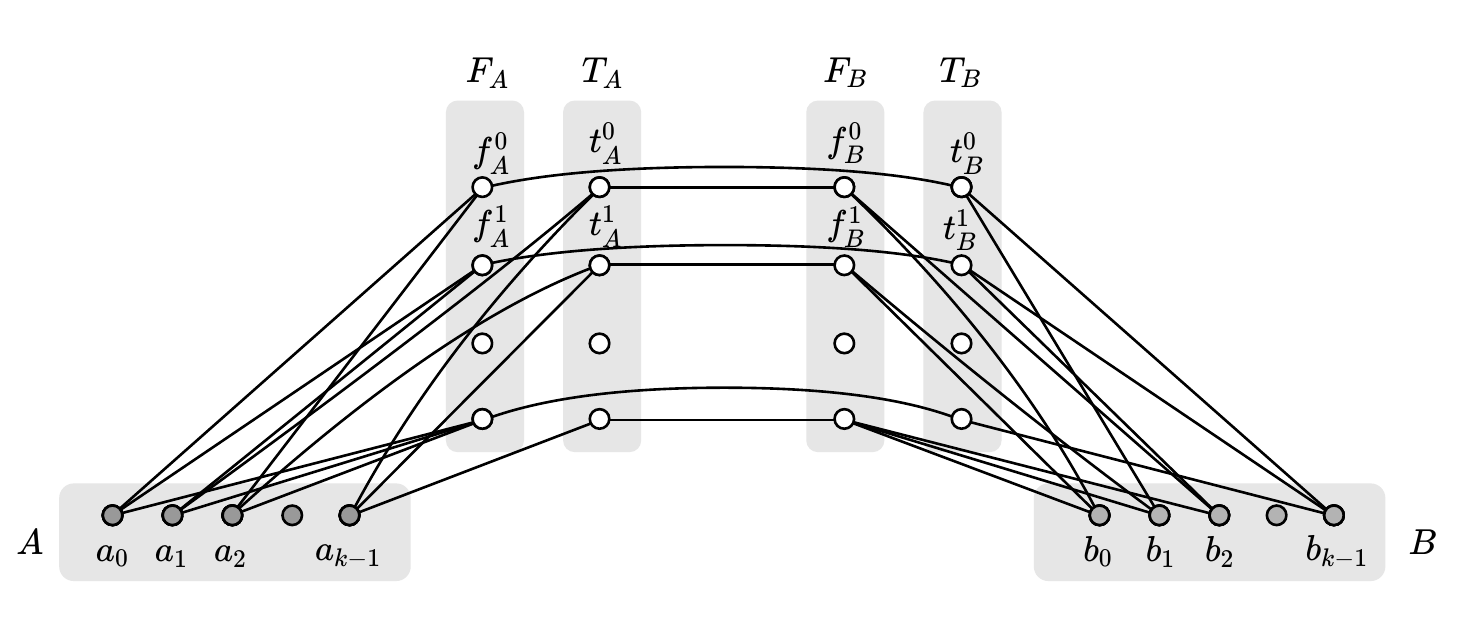}
	\end{center}
	\caption{The bit-gadget construction}\label{fig: gadget}
\end{figure}

Fix an integer $k$ which is a power of $2$,
and start with two sets of nodes $k$ nodes each,
$A=\{a^i\mid i\in\set{0,\ldots,k-1}\}$ and $B=\{b^i\mid i\in\set{0,\ldots,k-1}\}$.
For each set $S\in \set{A,B}$,
add two corresponding sets of $\log k$ nodes each,
denoted $F_{S}=\{f^h_{S}\mid h\in\set{0\ldots,\log k -1}\}$ and $T_{S}=\{t^h_{S}\mid h\in\set{0,\ldots,\log k -1}\}$.
The latter are called the \emph{bit-nodes}
and they constitute the \emph{bit-gadget}.
Connect the nodes of each set $S\in \set{A,B}$ to their corresponding bit-nodes according to their indices, as follows.
Let $s^i$ be a node in a set $S\in \{A,B\}$, i.e.,\ $s\in \set{a,b}$ and $i\in\set{0,\ldots,k-1}$, and let $i_h$ denote the $h$-th bit in the binary representation of $i$.
For such $s^i$,
define $\bin(s^i) = \set{f_S^h\mid i_h=0}\cup \set{t_S^h\mid i_h=1}$,
and connect $s^i$ by an edge to each of the nodes in $\bin(s^i)$.
Finally, connect the bit-nodes:
for each $h\in\set{0,\ldots,\log k -1}$ connect $f_A^h$ to $t_B^h$ and $t_A^h$ to $f_B^h$.
Set $V_A=A\cup F_A\cup T_A$ and $V_B=V\setminus V_A$.

In the next sections, we augment the above construction with fixed nodes and edges,
and then add some more edges according to the input strings,
in order to create a family of graphs with some desired properties.
The next claim exemplifies one of the basic properties of this construction.

\begin{claim}\label{claim: gadget distances}
	For every $i, j\in \set{0,\ldots,k-1}$, if $i\neq j$ then $\dist(a^i,b^j)\leq 3$.
\end{claim}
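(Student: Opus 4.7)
The plan is to exhibit an explicit path of length at most $3$ between $a^i$ and $b^j$ using a bit position where $i$ and $j$ differ. Since $i\neq j$, there is some index $h\in\{0,\ldots,\log k-1\}$ with $i_h\neq j_h$; I will fix such an $h$ and split into two symmetric cases based on the value of $i_h$.

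First, suppose $i_h=0$ and hence $j_h=1$. By definition of $\bin(a^i)$, the node $f_A^h$ lies in $\bin(a^i)$, so $a^i$ is adjacent to $f_A^h$. By the definition of the bit-gadget wiring between the two sides, $f_A^h$ is adjacent to $t_B^h$. Finally, since $j_h=1$, the node $t_B^h$ lies in $\bin(b^j)$, so $t_B^h$ is adjacent to $b^j$. This produces the $3$-hop path $a^i\to f_A^h \to t_B^h \to b^j$. The complementary case $i_h=1$, $j_h=0$ is handled identically by using the path $a^i \to t_A^h \to f_B^h \to b^j$, which exists by the symmetric choice of wiring.

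There is no real obstacle here: the bit-gadget is designed precisely so that each coordinate $h$ in which two indices disagree provides a length-$3$ ``shortcut'' through the corresponding pair of bit-nodes. The only thing to be careful about is to use the correct cross-edges between $F_A,T_A$ and $F_B,T_B$ (namely $f_A^h$--$t_B^h$ and $t_A^h$--$f_B^h$, which flip the bit), so that the membership in $\bin(a^i)$ on the $A$-side matches the membership in $\bin(b^j)$ on the $B$-side exactly when $i_h\neq j_h$.
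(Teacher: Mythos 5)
Your proof is correct and follows the same route as the paper's: pick a bit position $h$ where $i_h\neq j_h$ and walk through the corresponding pair of bit-nodes, with the two symmetric cases giving paths $(a^i,f_A^h,t_B^h,b^j)$ and $(a^i,t_A^h,f_B^h,b^j)$. The only difference is that you spell out why each edge exists, whereas the paper just names the paths.
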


\begin{proof}
	Since $i\neq j$, there is $h\in\set{0,\ldots,\log k -1}$ such that $i_h\neq j_h$.
	If $i_h=1$ and $j_h=0$,
	then the $3$-path $(a^i,t_A^h,f_B^h,b^j)$ connects the desired nodes;
	otherwise, $i_h=0$ and $j_h=1$,
	and the $3$-path $(a^i,f_A^h,t_B^h,b^j)$ connects the nodes.
\end{proof}

It is not hard to also show that $\dist(a^i,b^j)\geq3$
and that $\dist(a^i,b^i)=5$,
and we indeed prove similar claims in Section~\ref{sec: Diam}.
In Section~\ref{sec:mvc} we discuss the size and structure of a minimum vertex cover for this gadget.

\section{Near-Linear Lower Bounds for Sparse Graphs}\label{sec: Diam}

In this section we present our near-linear lower bounds for sparse networks.
Sections~\ref{sec: exDiam} and~\ref{sec: appDiam} contain our lower bounds for computing the exact or approximate diameter, and Section~\ref{sec: ExactRadius} contains our lower bound for computing the radius.

\subsection{Exact Diameter}\label{sec: exDiam}

\begin{figure}
	\begin{center}
		\includegraphics[
		trim=3cm 5.5cm 3cm 2cm,clip]{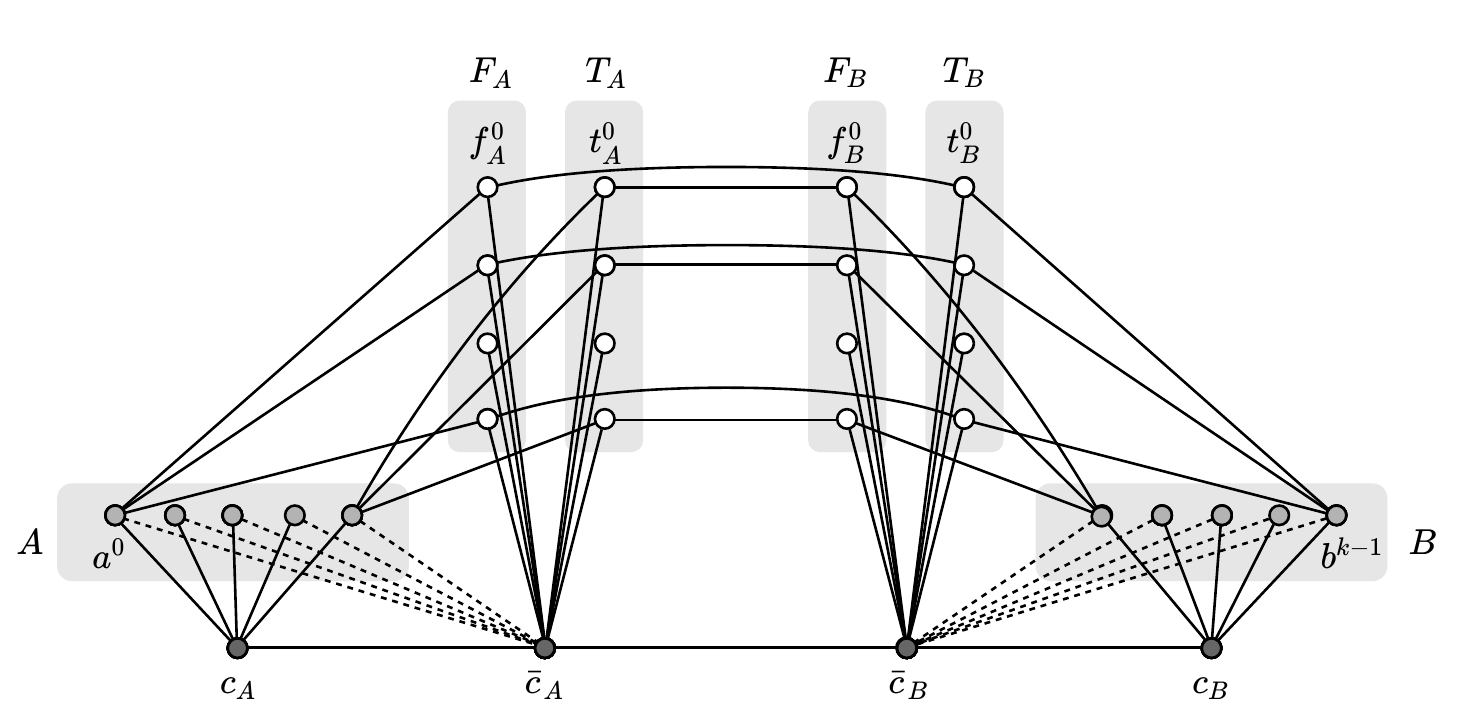}
	\end{center}
	\caption{Diameter lower bound construction. The dashed edges represent edges whose existence depends on $x$ or $y$.}
	\label{fig: diam}
\end{figure}

Our goal in this section is to prove the following theorem.
\begin{theorem-repeat}{thm: sparseDiam}
	{
		\ThmSpa
	}
\end{theorem-repeat}

In order to prove Theorem~\ref{thm: sparseDiam}, we describe a family of lower bound graphs with respect to the \setdis{} function and the predicate $P$ that says that the graph has diameter at least $5$.
We start by describing the fixed graph construction and then define the family of lower bound graphs and analyze its relevant properties.

~\\
\noindent\textbf{The fixed graph construction:}
Start with the graph $G$ and partition $(V_A,V_B)$ described in Section~\ref{sec: gadget},
and add two nodes $c_A,\singlebar c_A$ to $V_A$,
and another two nodes $c_B,\singlebar c_B$ to $V_B$ (see Figure~\ref{fig: diam}).
For each set $S\in \set{A,B}$,
connect all nodes in $S$ to the center $c_S$,
all the bit-nodes $F_S\cup T_S$ to $\singlebar c_S$,
and the two centers $c_S, \singlebar c_S$ to one other.
Finally, connect $\singlebar c_A$ to $\singlebar c_B$.

~\\
\noindent\textbf{Adding edges corresponding to the strings $x$ and $y$:}
Given two binary strings $x,y\in\set{0,1}^{k}$,
augment the graph defined above
with additional edges, which defines $G_{x,y}$.
For each $i\in\set{0,\ldots,k-1}$, if $x[i]=0$ then add an edge between the nodes $a^i$ and $\singlebar c_A$, and if $y[i]=0$ then add an edge between $b^i$ and $\singlebar c_B$.


\begin{claim}\label{claim: gxy properties}
	For every $u,v$ such that $(u,v) \notin (A\times B)\cup(B\times A)$,
	it holds that $\dist(u,v) \leq 4$.
\end{claim}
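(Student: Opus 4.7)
The plan is to exhibit, for every allowed pair $(u,v)$, an explicit path of length at most $4$ that routes through the ``center'' structure of the graph: the edges incident to $c_A,\bar{c}_A,c_B,\bar{c}_B$ and the cross-edge $(\bar{c}_A,\bar{c}_B)$. The argument naturally splits into three cases according to whether $u,v$ are both in $V_A$, both in $V_B$, or on different sides (but with the forbidden configurations of one in $A$ and the other in $B$ excluded by assumption).

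First I would establish two simple distance bounds that will be reused throughout. Let $S\in\{A,B\}$. Since $c_S$ is adjacent to every node of $S$ and to $\bar{c}_S$, and since $\bar{c}_S$ is adjacent to every bit-node of $F_S\cup T_S$, one has $\dist(u,c_S)\le 2$ for all $u\in V_S$, with equality only when $u\in F_S\cup T_S$. Symmetrically, $\dist(u,\bar{c}_S)\le 2$ for all $u\in V_S$, with equality only when $u\in S$; moreover, if $u\notin S$ (that is, $u\in\{c_S,\bar{c}_S\}\cup F_S\cup T_S$) then $\dist(u,\bar{c}_S)\le 1$. These bounds are independent of the input strings $x$ and $y$; the input-dependent edges only shorten distances.

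With these in hand, the cases are short. If $u,v\in V_A$, then $\dist(u,v)\le\dist(u,c_A)+\dist(c_A,v)\le 2+2=4$, and the symmetric argument via $c_B$ handles $u,v\in V_B$. Now suppose without loss of generality that $u\in V_A$ and $v\in V_B$. By hypothesis we do not have simultaneously $u\in A$ and $v\in B$. If $u\notin A$ then $\dist(u,\bar{c}_A)\le 1$ and $\dist(\bar{c}_B,v)\le 2$, so
\[
  \dist(u,v)\le \dist(u,\bar{c}_A)+1+\dist(\bar{c}_B,v)\le 1+1+2=4,
\]
using the edge $(\bar{c}_A,\bar{c}_B)$. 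Otherwise $u\in A$ forces $v\notin B$, and by symmetry $\dist(u,\bar{c}_A)\le 2$ while $\dist(\bar{c}_B,v)\le 1$, giving again a bound of $2+1+1=4$.

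There is no real obstacle here beyond bookkeeping: the main thing to be careful about is that the bound $\dist(u,\bar{c}_S)\le 2$ really does drop to $1$ whenever $u\notin S$, which is exactly the slack needed so that in the cross-side case the total stays at $4$ rather than slipping to $5$. The excluded pairs in $(A\times B)\cup(B\times A)$ are precisely those for which both endpoints are at the ``bad'' distance $2$ from their respective $\bar{c}_S$, which is why they cannot be included in the statement.
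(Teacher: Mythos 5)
Your proof is correct and takes essentially the same approach as the paper: route every allowed pair through the fixed center nodes $c_A,\bar{c}_A,\bar{c}_B,c_B$ and observe that the only pairs for which this routing costs $5$ are precisely the excluded ones in $(A\times B)\cup(B\times A)$. The paper phrases this slightly more compactly (all nodes of $V\setminus(A\cup B)$ are pairwise within distance $3$, and every node of $A\cup B$ is adjacent to one of them), but the underlying argument is identical.
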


\begin{proof}
Observe that every node in $V\setminus (A\cup B)$ is connected to
$\singlebar c_A$ or to $\singlebar c_B$, and these two nodes are neighbors.
Thus, the distance between every two nodes in $V\setminus (A\cup B)$
is at most $3$.
The claim follows from this,
and from the fact that any node in $A$ and any node in $B$ are connected to a node in $V\setminus (A\cup B)$.
\end{proof}

The following lemma is the main ingredient in proving that $\set{G_{x,y}}$ is a family of lower bound graphs.

\begin{lemma}\label{lem: lemmExDiam}	
	The diameter of $G_{x,y}$ is at least 5 if and only if $x$ and $y$ are not disjoint.
\end{lemma}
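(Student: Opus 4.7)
\begin{proofof}{sketch for Lemma~\ref{lem: lemmExDiam}}
The plan is to show that the only pairs of nodes whose distance can reach $5$ are the ``diagonal'' pairs $(a^i,b^i)$, and that such a pair attains distance $5$ precisely when $x[i]=y[i]=1$. The ``not disjoint $\Rightarrow$ diameter $\geq 5$'' direction then follows from the existence of some index witnessing non-disjointness, and the converse from the fact that all other pairs of nodes are at distance at most $4$.

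First I would dispose of the easy pairs. By Claim~\ref{claim: gxy properties}, any pair $(u,v)$ with $\{u,v\}\not\subseteq A\cup B$ or not of the cross form $A\times B$, $B\times A$ is at distance at most $4$. By Claim~\ref{claim: gadget distances}, for $i\neq j$ we have $\dist(a^i,b^j)\leq 3$ via a $3$-path through two matched bit-nodes. So the diameter is determined (up to the threshold $5$) by the ``diagonal'' distances $\dist(a^i,b^i)$, and it suffices to prove that $\dist(a^i,b^i)\geq 5$ iff $x[i]=y[i]=1$.

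For the upper bound side, I would exhibit short paths whenever $x[i]=0$ or $y[i]=0$. Note that the canonical $5$-path $a^i - c_A - \singlebar c_A - \singlebar c_B - c_B - b^i$ always exists, so $\dist(a^i,b^i)\leq 5$. If $x[i]=0$, the edge $(a^i,\singlebar c_A)$ gives a $4$-path $a^i - \singlebar c_A - \singlebar c_B - c_B - b^i$, and symmetrically for $y[i]=0$; if both bits are $0$, a $3$-path $a^i - \singlebar c_A - \singlebar c_B - b^i$ exists. Thus whenever $x[i]=0$ or $y[i]=0$, $\dist(a^i,b^i)\leq 4$.

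The main (and only slightly delicate) step is the lower bound: when $x[i]=y[i]=1$, $\dist(a^i,b^i)\geq 5$. Here I would do a short case analysis on the first edge used out of $a^i$, which must go to $c_A$ or to a bit-node in $\bin(a^i)$ (the edge to $\singlebar c_A$ is absent since $x[i]=1$); symmetrically, the last edge into $b^i$ must come from $c_B$ or $\bin(b^i)$. The key observations are that (i) the bit-gadget contains no length-$3$ path from $a^i$ to $b^i$, because such a path would force a single index $h$ with $i_h$ equal to both $0$ and $1$; (ii) $\singlebar c_A$ is not adjacent to any neighbor of $b^i$ other than (possibly) $\singlebar c_B$, and the edge $(\singlebar c_B,b^i)$ is absent since $y[i]=1$; and (iii) the bit-nodes in $\bin(a^i)\cup\bin(b^i)$ are not adjacent to $c_A$, $c_B$, nor across the $A$/$B$ partition to the matching bit-index on the other side (the cross edges flip $f\leftrightarrow t$). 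Enumerating the at most two choices of first edge and two choices of last edge shows that every $a^i$-to-$b^i$ path of length at most $4$ is ruled out, completing the lemma.
\end{proofof}
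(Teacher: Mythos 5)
Your proposal is correct and follows essentially the same route as the paper: you dispose of off-diagonal pairs with Claims~\ref{claim: gxy properties} and~\ref{claim: gadget distances}, exhibit the same explicit $4$-path through $\singlebar c_A,\singlebar c_B$ for the disjoint case, and for the lower bound exploit the same structural facts (no $3$-path through the bit-gadget between matching indices, and the absence of the input edges to $\singlebar c_A,\singlebar c_B$). The only cosmetic difference is in organizing the lower-bound case analysis: the paper cases on which cut edge the path crosses first, whereas you case on the first edge out of $a^i$ and the last into $b^i$; these are equivalent decompositions yielding the same conclusion.
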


\begin{proof}
Assume that the sets are disjoint,
i.e.,\ for every $i\in\set{0,\ldots,k-1}$ either $x[i]=0$ or $y[i]=0$.
We show that for every $u,v \in V$ it holds that $\dist(u,v)\leq 4$.
Consider the following cases:
\begin{enumerate}
\item $(u,v) \notin (A\times B)\cup(B\times A)$:
	By Claim~\ref{claim: gxy properties},
	$\dist(u,v)\leq 4$.	
\item $u=a^i\in A$ and $v=b^j\in B$ (or vice versa) for $i\neq j$:
	Claim~\ref{claim: gadget distances}	implies $\dist(u,v)\leq 3$.
\item $u=a^i,v=b^i$ (or vice versa) for some $i$:
	By the assumption, either $x[i]=0$ or $y[i]=0$,
	and assume the former without loss of generality,
	implying that $a^i$ is connected by an edge to $\singlebar c_A$.
	Thus, the path $(a^i,\singlebar c_A,\singlebar c_B,c_B,b^i)$ exists in the graph,
	and $\dist(a^i,b^i) \leq 4$.
\end{enumerate}	

For the other direction, assume that the two sets are not disjoint, i.e.,\ there is some $i\in\set{0,\ldots,k-1}$ for which $x[i]=y[i]=1$.
In this case, $a^i$ is not connected by an edge to $\singlebar c_A$ and $b^i$ is not connected by an edge to $\singlebar c_B$.
Note that $V_A,V_B$ are disjoint, $a^i$ belongs to $V_A$ and $b^i$ to $V_B$,
so any path between $a^i$ and $b^i$ must go through an edge
connecting a node from $V_A$ and a node from $V_B$.
Fix a shortest path from $a^i$ to $b^i$, and consider the following cases,
distinguished by the first edge in the path crossing from $V_A$ to $V_B$:
\begin{enumerate}
	\item The path uses the edge $(\singlebar c_A,\singlebar c_B)$:
		Since $a^i$ is not connected by an edge to $\singlebar c_A$, and $b^i$ is not connected to $\singlebar c_B$,
		we have $\dist(a^i,\singlebar c_A)\geq 2$ and $\dist(b^i,\singlebar c_B)\geq 2$,
		so the length of the path is at least 5.
	\item The path uses an edge $(f_A^h,t_B^h)$ with $f_A^h\notin \bin(a^i)$,
		or an edge $(t_A^h,f_B^h)$ with $t_A^h\notin \bin(a^i)$:
		For the first case,
		note that $a^i$ and $t_A^h$ are not connected by an edge,
		and they do not even have a common neighbor.
		Thus, $\dist(a^i,t_A^h)\geq 3$, and $\dist(a^i,b^i)\geq 5$.
		The case of $(t_A^h,f_B^h)$ with $t_A^h\notin \bin(a^i)$ is analogous.
	\item The path uses an edge $(f_A^h,t_B^h)$ with $f_A^h\in \bin(a^i)$,
		or an edge $(t_A^h,f_B^h)$ with $t_A^h\in \bin(a^i)$:
		The definitions of $\bin(a^i)$ and $\bin(b^i)$ immediately imply
		$t_B^h\notin \bin(b^i)$ for the first case,
		or $f_B^h\notin \bin(b^i)$ for the second.
		The rest of the argument is the same as the previous:
		$\dist(t_B^h,b^i)\geq 3$ or $\dist(f_B^h,b^i)\geq 3$,
		both implying $\dist(a^i,b^i)\geq 5$.	
\end{enumerate}	
Thus, any path between $a^i$ and $b^i$ must have length at least $5$.
\end{proof}

Having constructed a family of lower bound graphs, we are now ready to prove Theorem~\ref{thm: sparseDiam}.
\begin{proofof}{Theorem~\ref{thm: sparseDiam}} To complete the proof of 	
	Theorem~\ref{thm: sparseDiam}, note that $n\in \Theta(k)$, and thus $K=|x|=|y|=\Theta(n)$.
	Furthermore, the only edges in the cut $E(V_A, V_B)$
	are the edges between nodes in
	$F_A\cup T_A$ and nodes in $F_B\cup T_B$, and the edge $(\singlebar c_A,\singlebar c_B)$.
	Thus, in total, there are $\Theta(\log n)$ edges in the cut $E(V_A, V_B)$.
	Since Lemma~\ref{lem: lemmExDiam} shows that $\{G_{x,y}\}$ is a family of lower bound graphs, we can apply Theorem~\ref{thm: general lb framework} and deduce that
	any algorithm in the \cgst{} model for deciding whether a given graph has a diameter at least $5$ requires at least $\Omega(k/\log n)=\Omega(n/\log n )$ rounds.
	Finally, observe that the number of edges in the construction is $O(n\log n)$.
\end{proofof}

\subsection{$(3/2-\varepsilon)$-Approximation of the Diameter}
\label{sec: appDiam}
%
\begin{figure}
	\begin{center}
		\includegraphics[scale=.95,
		trim=2cm 5cm 2cm 2cm,clip]{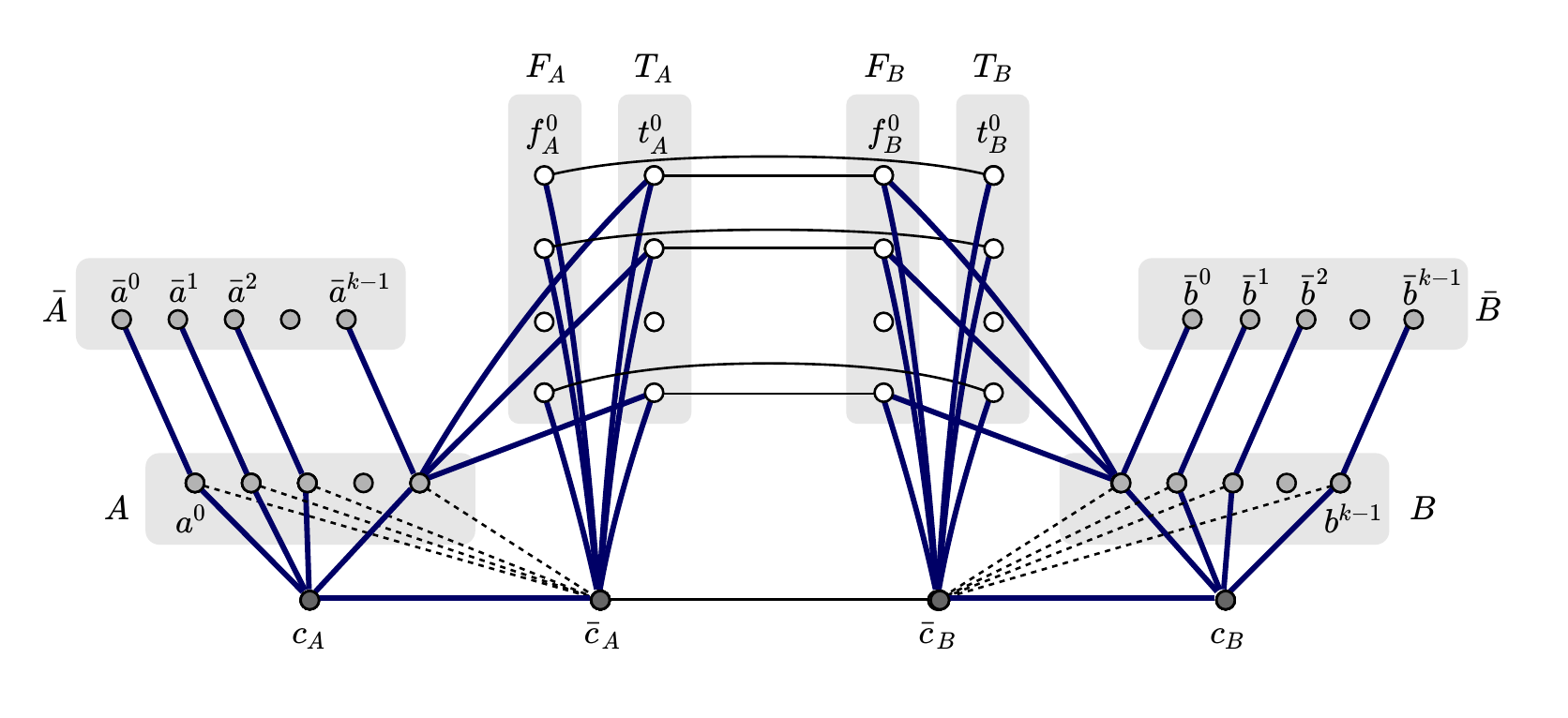}
	\end{center}
	\caption{Diameter approximation lower bound construction.
		Bold (blue) edges represent $q$-paths.}
	\label{fig: diamA}
\end{figure}

In this section we show how to modify our sparse construction presented in the previous section in order to achieve a near-linear lower bound even for computing a ($3/2-\varepsilon$)-approximation of the diameter.

\begin{theorem-repeat}{thm:DA}
	\ThmDA
\end{theorem-repeat}

As in the proof of Theorem~\ref{thm: sparseDiam}, we show that there is a family of lower bound graphs with respect to the \setdis{} function and the predicate $P$ that says that the graph has a diameter of length at least $\widetilde{D}$, where $\widetilde{D}$ is an integer that may depend on $n$.
Note that, unlike the case of proving a lower bound for computing the exact diameter, here we need to construct a family of lower bound graphs for which even an algorithm that computes a $(3/2-\epsilon)$-approximation to the diameter can be used to determine whether $P$ holds.

~\\
\noindent\textbf{The fixed graph construction:}
Start with our graph construction from the previous section
and stretch it by replacing some edges by paths of length $q$,
an integer that is chosen later.
Apply the following changes to the construction described in the previous section (see Figure~\ref{fig: diamA}):
\begin{enumerate}
	\item Replace all the edges inside $V_A$ and all the edges inside $V_B$ by paths of length $q$.
	The cut edges and the edges that depend on the inputs remain intact.
	\item Add two additional sets of nodes $\singlebar A = \{\singlebar a^i \mid i\in\set{0,\ldots,k-1}\}$, $\singlebar B=\{\singlebar b^i \mid i\in\set{0,\ldots,k-1}\}$, each of size $k$.
	For each $i\in\set{0,\ldots,k-1}$, connect $\singlebar a^i$ to $a^i$ and $\singlebar b^i$ to $b^i$,
	by paths of length~$q$.
\end{enumerate}


The partition of nodes into $V_A$ and $V_B$ is similar to the previous:
$V_A$ is composed of the nodes in $V_A$ in the previous construction,
the set $\singlebar A$, and the nodes in the paths between them.
$V_B$ is composed of the rest of the nodes.

~\\
\noindent\textbf{Adding edges corresponding to the strings $x$ and $y$:}
Given two binary strings $x,y\in\set{0,1}^{k}$, define $G_{x,y}$ by adding edges to the graph in a way similar to the the one described in the previous section.
That is,
if $x[i]=0$ then add an edge between the nodes $a^i$ and $\singlebar c_A$, and
if $y[i]=0$ then add an edge between $b^i$ and $\singlebar c_B$.

In this construction, the nodes $A\cup B\cup \set{\singlebar c_A,\singlebar c_B}$
serve as \emph{hubs}, in the sense that any node is at distance at most $q$
from one of the hubs
and the hubs are at distance at most $2q+2$
from one another, implying $D\leq 4q+2$.
The only exception to this is if the input strings are not disjoint,
in which case there are $a^i,b^i$ with $\dist(a^i,b^i)\geq 4q+1$,
implying $D\geq\dist(\singlebar a^i,\singlebar b^i)\geq 6q+1$.
Let us formalize these arguments.

\begin{claim}\label{claim: gxy properties approx diam}
	For every $u,v\in A\cup B\cup \set{\singlebar c_A,\singlebar c_B}$,
	if there is no index $i\in\set{0,\ldots,k-1}$ such that
	$u=a^i$ and $v=b^i$,
	then $\dist(u,v) \leq 2q+1$.	
\end{claim}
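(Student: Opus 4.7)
The plan is to do a short case analysis on the pair $(u,v)$. The underlying idea is that in $G_{x,y}$, every node in $A\cup B$ is at distance at most $q$ from the ``hub'' $c_A$ or $c_B$ (via the stretched edge to its center), at most $q$ from $\singlebar c_A$ or $\singlebar c_B$ (via the stretched $c_S$--$\singlebar c_S$ edge or via any bit-node), and the two hubs $\singlebar c_A,\singlebar c_B$ are adjacent through a cut edge of unit length. Together with the bit-gadget, this leaves only a handful of routes to check.

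First I would handle the easy cases where the pair can be routed through a single hub. If $u,v\in A$, concatenate the two stretched edges $u$--$c_A$ and $c_A$--$v$ to get a path of length $2q$; the case $u,v\in B$ is symmetric. If one of $u,v$ is $\singlebar c_A$ or $\singlebar c_B$, then $A$-nodes reach $\singlebar c_A$ at distance $2q$ (through $c_A$, or through any adjacent bit-node, each stretched edge contributing $q$), $B$-nodes reach $\singlebar c_B$ analogously, and crossing the unit cut edge $(\singlebar c_A,\singlebar c_B)$ adds at most $1$ more. All such pairs thus satisfy $\dist(u,v)\le 2q+1$.

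The only remaining case is $u=a^i\in A,\ v=b^j\in B$ (or vice versa), and by assumption $i\ne j$; this is where the bit-gadget is used. Choose an index $h$ with $i_h\ne j_h$. Without loss of generality $i_h=1$ and $j_h=0$, so that $t_A^h\in\bin(a^i)$ and $f_B^h\in\bin(b^j)$. Then the path
\[
a^i \rightsquigarrow t_A^h \;\longrightarrow\; f_B^h \rightsquigarrow b^j
\]
consists of a stretched edge of length $q$, a cut edge of unit length, and another stretched edge of length $q$, giving $\dist(a^i,b^j)\le 2q+1$. The symmetric choice $i_h=0,j_h=1$ routes through $f_A^h$ and $t_B^h$.

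There is no real obstacle here: the argument is a routine enumeration built directly on the two ingredients already present, namely the stretched ``hub'' structure and the bit-gadget routing of Claim~\ref{claim: gadget distances}. The only thing one has to be a little careful about is to observe that the excluded pairs $(a^i,b^i)$ are precisely the ones where no bit-coordinate disagrees, so no length-$(2q+1)$ bit-gadget path is available; this is exactly the hypothesis the claim removes.
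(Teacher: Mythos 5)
Your proof is correct and follows essentially the same case split and routing ideas as the paper's own proof: route $A$--$A$ and $B$--$B$ pairs through the centers $c_A$ or $c_B$, route pairs involving $\singlebar c_A$ or $\singlebar c_B$ through the stretched center path and the unit cut edge, and route $a^i$--$b^j$ ($i\neq j$) through the bit-gadget exactly as in Claim~\ref{claim: gadget distances} with stretched $q$-paths replacing the edges. The only difference is cosmetic---you observe the alternate route $a^i\rightsquigarrow\bin(a^i)\rightsquigarrow\singlebar c_A$ as well, and you state the $A$-to-$\singlebar c_A$ distance as $2q$ rather than the paper's looser $2q+1$---but the argument is the same.
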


\begin{proof}
	For each $i$ there is a path from $a^i$ to $\singlebar c_B$ of length $2q+1$,
	which passes through $c_A$ and $\singlebar c_A$,
	so $\dist(a^i,\singlebar c_B)\leq 2q+1$, and also $\dist(a^i,\singlebar c_A)\leq 2q+1$.
	Similarly, $\dist(b^i,\singlebar c_A)\leq 2q+1$, and also $\dist(b^i,\singlebar c_B)\leq 2q+1$.
	For every $i,j\in\set{0,\ldots,k-1}$, there is a $2q$-path from $a^i$ to $a^j$ through $c_A$,
	so $\dist(a^i,a^j)\leq 2q$, and similarly $\dist(b^i,b^j)\leq 2q$ using $c_B$.
	
	We are left with the case of $a^i$ and $b^j$ where $i\neq j$,
	which is a simple extension of Claim~\ref{claim: gadget distances}.
	In this case, there must be some $h$ such that $i_h\neq j_h$,
	and assume without loss of generality that $i_h=1$ and $j_h=0$.
	Hence, $a^i$ is connected to $t_A^h$ by a $q$-path, and $b^j$ is similarly connected to $f_B^h$.
	Since $t_A^h$ and $f_B^h$ are connected by an edge, we have $\dist(a^i,b^j)\leq 2q+1$, as desired.	
\end{proof}

The next lemma proves that $\set{G_{x,y}}$ is a family of lower bound graphs.

\begin{lemma}\label{lem: diamA}
	If $x$ and $y$ are disjoint then the diameter of $G_{x,y}$ is at most $4q+2$,
	and otherwise it is at least $6q+1$.
\end{lemma}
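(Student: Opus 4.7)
\noindent The plan is to follow the structure of Lemma~\ref{lem: lemmExDiam}, adapting each step to the stretched construction: internal edges become $q$-paths, cut edges and input-dependent edges remain of length $1$, and the $\singlebar A, \singlebar B$ extensions contribute an additional $q$ on each side of the critical pair $(a^i, b^i)$. Thus the unstretched bounds $\dist \leq 4$ and $\dist \geq 5$ from Lemma~\ref{lem: lemmExDiam} should lift, respectively, to $\dist \leq 4q+2$ and $\dist \geq 6q+1$.

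\noindent For the upper bound (disjoint case), I would show that the extremal pair is $(\singlebar a^i, \singlebar b^i)$. Claim~\ref{claim: gxy properties approx diam} already bounds all pairwise distances among hubs in $A\cup B\cup\set{\singlebar c_A,\singlebar c_B}$ by $2q+1$ except for pairs $(a^i, b^i)$. Disjointness gives, WLOG, $x[i]=0$, so the input edge $(a^i, \singlebar c_A)$ exists and the path $a^i \to \singlebar c_A \to \singlebar c_B \to c_B \to b^i$ has length $1+1+q+q = 2q+2$ (the first two edges are of length $1$, the last two are stretched $q$-paths). Prepending and appending the $q$-paths $\singlebar a^i \to a^i$ and $b^i \to \singlebar b^i$ gives $\dist(\singlebar a^i, \singlebar b^i) \leq 4q+2$. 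A brief check of the remaining pair types (extension-to-extension with different indices, extension-to-hub, internal path nodes, etc.) confirms they route through these hubs at no greater cost.

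\noindent For the lower bound (non-disjoint case), fix $i$ with $x[i]=y[i]=1$ and consider $\singlebar a^i \in V_A$, $\singlebar b^i \in V_B$. Any path between them uses at least one cut edge; take a shortest path $P$ and let $(u,v)$ with $u\in V_A$ be its first cut edge, so that $|P| \geq \dist_{G[V_A]}(\singlebar a^i, u) + 1 + \dist(v, \singlebar b^i)$. The cut consists of $(\singlebar c_A, \singlebar c_B)$ together with the bit-node edges $(f_A^h, t_B^h)$ and $(t_A^h, f_B^h)$, mirroring the three cases of Lemma~\ref{lem: lemmExDiam}. If $(u,v) = (\singlebar c_A, \singlebar c_B)$: the condition $x[i]=1$ removes the $(a^i, \singlebar c_A)$ input edge, so every route from $\singlebar a^i$ to $\singlebar c_A$ inside $G[V_A]$ traverses three stretched segments (e.g., $\singlebar a^i \to a^i \to c_A \to \singlebar c_A$), contributing $\geq 3q$; symmetrically $\dist(\singlebar c_B, \singlebar b^i) \geq 3q$, so $|P| \geq 6q+1$. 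If $(u,v) = (f_A^h, t_B^h)$, a split on $i_h$ shows that exactly one of $\dist_{G[V_A]}(\singlebar a^i, f_A^h)$ and $\dist(t_B^h, \singlebar b^i)$ equals $2q$ (one stretched gadget segment plus the $q$-extension) while the other equals $4q$ (three stretched segments plus the $q$-extension, since in that subcase the underlying unstretched gadget-distance is $3$); their sum plus the cut edge is $\geq 6q+1$. The case $(t_A^h, f_B^h)$ is symmetric.

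\noindent The main obstacle is the last case analysis, in particular verifying the ``$3q$ gap'' inside the stretched bit-gadget: that a node $a^i$ with $i_h = 1$ has distance at least $3q$ to $f_A^h$ within $G[V_A]$ (and the corresponding statement for $b^i$). This is the stretched analog of the distance-$3$ observation underlying Lemma~\ref{lem: lemmExDiam} and can be verified by noting that $a^i$ and $f_A^h$ share no common neighbor in the unstretched gadget, forcing any route to pass through a common hub ($c_A$ or $\singlebar c_A$) and hence through three $q$-paths.
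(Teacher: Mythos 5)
Your proposal is correct and its overall shape mirrors the paper's (upper bound via hubs, lower bound via a case analysis on the first cut edge), but the execution of the lower bound is organized differently. The paper first observes that no shortest path from $a^i$ to $b^i$ passes through $a^j$ or $b^j$ for $j\neq i$, so any such path avoids the unstretched input edges $(a^j,\singlebar c_A)$, $(b^j,\singlebar c_B)$ and hence consists solely of stretched $q$-paths and cut edges. This lets the paper invoke the conclusion of Lemma~\ref{lem: lemmExDiam}'s proof directly---any $a^i$-to-$b^i$ path in the unstretched graph uses at least four internal edges and a cut edge---and simply scale by $q$ to get $4q+1$, then add $2q$ for the $\singlebar A,\singlebar B$ extensions. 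You instead recompute the stretched distances from scratch inside each case, which is also correct and arguably more self-contained, at the cost of re-verifying several gadget distances ($\dist(\singlebar a^i,\singlebar c_A)=3q$, the $2q$ vs.\ $4q$ split, etc.).

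One aspect you gloss over that the paper handles explicitly: because the input-dependent edges remain length $1$ while everything else is stretched by $q$, one must rule out shortcuts through a \emph{different} index's input edge. Your distance claims (``every route contributes $\geq 3q$'', ``... equals $4q$'') quietly assume this, and while it does check out (a route $a^i \to c_A \to a^j \to \singlebar c_A$ costs $2q+1 > 2q$, so such edges never help), this should be argued, not left implicit. This corresponds precisely to the paper's observation that $\dist(a^i,a^j)=2q$ already saturates the distance to any relevant cut node, so shortest paths from $a^i$ to the cut do not route through any $a^j$. Adding that half-sentence would close the only real gap in your write-up.
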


\begin{proof}
	Assume that the sets are disjoint,
	so for every $i\in\set{0,\ldots,k-1}$ either $x[i]=0$ or $y[i]=0$.
	Hence, for every $i$ there is a $(2q+2)$-path from $a^i$ to $b^i$,
	whether through $\singlebar c_A,\singlebar c_B,c_B$ or through $c_A,\singlebar c_A,\singlebar c_B$.
	Hence, $\dist(a^i,b^i)\leq 2q+2$, and together with Claim~\ref{claim: gxy properties},
	we can conclude that for every $u',v'\in(A\cup B\cup \set{\singlebar c_A,\singlebar c_B})$
	we have $\dist(u,v)\leq 2q+2$.
	
	For each node $v\in V$, there exists a node $v'\in (A\cup B\cup \set{\singlebar c_A,\singlebar c_B})$ such that $\dist(v,v')\leq q$.
	Consider any two nodes $u,v\in V$,
	and the nodes $u',v'\in(A\cup B\cup \set{\singlebar c_A,\singlebar c_B})$ closest to them.
	By the triangle inequality,
	$\dist(u,v)\leq \dist(u,u')+\dist(u',v')+\dist(v',v)\leq q+2q+2+q=4q+2$,
	as desired.

	Assume that the two sets are not disjoint, i.e.,\ there is some $i\in\set{0,\ldots,k-1}$ such that $x[i]=y[i]=1$. Hence, $a^i$ is not connected directly to $\singlebar c_A$ and $b^i$ is not connected directly to $\singlebar c_B$.
	We show that $\dist(\singlebar a^i,\singlebar b^i)\geq 6q+1$.
	
	First, note that any path connecting $\singlebar a^i$ and $\singlebar b^i$ must go through the $q$-paths connecting $\singlebar a^i$ to $a^i$ and $\singlebar b^i$ to $b^i$,
	hence it suffices to prove $\dist(a^i,b^i)\geq 4q+1$.
	Let $u\in V_A$ be a cut node, i.e.,\ a node in $V_A$ with a neighbor in $V_B$.
	Observe that no shortest path from $a^i$ to $u$ uses a node $a^j$ for $j\neq i$,
	as $\dist(a^i,a^j)=2q$ and $\dist(a^i,u)\leq 2q$,
	and a similar claim holds for $b^i$.
	Thus, shortest paths connecting $a^i$ and $b^i$ do not use any edges
	of the form $(a^j,\singlebar c_A)$ or $(b^j\singlebar c_B)$, but only $q$-paths
	that replace edges of the graph from the previous section,
	and cut edges.
	The proof of Lemma~\ref{lem: lemmExDiam} shows that any such path
	must go through at least 4 edges which are internal to $V_A$ or $V_B$,
	and one cut edge.
	In the current construction, this translates into $4$ $q$-paths and
	an edge, which implies $\dist(a^i,b^i)\geq 4q+1$, as desired.
\end{proof}

Using this family of lower bound graphs, we prove Theorem~\ref{thm:DA}.

\begin{proofof}{Theorem~\ref{thm:DA}} To complete the proof of Theorem~\ref{thm:DA}, consider the predicate $P$ of the diameter of the graph being at least $6q+1$.
In order to make $\set{G_{x,y}}$ a family of lower bound graphs for which even a $(3/2-\epsilon)$-approximation algorithm decides $P$, we choose a constant $q$ such that $(\frac{3}{2}-\varepsilon)\cdot (4q+2) < (6q+1)$,
which holds for any $q>\frac{1}{2\varepsilon}-\frac{1}{2}$. Observe that $k\in \Theta(n/\log n)$ for any constant $\varepsilon$, and thus $K=|x|=|y|\in \Theta(n/\log n)$.
Furthermore, the number of edges in the cut is $|E(V_A, V_B)|\in \Theta(\log n)$.
By applying Theorem~\ref{thm: general lb framework} to the above construction, we deduce that any algorithm in the \cgst{} model for computing a $(3/2-\epsilon)$-approximation for the diameter requires at least $\Omega\left(n/\log^3(n)\right)$ rounds.
\end{proofof}

\subsection{Radius}\label{sec: ExactRadius}

In this section we extend our sparse construction and show that computing the radius requires a near-linear number of rounds in the \cgst{} model,
even on sparse graphs.

\begin{theorem-repeat}{ExactRad}
	\ThmRadius
\end{theorem-repeat}


\begin{figure}
	\begin{center}
		\includegraphics[
		trim=3cm 5.5cm 3cm 2cm,clip]{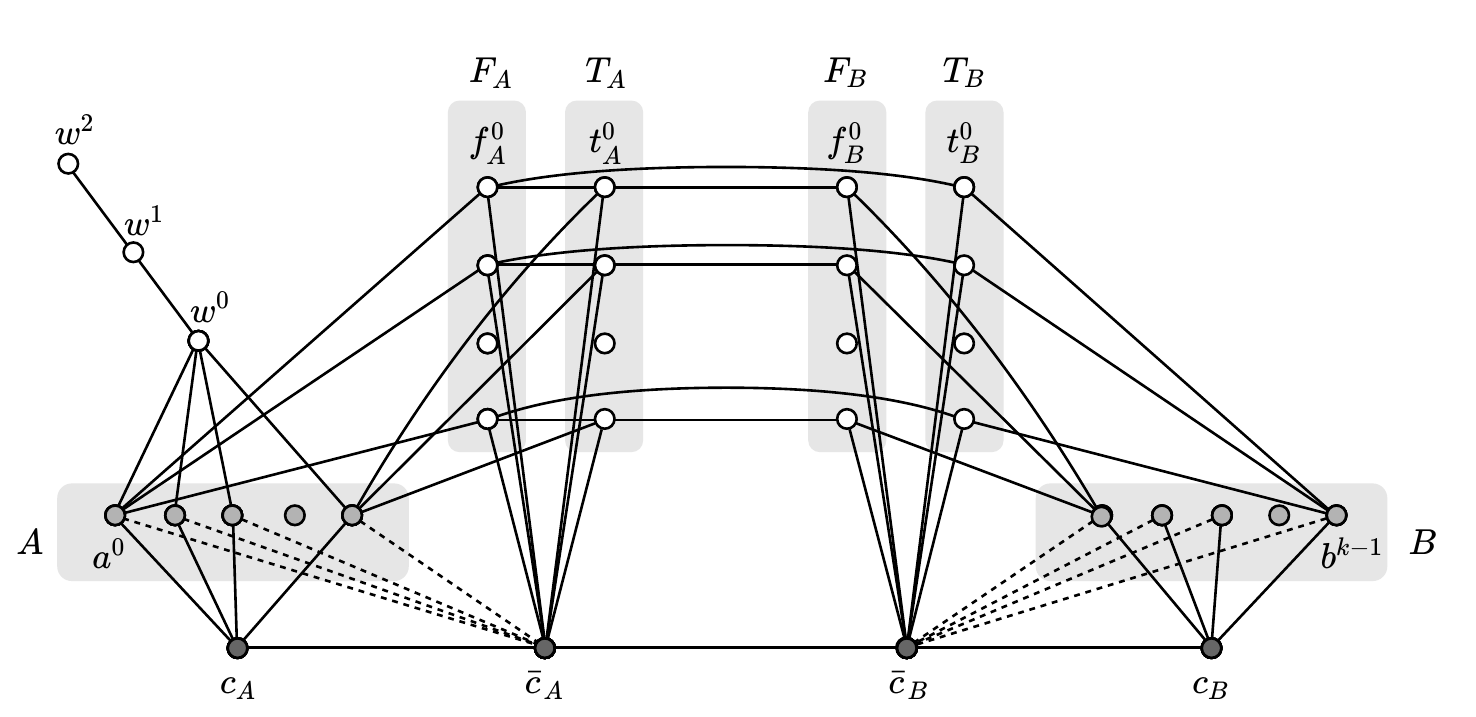}
	\end{center}
	\caption{Radius lower bound}\label{fig: rad}
\end{figure}

~\\
\noindent\textbf{The fixed graph construction:}
The graph construction for the radius is very similar to the one described in Section~\ref{sec: exDiam},
with the following changes (see also Figure~\ref{fig: rad}).
\begin{enumerate}
	\item For each $h\in\set{0,\ldots,\log k-1}$, add the edge $(f_A^h,t_A^h)$.
	\item Add a $2$-path $(w^0,w^1,w^2)$, and connect $w^0$ to all the nodes in $A$. Add the nodes $w^0,w^1,w^2$ to $V_A$.
\end{enumerate}

~\\
\noindent\textbf{Adding edges corresponding to the strings $x$ and $y$:} Given two binary strings $x,y\in\set{0,1}^{k}$, we define $G_{x,y}$ as follows.
If $x[i]=1$, then we add an edge between $a^i$ and $\singlebar c_A$,
and if $y[i]=1$ we add an edge between $b^i$ and $\singlebar c_B$.
Note that unlike the reduction from the previous section,
here we add an edge if the corresponding bit is 1 rather than 0.

\begin{claim}\label{claim: gxy properties radius}
	The graph family $\{G_{x,y}\}$ defined above has the following properties:
	\begin{enumerate}
	\item \label{itemInGxyPropertiesRad: NotAEcc}
		For every node $u\in V\setminus A$, it holds that $e(u)\geq 4$.
	\item \label{itemInGxyPropertiesRad: AB}
		For every $a^i\in A$ and $u\in V\setminus\set{b^i,c_B}$,
		it holds that $\dist(a^i,u)\leq 3$.
	\end{enumerate}
\end{claim}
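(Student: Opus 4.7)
My plan is to prove both parts by choosing, for each node type, a witness attesting to the relevant distance bound, leveraging the combinatorics of the bit-gadget together with the new $(f_A^h,t_A^h)$ edges and the tail $w^0,w^1,w^2$.

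For Part~\ref{itemInGxyPropertiesRad: NotAEcc}, the natural eccentricity witness for most nodes is $w^2$. The structural observation is that $w^2$'s only neighbor is $w^1$, whose neighbors are $\set{w^0,w^2}$, and $w^0$'s neighbors outside $\set{w^1}$ are exactly the nodes of $A$. Hence for any $u\in V\setminus(A\cup\set{w^0,w^1,w^2})$, every $u$-to-$w^2$ path must enter $A$ first and then use the three hops $a^j\to w^0\to w^1\to w^2$, giving $\dist(u,w^2)\geq \dist(u,A)+3\geq 4$. For the three tail nodes themselves I switch to $c_B$ as the witness: any path from the tail to $c_B$ leaves through $w^0\to A$, and from any $a^j$ one needs at least three more hops to reach $c_B$, since $c_B$'s neighborhood is $B\cup\set{\singlebar c_B}$ and a direct inspection shows $a^j$ and $c_B$ share no common neighbor. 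This yields $\dist(w^0,c_B)\geq 4$, $\dist(w^1,c_B)\geq 5$, and $\dist(w^2,c_B)\geq 6$.

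For Part~\ref{itemInGxyPropertiesRad: AB}, I plan a short case analysis on where $u$ sits. The easy cases are $u\in A\cup\set{c_A,\singlebar c_A,\singlebar c_B,w^0,w^1,w^2}$, all reached in at most three hops using the hub $c_A$, the edges $(c_A,\singlebar c_A)$ and $(\singlebar c_A,\singlebar c_B)$, and the tail. The cases $u\in F_A\cup T_A$ are where the new radius edges $(f_A^h,t_A^h)$ do their work: bit-nodes lying in $\bin(a^i)$ are already neighbors of $a^i$, while bit-nodes \emph{not} in $\bin(a^i)$ are reached in two hops through their matching partner via an $(f_A^h,t_A^h)$ edge. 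Extending along one of the cross edges $(f_A^h,t_B^h)$ or $(t_A^h,f_B^h)$ then gives distance at most $3$ to every node of $F_B\cup T_B$. The remaining case $u=b^j$ with $j\neq i$ is handled by a direct appeal to Claim~\ref{claim: gadget distances}, which yields $\dist(a^i,b^j)\leq 3$ already in the base bit-gadget; since the current graph only adds edges, distances can only decrease.

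The main technical point, and the reason this construction is tailored for a radius lower bound, is to verify that $b^i$ and $c_B$ are truly exceptional: they are precisely the nodes whose distances from $a^i$ depend on the input bits. Indeed, $\dist(a^i,c_B)$ drops from $4$ to $3$ exactly when $x[i]=1$ (via the shortcut $a^i\to\singlebar c_A\to\singlebar c_B\to c_B$), and $\dist(a^i,b^i)$ drops to $3$ only when both $x[i]=y[i]=1$, via $a^i\to\singlebar c_A\to\singlebar c_B\to b^i$. The plan is to keep every other distance comfortably at most $3$, so that the eccentricity of each $a^i$ will be determined precisely by whether these two input-dependent shortcuts exist---which is exactly what the \setdis{} reduction in the next step will exploit.
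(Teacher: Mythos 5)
Your proposal is correct and follows essentially the same approach as the paper's proof: witness $w^2$ (and $c_B$ for the tail nodes) in Part~\ref{itemInGxyPropertiesRad: NotAEcc}, and a short case analysis on $u$ in Part~\ref{itemInGxyPropertiesRad: AB}, with the new $(f_A^h,t_A^h)$ edges delivering the two-hop bound to all of $F_A\cup T_A$ and Claim~\ref{claim: gadget distances} handling $b^j$ for $j\neq i$. The only substantive difference is that you spell out the role of the radius edges more explicitly where the paper says only "neighbor of a node in $\bin(a^i)$,'' and you give slightly tighter per-node bounds for the tail; neither changes the argument.
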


\begin{proof}
	For $u\notin A\cup\set{w^0,w^1,w^2}$,
	any path from $u$ to $w^2$ must go through a node $a^i\in A$.
	Since $\dist(w^2,a^i)=3$ for every such $a^i\in A$,
	we get $\dist(w^2,u)>3$.
	For $u\in \set{w^0,w^1,w^2}$,
	note that no $a^i$ has a common neighbor with $c_B$, so $\dist(a^i,c_B)\geq 3$ for all $a^i$, so $\dist(u,c_B)>3$,
	proving~\ref{itemInGxyPropertiesRad: NotAEcc}.
	
	For~\ref{itemInGxyPropertiesRad: AB},
	if $u\in\set{w^0,w^1,w^2}$ then $\dist(a^i,u)\leq 3$ by construction.
	If $u\in\set{c_A,\singlebar c_A,\singlebar c_B}$ then the path $(a_,c_A,\singlebar c_A,\singlebar c_B)$ proves the claim.
	Every $v\in F_A\cup T_A$ is in $\bin(a^i)$ or a neighbor of a node in $\bin(a^i)$, so $\dist(a^i,v)\leq 2$, and every $u\in F_B\cup T_B$ is a neighbor of
	some $v\in F_A\cup T_A$, implying $\dist(a^i,u)\leq 3$.
	Finally, if $u=b^i$ for $i\neq j$ then $\dist(a^i,u)\leq 3$
	by Claim~\ref{claim: gadget distances}.
\end{proof}

\begin{lemma}\label{ex-Ra-Dist}
	The strings $x$ and $y$ are disjoint if and only if the radius of $G$ is at least $4$.
\end{lemma}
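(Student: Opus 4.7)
The plan is to combine the two parts of Claim~\ref{claim: gxy properties radius} to reduce the problem to understanding the eccentricities of the nodes $a^i \in A$. Part~\ref{itemInGxyPropertiesRad: NotAEcc} already guarantees that every $u \notin A$ satisfies $e(u) \geq 4$, so the radius of $G_{x,y}$ is $\geq 4$ if and only if every $a^i$ also satisfies $e(a^i) \geq 4$. Part~\ref{itemInGxyPropertiesRad: AB} then tells us that for each $a^i$, all nodes except possibly $b^i$ and $c_B$ are within distance $3$; hence $e(a^i) \leq 3$ iff $\dist(a^i, b^i) \leq 3$ \emph{and} $\dist(a^i, c_B) \leq 3$.

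The next step is to characterize each of these two distances by the bits $x[i]$ and $y[i]$. For $\dist(a^i, c_B)$, I would enumerate the ways a short path can leave $V_A$: either through the cut edge $(\singlebar c_A, \singlebar c_B)$, or through a bit-gadget cut edge $(f_A^h, t_B^h)$ or $(t_A^h, f_B^h)$. Any route through the bit-gadget reaches at best a $B$-side bit-node or $\singlebar c_B$ in three hops, and then needs an additional edge to $c_B$. The only $3$-hop path to $c_B$ is therefore $a^i \to \singlebar c_A \to \singlebar c_B \to c_B$, which exists exactly when $x[i] = 1$. An analogous analysis for $\dist(a^i, b^i)$ shows that the unique $3$-hop option is $a^i \to \singlebar c_A \to \singlebar c_B \to b^i$, which requires $x[i] = y[i] = 1$; bit-gadget attempts are ruled out because $\bin(a^i) = \bin(b^i)$ forces at least four hops, as in the proof of Lemma~\ref{lem: lemmExDiam}.

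Putting everything together, $e(a^i) \leq 3$ iff $x[i] = y[i] = 1$. Therefore some node in $V$ has eccentricity $\leq 3$ iff there exists an index $i$ with $x[i] = y[i] = 1$, i.e., iff $x$ and $y$ are \emph{not} disjoint. Equivalently, the radius of $G_{x,y}$ is at least $4$ iff $x$ and $y$ are disjoint.

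The main obstacle is the case analysis ruling out short paths through the bit-gadget (especially in light of the new intra-$V_A$ edges $(f_A^h, t_A^h)$ added in this construction), since one must verify that even with these shortcuts, no $3$-hop route reaches $b^i$ or $c_B$ unless the required input edge is present. This is a careful but routine extension of the argument used for the exact diameter lower bound, exploiting the identity $\bin(a^i) = \bin(b^i)$ and the fact that $c_B$ is separated from every bit-node on the $B$-side by $\singlebar c_B$ rather than being directly adjacent to it.
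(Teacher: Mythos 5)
Your proof is correct and takes essentially the same route as the paper's: both use Claim~\ref{claim: gxy properties radius} to reduce the question to whether some $a^i$ has $e(a^i)\leq 3$, and both show that the bit-gadget structure forces any qualifying $3$-path from $a^i$ to $b^i$ (and to $c_B$) to pass through the edge $(\singlebar c_A,\singlebar c_B)$, which requires $x[i]=y[i]=1$. Your treatment is slightly more explicit than the paper's---you separately characterize $\dist(a^i,c_B)$ and explicitly address the new intra-$V_A$ edges $(f_A^h,t_A^h)$, where the paper simply observes that $\bin(a^i)$ and $\bin(b^i)$ are not adjacent---but the underlying argument is the same.
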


\begin{proof}
	If the strings are not disjoint,
	i.e.,\ there exists an $i\in\set{0,\ldots,k-1}$ such that $x[i]=y[i]=1$,
	then the edges $(a^i,\singlebar c_A),(b^i,\singlebar c_B)$ exist in the graph.
	The $3$-path $(a^i,\singlebar c_A,\singlebar c_B,b^i)$
	implies $\dist(a^i,b^i)\leq 3$,
	and the $3$-path $(a^i,\singlebar c_A,\singlebar c_B,c_B)$
	implies $\dist(a^i,c_B)\leq 3$.
	Claim~\ref{claim: gxy properties radius}(\ref{itemInGxyPropertiesRad: AB}) completes the proof:
	$\dist(a^i,u)\leq 3$ for every $u\in V$, and $e(a^i)\leq 3$ as desired.
	
	If the radius is at most $3$,
	then by Claim~\ref{claim: gxy properties radius}(\ref{itemInGxyPropertiesRad: NotAEcc}) there must be a node $a^i\in A$
	with $e(a^i)\leq 3$, which implies $\dist(a^i,b^i)\leq 3$.
	Since the nodes of $\bin(a^i)$ and $\bin(b^i)$ are not neighbors,
	there must be a $3$-path connecting $a^i$ and $b^i$ that goes through the cut edge $(\singlebar c_A,\singlebar c_B)$.
	Hence, the edges $(a^i,\singlebar c_A)$ and $(b^i,\singlebar c_B)$ exist in $G_{\set{x,y}}$,
	so $x[i]=y[i]=1$ and the strings are not disjoint.
\end{proof}

\begin{proofof}{Theorem~\ref{ExactRad}}
	Note that the number of edges on the cut is $\size{E(V_A, V_B)}=\Theta(\log n)$, and that $K=k=\Theta(n)$. By Lemma~\ref{ex-Ra-Dist}, $\{G_{x,y}\}$ is a family of lower bound graphs, so we can apply Theorem~\ref{thm: general lb framework} on the above construction to deduce that any algorithm in the \cgst{} model for computing the radius of a sparse network requires at least $\Omega(n/\log^2(n))$ rounds.
\end{proofof}

\section{Near-Quadratic Lower Bounds for General Graphs}\label{sec:nphard}
In this section we present the first super-linear lower bounds for natural graph problems in the \cgst{} model.
Section~\ref{sec:mvc} introduces a relatively simple lower bound
for the minimum vertex cover and maximum independent set algorithms,
and Section~\ref{sec: coloring} presents lower bounds for $\chi$-coloring algorithms.

\subsection{Minimum Vertex Cover}
\label{sec:mvc}

The first near-quadratic lower bound we present is for computing a minimum vertex cover, as stated in the following theorem.
\begin{theorem-repeat}{thm:VC}
	\ThmMVC
\end{theorem-repeat}

A set of nodes is a vertex cover if and only if its complement is an independent set,
which implies that finding the minimum size of a vertex cover is equivalent to finding the maximum size of an independent set.
Thus, the following theorem is a direct corollary of Theorem~\ref{thm:VC}.

\begin{theorem}\label{thm:MaxIS}
	Any distributed algorithm for computing a maximum independent set or for deciding whether there is an independent set of a given size
requires $\Omega(n^2/\log^2n)$ rounds.
\end{theorem}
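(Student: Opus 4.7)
The plan is to derive Theorem~\ref{thm:MaxIS} directly from Theorem~\ref{thm:VC} via the classical complementation between vertex covers and independent sets, so that no new graph construction is needed. First I would recall the elementary fact that a set $U\subseteq V$ is a vertex cover of $G=(V,E)$ if and only if $V\setminus U$ is an independent set of $G$: indeed, $U$ fails to cover an edge $(u,v)$ iff both $u,v\notin U$, iff $(u,v)$ has both endpoints in $V\setminus U$, violating independence. In particular, $|U|$ is the minimum cover size iff $|V\setminus U|$ is the maximum independent set size, and the minimum vertex cover number equals $n$ minus the maximum independent set number.

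Next I would turn this into a black-box reduction inside the \cgst{} model. Suppose, for contradiction, there is an algorithm $\mathrm{ALG}_{\mathrm{IS}}$ in the \cgst{} model that either (i) computes a maximum independent set, meaning each node $v$ outputs a bit indicating membership in some fixed MIS $I^*$, or (ii) decides whether the graph has an independent set of a given size $s$, all in $o(n^2/\log^2 n)$ rounds. Every node already knows $n$ (by standard assumption, or can learn it in $O(D)\le O(n)$ rounds, which is dominated by the claimed round complexity). In case (i), after running $\mathrm{ALG}_{\mathrm{IS}}$ each node $v$ locally flips its output bit; the resulting labels specify $V\setminus I^*$, which by the observation above is a minimum vertex cover. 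In case (ii), to decide whether $G$ admits a vertex cover of size at most $s'$, every node runs $\mathrm{ALG}_{\mathrm{IS}}$ with the parameter $s = n - s'$ and outputs the same bit, since an independent set of size $\ge n-s'$ exists iff a vertex cover of size $\le s'$ exists. Either reduction incurs only local post-processing and hence uses the same asymptotic number of rounds as $\mathrm{ALG}_{\mathrm{IS}}$.

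Finally I would conclude: if $\mathrm{ALG}_{\mathrm{IS}}$ ran in $o(n^2/\log^2 n)$ rounds, we would obtain an algorithm for computing a minimum vertex cover, or for deciding the existence of a vertex cover of a given size, within the same round bound, contradicting Theorem~\ref{thm:VC}. Therefore any algorithm solving either version of the independent set problem requires $\Omega(n^2/\log^2 n)$ rounds, as claimed. There is no real obstacle here since the theorem is explicitly announced as a direct corollary; the only point that deserves a sentence of care is that the reduction must preserve the per-node output format (flipping a bit for the optimization version, and reusing the single decision bit for the decision version) so that it is genuinely round-free in the \cgst{} model.
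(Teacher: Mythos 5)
Your proposal is correct and matches the paper's approach: the paper likewise derives this theorem as a direct corollary of Theorem~\ref{thm:VC} via the observation that a set is a vertex cover iff its complement is an independent set. You simply spell out the round-free local post-processing in more detail than the paper does.
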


Observe that a lower bound for deciding whether there is a vertex cover of some given size or not implies a lower bound for computing a minimum vertex cover. This is because computing the size of a given subset of nodes can be easily done in $O(D)$ rounds using standard tools. Therefore, to prove Theorem~\ref{thm:VC} it is sufficient to prove its second part. We do so by describing a family of lower bound graphs with respect to the \setdis{} function and the predicate $P$ that says that the graph has a vertex cover of size $M$, where $M=M(n)$ is chosen later. We begin with describing the fixed graph construction $G=(V,E)$ and then define the family of lower bound graphs and analyze its relevant properties.

~\\
\noindent\textbf{The fixed graph construction:}
\begin{figure}[t]
	\begin{center}
		\includegraphics[
		trim=3cm 16cm 3cm 2cm,clip]{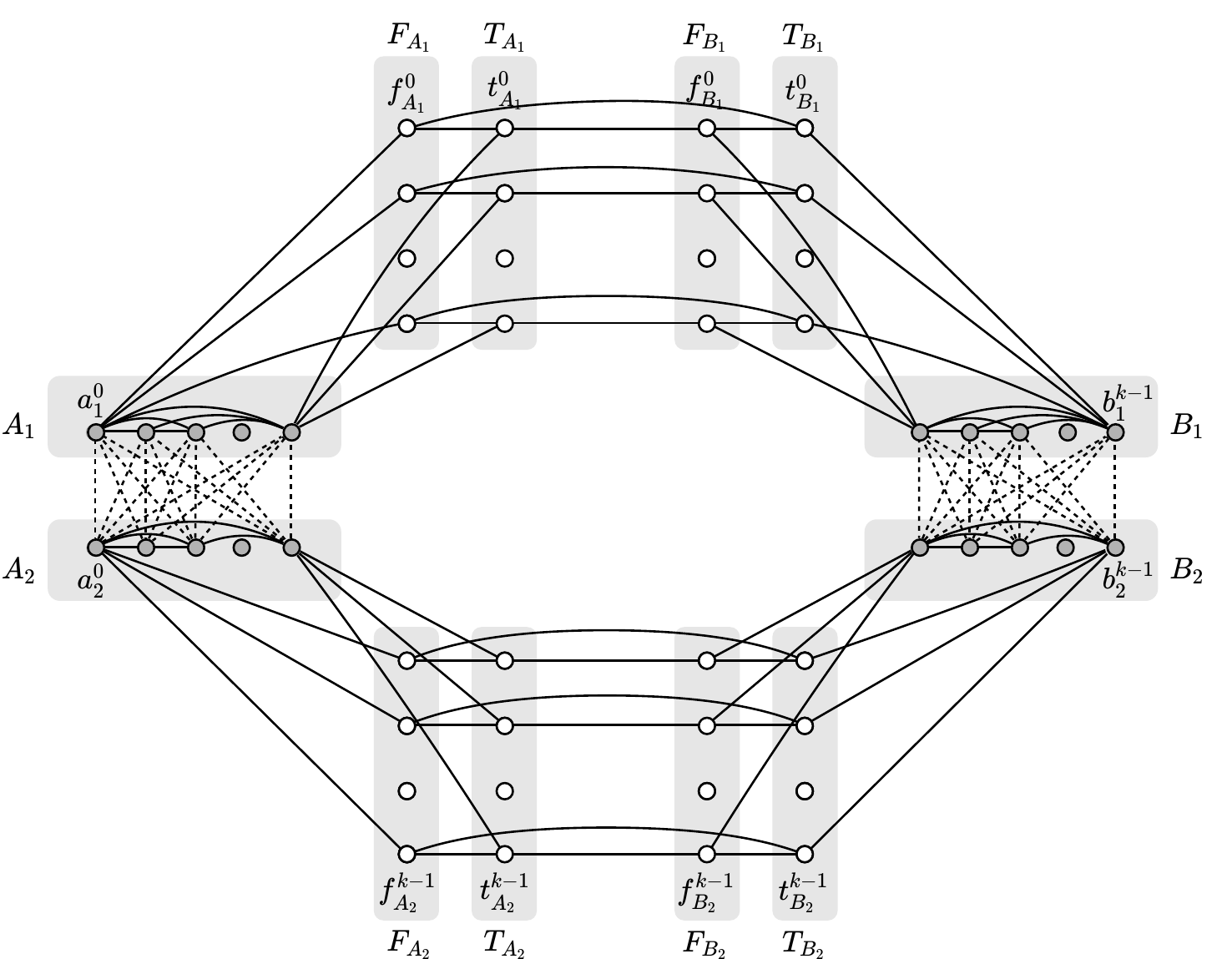}
	\end{center}
	\caption{Lower bound graph for minimum vertex cover.
		The dashed edges represent edges whose existence depends on $x$ or $y$.}
	\label{fig: consVC}
\end{figure}
Start with two copies of the gadget described in Section~\ref{sec: gadget}.
That is, the fixed graph (Figure~\ref{fig: consVC}) consists of four sets of size $k$:
$A_1=\{a^i_1\mid i\in\set{0,\ldots,k-1}\}$, $A_2=\{a^i_2\mid i\in\set{0,\ldots,k-1}\}$, $B_1=\{b^i_1\mid i\in\set{0,\ldots,k-1}\}$ and $B_2=\{b^i_2\mid i\in\set{0,\ldots,k-1}\}$.
Each such set $S$ is connected to $2\log k$ nodes:
$F_{S}=\{f^h_{S}\mid h\in\set{0\ldots,\log k -1}\}$ and $T_{S}=\{t^h_{S}\mid h\in\set{0,\ldots,\log k -1}\}$,
where $F_{A_1},T_{A_1},F_{B_1},T_{B_1}$ constitute one bit-gadget,
and $F_{A_2},T_{A_2},F_{B_2},T_{B_2}$ constitute another.
Partition the nodes into $V_A=A_1\cup A_2\cup F_{A_1}\cup T_{A_1}\cup F_{A_2}\cup T_{A_2}$ and $V_B=V\setminus V_A$.

Let $s^i_\ell$ be a node in a set $S\in\set{A_1,A_2,B_1,B_2}$, i.e.,\ $s\in \set{a,b}$, $\ell\in\set{1,2}$ and $i\in\set{0,\ldots,k-1}$.
As before, connect $s^i_\ell$ to $\bin(s^i_\ell) = \set{f^h_S\mid i_h=0}\cup \set{t^h_S\mid i_h=1}$.
In addition, in each of the sets $A_1,A_2,B_1,B_2$, connect all the nodes to one another, forming a clique.
Also, connect the nodes of the bit-gadgets to form $4$-cycles:
for each $h\in \set{0,\ldots,\log k-1}$ and $\ell\in\set{1,2}$,
connect the $4$-cycle $(f^h_{A_\ell},t^h_{A_\ell},f^h_{B_\ell},t^h_{B_\ell})$.

The following two claims address the basic properties of vertex covers of $G$.

\begin{claim}\label{ObSize}
Any vertex cover of $G$ must contain at least $k-1$ nodes from each of the cliques in $A_1,A_2,B_1$ and $B_2$,
and at least $4\log k$ bit-nodes.
\end{claim}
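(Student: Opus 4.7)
The plan is to verify the two bounds separately, using two classical facts about vertex covers: any vertex cover of a clique $K_k$ misses at most one vertex, and any vertex cover of a $4$-cycle contains at least two of its vertices.

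First I would handle the cliques. For each $S \in \{A_1, A_2, B_1, B_2\}$, the construction makes $S$ a clique on $k$ vertices. If a vertex cover $U$ contained fewer than $k-1$ nodes of $S$, then at least two nodes $u, v \in S$ would lie outside $U$; but the edge $(u,v)$ exists (since $S$ is a clique) and would be uncovered, a contradiction. This immediately yields the $k-1$ lower bound on $|U \cap S|$ for each of the four cliques.

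Next I would address the bit-nodes. For each $h \in \{0, \dots, \log k - 1\}$ and each $\ell \in \{1, 2\}$, the construction includes the $4$-cycle $C_{h,\ell} = (f^h_{A_\ell}, t^h_{A_\ell}, f^h_{B_\ell}, t^h_{B_\ell})$. Any vertex cover must contain at least two nodes of $C_{h,\ell}$, since a $4$-cycle has no vertex cover of size less than $2$. The key observation is that these $2\log k$ cycles are pairwise vertex-disjoint: their nodes are indexed by the pair $(h, \ell)$ together with a designation in $\{f, t\}$ and a side in $\{A, B\}$, and distinct $(h, \ell)$ pairs produce disjoint node sets by construction. Summing the contributions of the $2\log k$ disjoint $4$-cycles gives at least $2 \cdot 2\log k = 4 \log k$ bit-nodes in any vertex cover.

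Combining the two bounds gives the claim. I do not anticipate any genuine obstacle here; the statement is essentially a structural accounting of how the clique and cycle structures forced by the construction interact. The only point warranting care is confirming vertex-disjointness of the $4$-cycles, which follows immediately from the naming scheme of the bit-nodes and the fact that each cycle uses a single value of $h$ and $\ell$.
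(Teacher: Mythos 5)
Your proof is correct and follows essentially the same approach as the paper: counting $k-1$ nodes per clique and two nodes per $4$-cycle, summed over the $2\log k$ disjoint cycles. The only addition is your explicit verification that the $4$-cycles are vertex-disjoint, which the paper leaves implicit.
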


\begin{proof}
In order to cover all the edges of each of the cliques in $A_1,A_2,B_1$ and $B_2$, any vertex cover must contain at least $k-1$ nodes of each clique.
For each $h\in\set{0,\ldots, \log k -1}$ and $\ell\in \{1,2\}$,
in order to cover the edges of the 4-cycle $(f^h_{A_\ell},t^h_{A_\ell},f^h_{B_\ell},t^h_{B_\ell})$,
any vertex cover must contain at least two of the cycle nodes.
\end{proof}

\begin{claim}
\label{claim:VC relating AB}
If $U\subseteq V$ is a vertex cover of $G$ of size $4(k-1)+4\log k$,
then there are two indices $i,j\in\set{0,\ldots,k-1}$
such that all four nodes $a^i_1,a^j_2,b^i_1,b^j_2$ are not in $U$.
\end{claim}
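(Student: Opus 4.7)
The plan is to exploit tightness in the size bound: Claim~\ref{ObSize} already shows $|U|\geq 4(k-1)+4\log k$, so when equality holds, $U$ must meet each lower bound exactly. In particular, $U$ misses exactly one node from each of the four cliques; write these missing nodes as $a^{i_1}_1, b^{j_1}_1\in V\setminus U$ and $a^{i_2}_2, b^{j_2}_2\in V\setminus U$. It also contains exactly two nodes from each 4-cycle $(f^h_{A_\ell},t^h_{A_\ell},f^h_{B_\ell},t^h_{B_\ell})$. The goal is to show $i_1=j_1$ and $i_2=j_2$; by symmetry it suffices to handle $\ell=1$.

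First I would classify which size-2 subsets of a 4-cycle are vertex covers: the only possibilities are the two \emph{opposite} pairs, since any two adjacent vertices of a 4-cycle fail to cover the opposite edge. Applied to our gadget, this means that for every $h$, the two bit-nodes of $\set{f^h_{A_1},t^h_{A_1},f^h_{B_1},t^h_{B_1}}$ that lie in $U$ are either $\set{f^h_{A_1}, f^h_{B_1}}$ or $\set{t^h_{A_1}, t^h_{B_1}}$.

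Next, since $a^{i_1}_1\notin U$ and $U$ is a vertex cover, every neighbor of $a^{i_1}_1$ must be in $U$; in particular $\bin(a^{i_1}_1)\subseteq U$. Analogously $\bin(b^{j_1}_1)\subseteq U$. Now I compare these two inclusions bit by bit. Fix $h$: if the missing pair of the $h$-th 4-cycle is $\set{t^h_{A_1},t^h_{B_1}}$, then $t^h_{A_1}\notin U$ forces $t^h_{A_1}\notin\bin(a^{i_1}_1)$, which by the definition of $\bin$ means $(i_1)_h=0$; likewise $t^h_{B_1}\notin U$ forces $(j_1)_h=0$. The symmetric case with the missing pair $\set{f^h_{A_1},f^h_{B_1}}$ forces $(i_1)_h=(j_1)_h=1$. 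Either way, $(i_1)_h=(j_1)_h$.

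Since this holds for all $h\in\set{0,\ldots,\log k-1}$, the binary expansions of $i_1$ and $j_1$ coincide, so $i_1=j_1=:i$. Repeating the argument on the second bit-gadget gives $i_2=j_2=:j$, and the four nodes $a^i_1,b^i_1,a^j_2,b^j_2$ are simultaneously absent from $U$, as claimed. I do not anticipate a serious obstacle: the only non-routine step is recognizing that the tightness of the size bound, combined with the unique structure of minimum vertex covers of a 4-cycle, lets each bit of the missing $A_\ell$-index be read off directly from the bit-gadget and matched with the corresponding bit of the missing $B_\ell$-index.
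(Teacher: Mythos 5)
Your proof is correct and uses essentially the same approach as the paper's: exploit tightness of the bound from Claim~\ref{ObSize} to pin down exactly one missing node per clique and exactly two bit-nodes per $4$-cycle, observe that $\bin(\cdot)$ of each missing clique node must lie entirely in $U$, and compare bit-by-bit to force $i=i'$ and $j=j'$. Your explicit classification of the $2$-node vertex covers of a $4$-cycle as opposite pairs only makes more transparent a counting step that the paper's proof leaves implicit; the argument is the same in substance.
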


\begin{proof}
By Claim~\ref{ObSize},
$U$ must contain $k-1$ nodes from each clique $A_1,A_2,B_1$ and $B_2$, and $4\log k$ bit-nodes,
so in each clique there is a node that is not in $U$.
Let $a^i_1,a^j_2,b^{i'}_1,b^{j'}_2$ be the nodes in $A_1, A_2,B_1,B_2$ which are not in $U$, respectively.
To cover the edges connecting $a^i_1$ to $\bin(a^i_1)$, $U$ must contain all the nodes of $\bin(a^i_1)$, and similarly, $U$ must contain all the nodes of $\bin(b^{i'}_1)$.
If $i\neq i'$ then there is an index $h\in\set{0,\ldots,\log k-1}$ such that $i_h\neq i'_h$,
so one of the edges $(f^h_{A_1},t^h_{B_1})$ or $(t^h_{A_1},f^h_{B_1})$ is not covered by $U$.
Thus, it must hold that $i=i'$.
A similar argument shows $j=j'$.
\end{proof}

\noindent\textbf{Adding edges corresponding to the strings $x$ and $y$:}
Given two binary strings $x,y\in\set{0,1}^{k^2}$,
assume they are indexed by pairs of the form $(i,j)\in \set{0,\ldots,k-1}^2$.
To define $G_{x,y}$, augment the graph $G$ defined above with edges
as following.
For each such pair $(i,j)$,
if $x[i,j]=0$, then add an edge between $a_1^i$ and $a_2^j$, and
if $y[i,j]=0$ then we add an edge between $b_1^i$ and $b_2^j$.
To prove that $\set{G_{xy}}$ is a family of lower bound graphs,
it remains to prove the following lemma.

\begin{lemma}\label{mainLemmaVC}
The graph $G_{x,y}$ has a vertex cover of cardinality $M=4(k-1)+4\log k$
iff $x$ and $y$ are not disjoint.
\end{lemma}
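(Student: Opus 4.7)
The plan is to prove the two directions separately, leveraging the tight counting lemmas (Claim~\ref{ObSize} and Claim~\ref{claim:VC relating AB}) that were already established. The key observation is that the target size $M = 4(k-1) + 4\log k$ is exactly the sum of the lower bounds in Claim~\ref{ObSize}, so any vertex cover of size $M$ is forced to be extremely structured.

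For the ``only if'' direction, I would assume $U$ is a vertex cover of $G_{x,y}$ with $|U|=M$. Claim~\ref{ObSize} contributes $4(k-1)$ clique nodes and $4\log k$ bit-nodes, accounting for all of $M$, so $U$ contains \emph{exactly} $k-1$ nodes from each of $A_1,A_2,B_1,B_2$ and \emph{exactly} $4\log k$ bit-nodes. Let $a_1^i, a_2^j, b_1^{i'}, b_2^{j'}$ be the four clique nodes not in $U$. Claim~\ref{claim:VC relating AB} immediately gives $i=i'$ and $j=j'$. Now the edges $(a_1^i,a_2^j)$ and $(b_1^i,b_2^j)$, if present, would be uncovered; by the construction rule this means $x[i,j]=1$ and $y[i,j]=1$, so $x$ and $y$ are not disjoint.

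For the ``if'' direction, given $(i,j)$ with $x[i,j]=y[i,j]=1$, I would exhibit the explicit cover
\[
U := \bigcup_{\ell\in\{1,2\}}\Big((A_\ell \setminus\{a_\ell^{i_\ell}\}) \cup (B_\ell\setminus\{b_\ell^{i_\ell}\}) \cup \bin(a_\ell^{i_\ell}) \cup \bin(b_\ell^{i_\ell})\Big),
\]
where $i_1 := i$ and $i_2 := j$. Its size is $4(k-1)+4\log k = M$. I would then verify coverage by edge class: the four clique edge sets are covered by the $k-1$ chosen clique nodes each; the edges from $a_\ell^{i_\ell}$ and $b_\ell^{i_\ell}$ into their bit-gadgets go to vertices of $\bin(\cdot) \subseteq U$; edges from any other clique node are covered by the clique node itself; and the conditional $x$-edge $(a_1^{i'},a_2^{j'})$ is either absent (when $(i',j')=(i,j)$, by assumption $x[i,j]=1$) or covered because $i'\neq i$ or $j'\neq j$ puts an endpoint in the clique-part of $U$, and symmetrically for $y$-edges.

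The one slightly delicate step — which I anticipate as the main thing to check rather than a real obstacle — is verifying that each $4$-cycle $(f^h_{A_\ell}, t^h_{A_\ell}, f^h_{B_\ell}, t^h_{B_\ell})$ is fully covered. The two members of $\bin(a_\ell^{i_\ell})\cup\bin(b_\ell^{i_\ell})$ on this cycle are $\{f_{A_\ell}^h, f_{B_\ell}^h\}$ if the $h$-th bit of $i_\ell$ is $0$, and $\{t_{A_\ell}^h, t_{B_\ell}^h\}$ otherwise; in either case these two opposite cycle vertices together dominate all four cycle edges, so coverage holds. This is exactly the place where using the \emph{same} index $i_\ell$ on the $A$- and $B$-side (as forced in the other direction by Claim~\ref{claim:VC relating AB}) is essential.
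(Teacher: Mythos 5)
Your proposal is correct and is essentially the paper's own proof, just with the two directions presented in the opposite order: the explicit cover $U$ you exhibit is identical to the one in the text, and the reverse direction uses Claim~\ref{claim:VC relating AB} exactly as the paper does. The only cosmetic omission is that you apply the two counting claims (which are stated for $G$) to a cover of $G_{x,y}$ without explicitly noting that $E(G)\subseteq E_{x,y}$ makes any cover of $G_{x,y}$ also a cover of $G$ — a one-line remark the paper includes.
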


\begin{proof}
For the first implication, assume that $x$ and $y$ are not disjoint,
and let $i,j\in\set{0,\ldots,k-1}$ be such that $x[i,j]=y[i,j]=1$.
Note that in this case $a^i_1$ is not connected to $a^j_2$,
and $b^i_1$ is not connected to $b^j_2$.
Define a set $U \subseteq V$ as
\begin{align*}
U:=&(A_1\setminus \{a^i_1\})\cup (A_2\setminus \{a^j_2\})\cup (B_1\setminus \{b^i_1\})\cup (B_2\setminus \{b^j_2\})\\
& \cup \bin(a^i_1)\cup \bin(a^j_2)\cup\bin(b^i_1)\cup\bin(b^j_2)
\end{align*}
and show that $U$ is a vertex cover of $G_{x,y}$, as follows.

First, $U$ covers all the edges inside the cliques $A_1,A_2,B_1$ and $B_2$,
as it contains $k-1$ nodes from each clique.
These nodes also cover all the edges connecting nodes in $A_1$ to nodes in $A_2$ and all the edges connecting nodes in $B_1$ to nodes in $B_2$,
since the edges $(a^i_1,a^j_2)$ and $(b^i_1,b^j_2)$ do not exist in $G_{x,y}$.
Furthermore, $U$ covers any edge connecting nodes
$(A_1\setminus \{a^i_1\})\cup (A_2\setminus \{a^j_2\})\cup (B_1\setminus \{b^i_1\})\cup (B_2\setminus \{b^j_2\})$ to the bit-gadgets.
For each node $s\in \set{a^i_1,a^j_2,b^i_1,b^j_2}$,
the nodes $\bin(s)$ are in $U$,
so $U$ also covers the edges connecting $s$ to the bit-gadget.
Finally, $U$ covers all the edges inside the bit-gadgets,
as from each $4$-cycle
$(f^h_{A_\ell},t^h_{A_\ell},f^h_{B_\ell},t^h_{B_\ell})$
it contains two non-adjacent nodes:
if $i_h=0$ then $f^h_{A_1},f^h_{B_1} \in U$
and otherwise $t^h_{A_1},t^h_{B_1} \in U$,
and
if $j_h=0$ then $f^h_{A_2},f^h_{B_2} \in U$
and otherwise $t^h_{A_2},t^h_{B_2} \in U$.
Thus, $U$ is a vertex cover of size $4(k-1)+4\log k$, as claimed.
	
For the other implication, let $U\subseteq V$ be a vertex cover of $G_{x,y}$ of size $4(k-1)+4\log k$.
Since all the edges of $G$ are also edges of $G_{x,y}$,
$U$ is also a cover of $G$,
so Claim~\ref{claim:VC relating AB}
implies that there are indices $i,j\in\set{0,\ldots,k-1}$
such that $a^i_1,a^j_2,b^i_1,b^j_2$ are not in $U$.
Since $U$ is a cover, the graph does not contain the edges
$(a^i_1, a^j_2)$ and $(b^i_1, b^j_2)$,
so we conclude that $x[i,j]=y[i,j] =1$,
which implies that $x$ and $y$ are not disjoint.
\end{proof}

Having constructed the family of lower bound graphs, we are now ready to prove Theorem~\ref{thm:VC}.
\begin{proofof}{Theorem~\ref{thm:VC}} 	
Note that $n\in \Theta(k)$, and thus $K=|x|=|y|=\Theta(n^2)$,
and furthermore, the only edges in the cut $E(V_A, V_B)$
are the edges between nodes in
$F_{A_1} \cup T_{A_1}\cup F_{A_2} \cup T_{A_2}$ and nodes in $F_{B_1} \cup T_{B_1}\cup F_{B_2} \cup T_{B_2}$,
which are in total $\Theta(\log n)$ edges.
Since Lemma~\ref{mainLemmaVC} shows that $\{G_{x,y}\}$ is a family of lower bound graphs, we can apply Theorem~\ref{thm: general lb framework} to deduce that any algorithm in the \cgst{} model for deciding whether a given graph has a cover of cardinality $M=4(k-1)+4\log k$ requires at least $\Omega(K/\log^2(n))=\Omega(n^2/\log^2(n))$ rounds.
\end{proofof}

\subsection{Graph Coloring}
\label{sec: coloring}
In this section we consider the problems of
coloring a graph with $\chi$ colors,
computing $\chi$
and approximating it.
We prove the following theorem.

\begin{theorem-repeat}{thm: 3-coloring lb}
\ThmColoring
\end{theorem-repeat}

~\\
\noindent\textbf{The fixed graph construction:}
Define $G=(V,E)$ as follows (see Figure~\ref{fig: coloring} for the general construction,
and Figure~\ref{fig: 3col colored} for an example with specific $k$ and inputs).
\begin{figure}[t]
	\begin{center}
		\includegraphics[
		trim=2cm 16cm 2.5cm 2cm,clip]{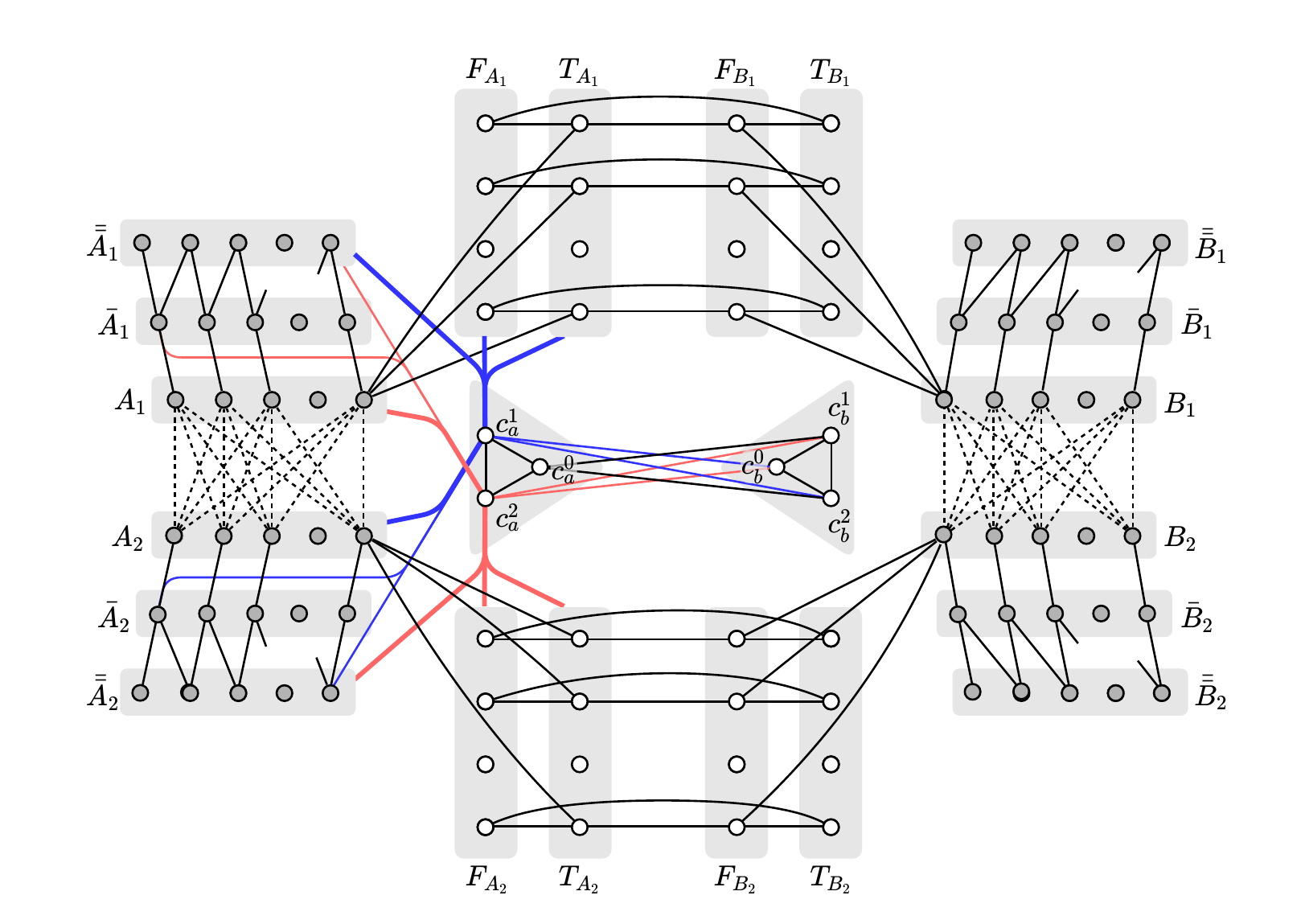}
	\end{center}
	\caption{Lower bound graph for 3-coloring.
		Heavy edges represent a set of edges connecting a node to a set of nodes.
		Edges connecting $c^1_b$ and $c^2_b$ to other nodes in $V_B$ are omitted.}
	\label{fig: coloring}
\end{figure}
Start with the family of graphs defined in Section~\ref{sec:mvc},
omitting the clique edges inside the four sets $A_1,A_2,B_1,B_2$.
Add the following two gadgets to the graph.
\begin{enumerate}
    \item Add three nodes $c^0_a, c^1_a, c^2_a$ connected as a triangle,
    another three nodes $c^0_b, c^1_b, c^2_b$ connected as a triangle,
    and edges connecting $c^i_a$ to $c^j_b$ for each $i\neq j \in\set{0,1,2}$.

    \item For each set $S \in \set{A_1,A_2,B_1,B_2}$, add two sets of nodes,
    $\singlebar S = \set{\singlebar s^i_{\ell}\mid s^i_\ell \in S}$ and
    $\doublebar S = \set{\doublebar s^i_{\ell}\mid s^i_\ell \in S}$.
        For each $\ell\in\set{1,2}$ and $i\in\set{0,\ldots,k-1}$
        connect a path $(s^i_{\ell}, \singlebar s^i_{\ell}, \doublebar s^i_{\ell})$, and for each $\ell\in\set{1,2}$ and $i\in\set{0,\ldots,n-2}$, connect $\doublebar s^i_{\ell}$ to $\singlebar s^{i+1}_{\ell}$.
\end{enumerate}
In addition, connect the gadgets by the following edges:
\begin{enumerate}[label=(\alph*)]
	\item
	    $(c_a^1,f^h_{A_1})$, $(c_a^1,t^h_{A_1})$,
	    $(c_b^1,f^h_{B_1})$ and $(c_b^1,t^h_{B_1})$,
	    for each $h\in \set{0,\ldots, \log k -1}$.
	\item
		$(c_a^2,f^h_{A_2})$, $(c_a^2,t^h_{A_2})$,
		$(c_b^2,f^h_{B_2})$ and $(c_b^2,t^h_{B_2})$,
		for each $h\in \set{0,\ldots, \log k -1}$.
    \item $(c^2_a, a_1^i)$ and $(c^1_a, \doublebar a_{1}^i)$,
	    for each $i\in \set{0,\ldots,k-1}$;
        $(c^2_a, \singlebar a^0_{1})$ and $(c^2_a, \doublebar a^{k-1}_{1})$.
    \item $(c^2_b, b_1^i)$ and $(c^1_b, \doublebar b_{1}^i)$,
        for each $i\in \set{0,\ldots,k-1}$;
        $(c^2_b, \singlebar b^0_{1})$ and $(c^2_b, \doublebar b^{k-1}_{1})$.
    \item $(c^1_a, a_2^i)$ and $(c^2_a, \doublebar a_{2}^i)$,
        for each $i\in \set{0,\ldots,k-1}$;
        $(c^1_a, \singlebar a^0_{2})$ and $(c^1_a, \doublebar a^{k-1}_{2})$.
    \item $(c^1_b, b_2^i)$ and $(c^2_b, \doublebar b_{2}^i)$,
    for each $i\in \set{0,\ldots,k-1}$;
    $(c^1_b, \singlebar b^0_{2})$ and $(c^1_b, \doublebar b^{k-1}_{2})$.
\end{enumerate}

Assume there is a proper $3$-coloring of $G$.
Denote by $c_0,c_1$ and $c_2$ the colors of $c^0_a,c^1_a$ and $c^2_a$ respectively.
By construction, these are also the colors of $c^0_b,c^1_b$ and $c^2_b$, respectively.
In Section~\ref{sec:mvc} we present a specific vertex cover, and mention that its complement is an independent set.
In the current section, this independent set is colored by $c_0$,
and the part of the graph that did not appear in the previous section is used in order to guarantee that coloring this independent set by $c_0$ is the only valid option.
The following claims are thus very similar to those appearing in the previous section.

\begin{claim}
\label{claim: 3col c0 exists}
In each set $S\in \{A_1,A_2,B_1,B_2\}$, at least one node is colored by $c_0$.
\end{claim}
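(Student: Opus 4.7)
The plan is to prove the claim by contradiction, exploiting the long subdivision gadget attached to each of $A_1, A_2, B_1, B_2$ together with a parity argument.

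Before touching any $a^i_\ell$ or $b^i_\ell$, I would first argue that the two triangles on $(c^0_a, c^1_a, c^2_a)$ and $(c^0_b, c^1_b, c^2_b)$, joined by the edges $(c^i_a, c^j_b)$ for all $i\neq j$, force $c^i_b$ to receive the same color $c_i$ as $c^i_a$ for every $i\in\set{0,1,2}$: $c^i_b$ is adjacent to $c^{j}_a$ for both $j\neq i$, so it must be colored $c_i$. This is the pivotal structural fact that lets the $B$-side inherit the color names fixed on the $A$-side.

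Now suppose for contradiction that some $S\in\set{A_1,A_2,B_1,B_2}$ contains no $c_0$-colored node; I would handle $S=A_1$ in detail, as the other three cases are symmetric modulo swapping $c_1\leftrightarrow c_2$ or the roles of $A$ and $B$. Every $a^i_1$ is adjacent to $c^2_a$ by edge set (c), so $a^i_1\neq c_2$; combined with the assumption $a^i_1\neq c_0$, this forces $a^i_1=c_1$ for all $i$. Next I would look at the attached path
\[
P=(\singlebar a^0_1,\doublebar a^0_1,\singlebar a^1_1,\doublebar a^1_1,\ldots,\singlebar a^{k-1}_1,\doublebar a^{k-1}_1),
\]
which has $2k$ vertices. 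Each $\singlebar a^i_1$ is adjacent to $a^i_1$ (colored $c_1$), and each $\doublebar a^i_1$ is adjacent to $c^1_a$ (colored $c_1$) by edge set (c); hence every vertex of $P$ takes a color in $\set{c_0,c_2}$, and consecutive ones differ, so the coloring along $P$ strictly alternates between $c_0$ and $c_2$. Finally, the endpoints $\singlebar a^0_1$ and $\doublebar a^{k-1}_1$ are both joined to $c^2_a$ (colored $c_2$) by the extra edges in (c), so both must equal $c_0$. But a strict $\set{c_0,c_2}$-alternation on a path of $2k$ vertices gives opposite colors to the first and last vertex, contradicting $\singlebar a^0_1=\doublebar a^{k-1}_1=c_0$.

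The step I expect to be the main obstacle is not the parity argument itself, which is short, but the careful bookkeeping of which color is "forbidden" on each set: for $A_1$ it is $c_2$ via $c^2_a$, for $A_2$ it is $c_1$ via $c^1_a$, and the corresponding $B$ cases are mirrored through the color-identification of the two triangles established at the beginning. Once the forbidden color and the endpoint-pinning color are tracked through edge sets (c)--(f) for the four symmetric cases, the same even-length-path contradiction goes through verbatim in each.
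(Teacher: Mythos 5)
Your argument is correct and follows essentially the same route as the paper: force every node of $A_1$ to the color $c_1$ (since $c_0$ is excluded by assumption and $c_2$ is forbidden via $c^2_a$), observe that the $2k$-vertex path through $\singlebar A_1\cup\doublebar A_1$ must then $\{c_0,c_2\}$-alternate (using the pendant edges to $a^i_1$ and $c^1_a$), and derive a contradiction from the fact that both endpoints are pinned to $c_0$ by their edges to $c^2_a$. The only addition over the paper's proof is your explicit preliminary check that the two triangles force $c^i_b$ to take the same color as $c^i_a$, which the paper states but does not spell out; this is a harmless and correct elaboration, not a different approach.
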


\begin{figure}
	\begin{center}
		\begin{subfigure}[t]{0.48\textwidth}
		\begin{center}
			\includegraphics[
			trim=3cm 2.5cm 4cm 4cm,clip]{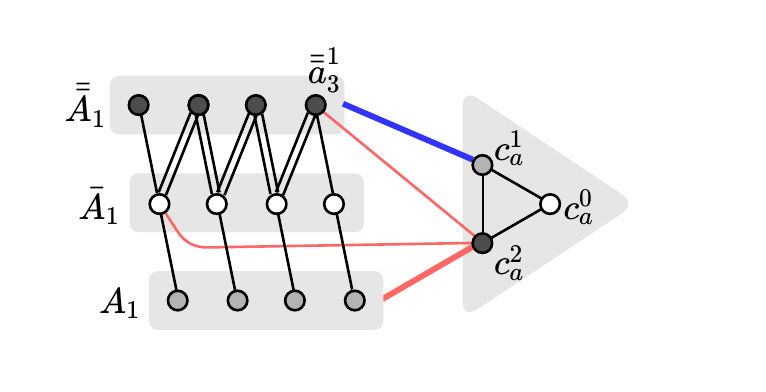}
		\end{center}
		\caption{A failed attempt to color $A_1$ without using $c_0$.
			The path $\bar{a}_1^0,\ldots, \doublebar{a}^1_3$ (marked by a double line) must be colored alternately with $c_0$ and $c_2$,
			and so the edge $(\doublebar{a}^1_3,c^1_a)$ is violated.}
		\end{subfigure}
		\begin{subfigure}[t]{0.48\textwidth}
		\begin{center}
			\includegraphics[
			trim=3cm 2.5cm 4cm 4cm,clip]{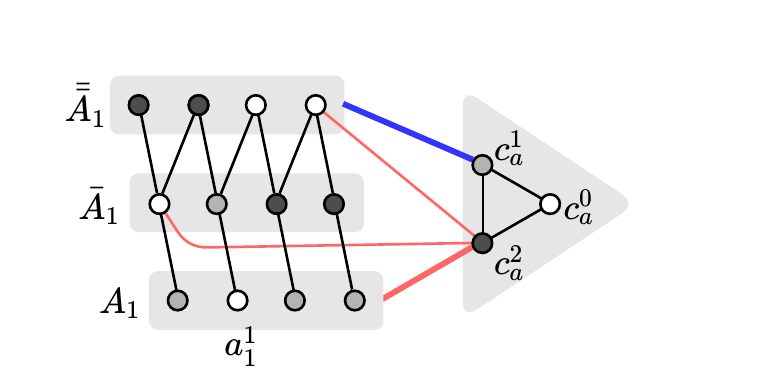}
		\end{center}
		\caption{A coloring of $A_1$ with $a^1_1$ colored $c_0$}
		\end{subfigure}
	\end{center}
	\caption{Part of the 3-coloring lower bound graph for $k=4$,
		which assures that at least one node in $A_1$ is colored by $c_0$
		(Proof of Claim~\ref{claim: 3col c0 exists})}
	\label{fig: 3col c0 exists}
\end{figure}

\begin{proof}
We start by proving the claim for $S=A_1$ (see figure~\ref{fig: 3col c0 exists}).
Assume, towards a contradiction, that none of the nodes of $A_1$ is colored by $c_0$.
All these nodes are connected to $c^2_a$, so they are not colored by $c_2$ either,
i.e., they all must be colored $c_1$.
Hence, all the nodes $\singlebar A_1$
are colored by $c_0$ and $c_2$.
The nodes of $\doublebar A_{1}$
are connected to $c^1_a$,
so they are colored by $c_0$ and $c_2$ as well.

Hence, we have a path $(\singlebar a^0_{1}, \doublebar a^0_{1}, \singlebar a^1_{1}, \doublebar a^1_{1},\ldots \singlebar a^{k-1}_{1}, \doublebar a^{k-1}_{1})$
with an even number of nodes,
starting in $\singlebar a^0_{1}$ and ending in $\doublebar a^{k-1}_{1}$.
The colors of this path must alternate between $c_0$ and $c_2$,
but both its endpoint are connected to $c^2_a$,
so they must both be colored $c_0$, a contradiction.

A similar proof shows the claim for $S=B_1$.
For $S\in\set{A_2,B_2}$, we use a similar argument but change the roles of $c_1$ and $c_2$.
\end{proof}

\begin{claim}
\label{claim: 3col relating alices and bobs c0 nodes}
For all $i,j\in \set{0,\ldots, k-1}$,
the node $a_1^i$ is colored by $c_0$ iff $b_1^i$ is colored by $c_0$ and
the node $a_2^j$ is colored by $c_0$ iff $b_2^j$ is colored by $c_0$.
\end{claim}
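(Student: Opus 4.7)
The plan is to show that if $a_1^i$ is colored $c_0$, then the coloring of the bit-gadget between $A_1$ and $B_1$ is essentially forced, so that the only node of $B_1$ that can be colored $c_0$ is $b_1^i$. Since Claim~\ref{claim: 3col c0 exists} guarantees at least one $c_0$-colored node in $B_1$, this suffices. The reverse direction is symmetric in $A_1$ and $B_1$, and the $A_2, B_2$ case is analogous after interchanging the roles of $c_1$ and $c_2$, because those sets hang off the centers through the opposite color (cf.\ edges (b), (e), (f) in the construction).

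First I would use that every node in $F_{A_1}\cup T_{A_1}\cup F_{B_1}\cup T_{B_1}$ is adjacent to $c_a^1$ or to $c_b^1$, both of which are colored $c_1$. Hence none of these bit-nodes can be colored $c_1$, so they use only the colors $c_0$ and $c_2$. Now, if $a_1^i$ is colored $c_0$, then each node of $\bin(a_1^i)$ is forbidden from $c_0$ (by adjacency to $a_1^i$) and from $c_1$ (by adjacency to $c_a^1$), and is therefore colored $c_2$.

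Next I would propagate the coloring around each 4-cycle $(f_{A_1}^h, t_{A_1}^h, f_{B_1}^h, t_{B_1}^h)$. Exactly one of $f_{A_1}^h, t_{A_1}^h$ lies in $\bin(a_1^i)$ and so is colored $c_2$; the other, adjacent both to this $c_2$-node and to $c_a^1$, is forced to $c_0$. The $B_1$-bit adjacent to this newly $c_0$-colored $A_1$-bit cannot be $c_0$ (4-cycle edge) nor $c_1$, so it is $c_2$; the final $B_1$-bit is then forced to $c_0$. A one-line case split on $i_h\in\set{0,1}$ shows that the unique $B_1$-bit colored $c_2$ at level $h$ is precisely the one lying in $\bin(b_1^i)$. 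Summing over $h$, the $c_2$-colored nodes inside $F_{B_1}\cup T_{B_1}$ form exactly $\bin(b_1^i)$, with one per 4-cycle. Now apply Claim~\ref{claim: 3col c0 exists} to get some $b_1^j$ colored $c_0$; the same opening argument forces $\bin(b_1^j)$ to be colored $c_2$, and comparing this with the set just identified yields $\bin(b_1^j)=\bin(b_1^i)$ and hence $j=i$.

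I expect the main bookkeeping to lie in making the 4-cycle case analysis transparent, but because each such cycle has length~$4$ and $c_1$ is locally forbidden on every bit-node, there is essentially only one bit of freedom per level $h$, namely whether the $c_2$-colored node in that cycle is the $f$-bit or the $t$-bit. That bit is pinned by $i_h$ on the $A_1$-side, and the 4-cycle propagation copies it to the $B_1$-side. The $A_2/B_2$ half of the claim is handled by the same argument after swapping the roles of $c_1$ and $c_2$ throughout.
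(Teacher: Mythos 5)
Your proof is correct and takes essentially the same route as the paper's: both force $\bin(a_1^i)$, and (via Claim~\ref{claim: 3col c0 exists}) $\bin(b_1^{i'})$ for some $c_0$-colored $b_1^{i'}\in B_1$, to be colored $c_2$ because $c_0$ and $c_1$ are locally forbidden, and then use the cross edges of the bit-gadget to conclude $i=i'$. Your additional step of propagating the coloring around each $4$-cycle makes the full bit-gadget coloring explicit, which is a little more than necessary; the paper jumps directly to the cross-edge contradiction at a level $h$ with $i_h\neq i'_h$.
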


\begin{figure}[t]
	\begin{center}
		\begin{subfigure}[t]{\textwidth}
			\begin{center}
				\includegraphics[
				trim=4cm 6cm 4cm 4cm,clip]{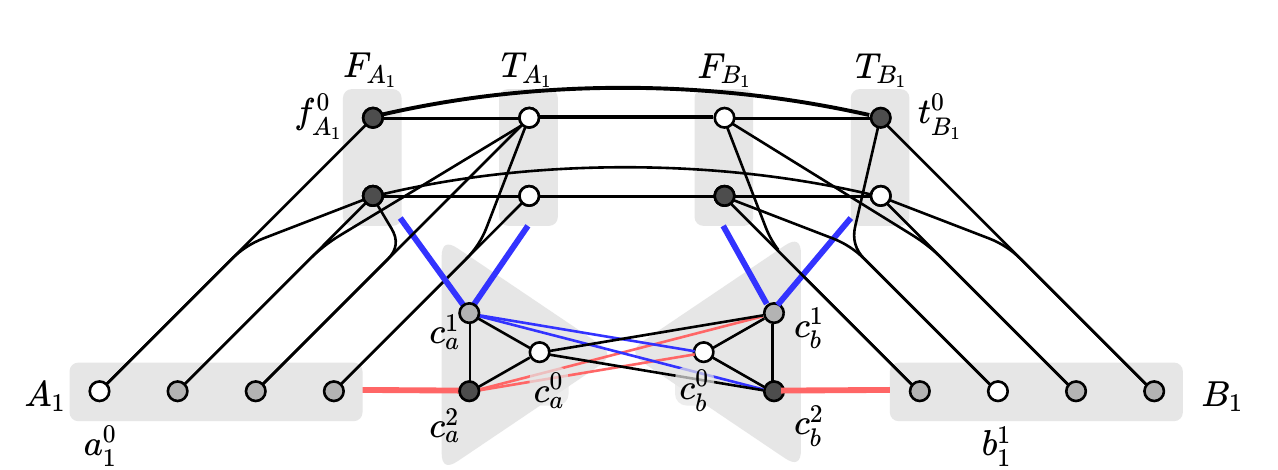}
			\end{center}
			\caption{A failed attempt to color only $a^0_1$ and $b^1_1$ by $c_0$ leads to the violation of the edge $(f^0_{A_1},t^0_{B_1})$}
		\end{subfigure}
		\begin{subfigure}[t]{\textwidth}
			\begin{center}
				\includegraphics[
				trim=4cm 6cm 4cm 4cm,clip]{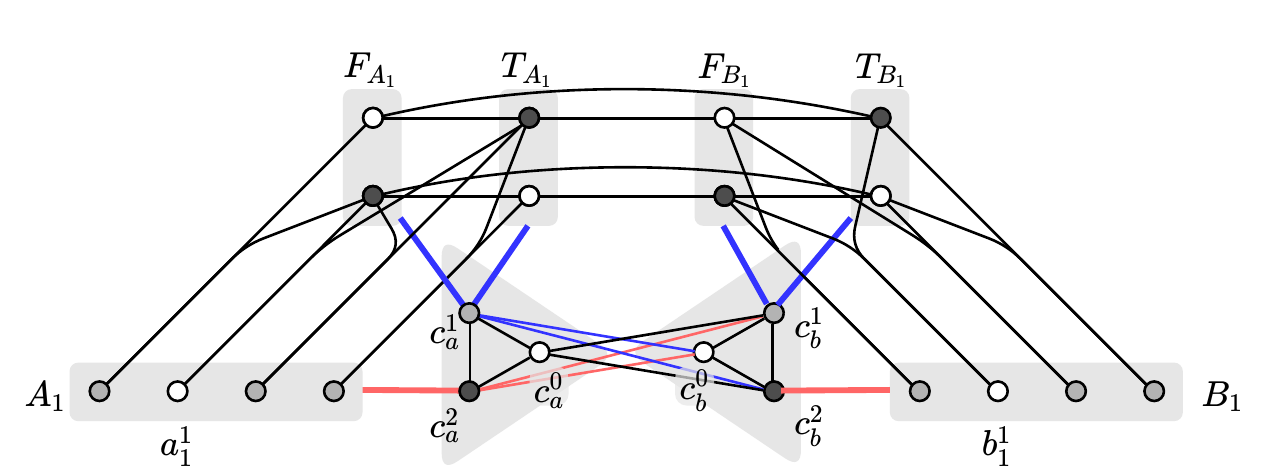}
			\end{center}
			\caption{A valid coloring with $a^1_1$ and $b^1_1$ colored $c_0$}
		\end{subfigure}
	\end{center}
	\caption{Part of the 3-coloring lower bound graph for $k=4$,
		which assures that if $a^i_1$ is colored $c_0$ then so does $b^i_1$
		(proof of Claim~\ref{claim: 3col relating alices and bobs c0 nodes})}
\label{fig: 3col relating alices and bobs c0 nodes}
\end{figure}

\begin{proof}
Assume $a_1^i$ is colored by $c_0$,
so all of its adjacent nodes $\bin(a_1^i)$ can only be colored by $c_1$ or $c_2$ (see Figure~\ref{fig: 3col relating alices and bobs c0 nodes}).
As all of these nodes are connected to $c_a^1$, they must be colored by $c_2$.
By Claim~\ref{claim: 3col c0 exists}, there is a node in $b_1^{i'}\in B_1$ that is colored by $c_0$,
and by a similar argument
the nodes $\bin(b_1^{i'})$ must also be colored by $c_2$.

If $i\neq i'$ then there is a bit $h$ such that $i_h\neq i'_h$, and there must be a pair of neighboring nodes $(f^h_{A_1},t^h_{B_1})$ or $(t^h_{A_1},f^h_{B_1})$ which are both colored by $c_2$, a contradiction.
Thus, the only option is $i=i'$.

An analogous argument shows that if $b_1^i$ is colored by $c_0$, then so does $a_1^i$.
For $a_2^j$ and $b_2^j$, similar arguments apply, where $c_1$ plays the role of $c_2$.
\end{proof}

~\\
\noindent\textbf{Adding edges corresponding to the strings $x$ and $y$:}
To get $G_{x,y}$ from $G$, add edges exactly as in the previous section:
if $x[i,j]=0$ then add $(a_1^i,a_2^j)$, and if $y[i,j]=0$ then add $(b_1^i,b_2^j)$.
The following lemma proves that $\set{G_{x,y}}$ is a family of lower bound graphs.

\begin{lemma}
\label{lemma:threefamily}
The graph $G_{x,y}$ is $3$-colorable iff $x$ and $y$ are not disjoint.
\end{lemma}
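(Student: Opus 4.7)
The plan is to prove both directions of the iff, working with a 3-coloring by colors $c_0, c_1, c_2$ which, by the triangle-plus-cross-edges among $c^0_a, c^1_a, c^2_a, c^0_b, c^1_b, c^2_b$, are forced (up to renaming) so that $c^r_a$ and $c^r_b$ both receive color $c_r$ for $r \in \{0,1,2\}$.

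For the forward direction (3-colorable $\Rightarrow$ not disjoint), I would apply Claim~\ref{claim: 3col c0 exists} to obtain a $c_0$-colored node $a^i_1 \in A_1$ and a $c_0$-colored node $a^j_2 \in A_2$. Claim~\ref{claim: 3col relating alices and bobs c0 nodes} then forces $b^i_1$ and $b^j_2$ to be colored $c_0$ as well. Since the coloring is proper, the edges $(a^i_1, a^j_2)$ and $(b^i_1, b^j_2)$ cannot exist in $G_{x,y}$, which by construction means $x[i,j] = 1$ and $y[i,j] = 1$; hence $x, y$ are not disjoint.

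For the reverse direction (not disjoint $\Rightarrow$ 3-colorable), fix $i, j$ with $x[i,j] = y[i,j] = 1$ (so the edges $(a^i_1, a^j_2)$ and $(b^i_1, b^j_2)$ are absent), and build an explicit coloring. Color $a^i_1 = b^i_1 = c_0$ and every other node of $A_1 \cup B_1$ with $c_1$; color $a^j_2 = b^j_2 = c_0$ and every other node of $A_2 \cup B_2$ with $c_2$. For the $\ell = 1$ bit-gadget (all of whose nodes are forbidden to be $c_1$), give the 4-cycle $(f^h_{A_1}, t^h_{A_1}, f^h_{B_1}, t^h_{B_1})$ the proper 2-coloring in $\{c_0, c_2\}$ determined by $i_h$: the node in $\bin(a^i_1) \cup \bin(b^i_1)$ gets $c_2$, the other two get $c_0$. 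Do the analogous thing for the $\ell = 2$ bit-gadget using $\{c_0, c_1\}$ and index $j$. A short case check then verifies every edge: the $c$-triangle and cross edges are fine; the edges from $c^1_a, c^1_b, c^2_a, c^2_b$ to $A_\ell \cup B_\ell$ and to the appropriate bit-gadget are fine because each $A_1, B_1$ node is in $\{c_0, c_1\}$ and each $A_2, B_2$ node is in $\{c_0, c_2\}$; input-dependent edges $(a^{i'}_1, a^{j'}_2)$ (present when $x[i',j']=0$) avoid monochromaticity because $(i',j') \neq (i,j)$ guarantees the endpoints land in different color classes, and symmetrically for $B$; and the bit-edges from $a^i_1, b^i_1$ (colored $c_0$) to their $\bin$-sets (colored $c_2$) are proper, while edges from any other $a^{i'}_1, b^{i'}_1$ (colored $c_1$) to bit-nodes in $\{c_0, c_2\}$ are automatically proper.

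The main obstacle is extending this coloring over the auxiliary chains $\singlebar S, \doublebar S$. For $S = A_1$ the path $\singlebar a^0_1 - \doublebar a^0_1 - \singlebar a^1_1 - \ldots - \doublebar a^{k-1}_1$ has $2k$ nodes; the edges to $c^1_a$ force $\doublebar a^{i'}_1 \in \{c_0, c_2\}$, and the edges to $c^2_a$ force $\doublebar a^{k-1}_1 = c_0$ and $\singlebar a^0_1 \neq c_2$. A naive alternation in $\{c_0, c_2\}$ along a path of even length cannot have both endpoints be $c_0$, so we exploit the node $\singlebar a^i_1$: since $a^i_1 = c_0$, the constraint $\singlebar a^i_1 \in \{c_1, c_2\}$ allows us to place the color $c_1$ at position $2i$, which serves as a ``parity-break.'' A three-case analysis ($i = 0$, $0 < i < k-1$, $i = k-1$) then chooses $\singlebar a^0_1 \in \{c_0, c_1\}$ appropriately and alternates $\{c_0, c_2\}$ on each side of position $2i$, terminating at $\doublebar a^{k-1}_1 = c_0$ as required (the right-hand segment has length $2(k-1-i)+1$, an odd number of nodes, so starting and ending with $c_0$ works). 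The same analysis applies verbatim to $B_1$, and to $A_2, B_2$ after interchanging the roles of $c_1$ and $c_2$. Combining these local extensions with the colorings above yields a proper 3-coloring of $G_{x,y}$, completing the proof.
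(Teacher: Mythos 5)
Your proof is correct and takes essentially the same route as the paper: the forward direction invokes Claims~\ref{claim: 3col c0 exists} and~\ref{claim: 3col relating alices and bobs c0 nodes} in the same way, and the reverse direction constructs the same explicit $3$-coloring, including the parity-breaking assignment at $\singlebar{s}^{\,i}_{\ell}$ on the auxiliary chains $\singlebar{S},\doublebar{S}$.
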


\begin{proof}
Assume $G_{x,y}$ is $3$-colorable, and denote the colors of $c^0_a, c^1_a$ and $c^2_a$ by $c_0, c_1$ and $c_2$ respectively, as before.
By Claim~\ref{claim: 3col c0 exists},
there are nodes $a_1^i\in A_1$ and $a_2^j\in A_2$ that are both colored by $c_0$. Hence, the edge $(a_1^i,a_2^j)$ does not exist in $G_{x,y}$, implying $x[i,j]=1$.
By Claim~\ref{claim: 3col relating alices and bobs c0 nodes}, the nodes $b_1^i$ and $b_2^j$ are also colored $c_0$, so $y[i,j]=1$ as well, implying that $x$ and $y$ are not disjoint.

\begin{figure}[t]
	\begin{center}
		\includegraphics[
		trim=3cm 3cm 3cm 4cm,clip]{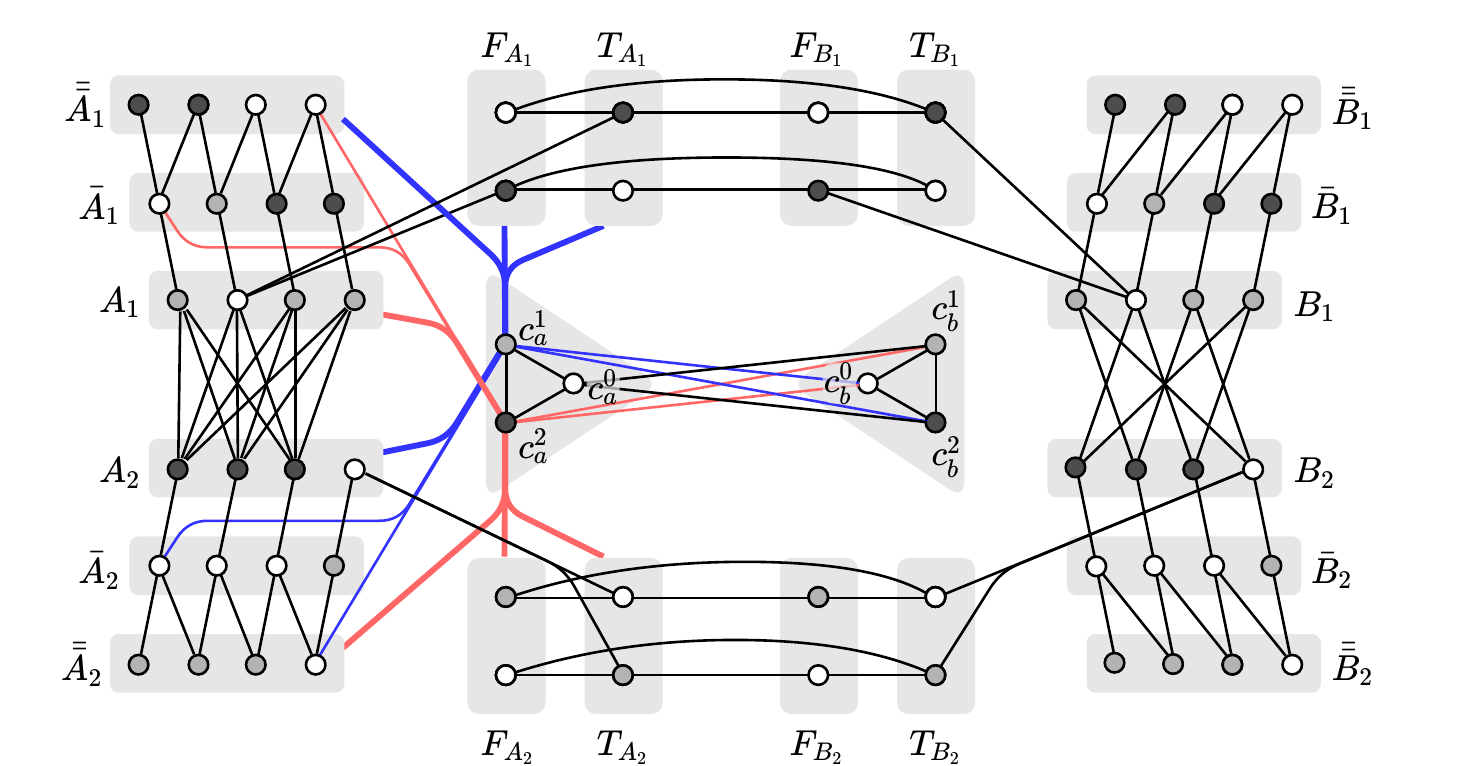}
	\end{center}
	\caption{A coloring of a lower bound graph
		(proof of Lemma~\ref{lemma:threefamily}).
		In this example $k=4$,
		and the inputs are
		$x[i,j]=1$ for $j=3$ (and all $i$),
		$y[i,j]=1$ for $\left|i-j\right|$ even.
		Note that $x[1,3]=y[1,3]=1$.
	}
	\label{fig: 3col colored}
\end{figure}

For the other direction, assume $x$ and $y$ are not disjoint,
i.e.,\ there is an index $(i,j)\in \set{0,\ldots,k-1}^2$ such that $x[i,j]=y[i,j]=1$.
Consider the following coloring (see Figure~\ref{fig: 3col colored}).
\begin{enumerate}
    \item Color $c_a^i$ and $c_b^i$ by $c_i$, for $i\in \set{0,1,2}$.
    \item Color the nodes $a_1^i, b_1^i, a_2^j$ and $b_2^j$ by $c_0$. Color the nodes $a_1^{i'}$ and $b_1^{i'}$, for $i'\neq i$, by $c_1$, and the nodes $a_2^{j'}$ and $b_1^{j'}$, for $j'\neq j$, by $c_2$.
    \item Color the nodes of $\bin(a_1^i)$ by $c_2$,
        and similarly color the nodes of $\bin(b_1^i)$ by $c_2$.
        Color the rest of the nodes in this gadget,
        i.e.,\ $\bin(a_1^{k-i})$ and $\bin(b_1^{k-i})$, by $c_0$. Similarly, color $\bin(a_2^j)$ and $\bin(b_2^j)$ by $c_0$ and
        $\bin(a_2^{k-j})$ and $\bin(b_2^{k-j})$ by $c_1$.
    \item Finally, color the nodes of the forms $\singlebar s^i_{\ell}$ and $\doublebar s^i_{\ell}$ as follows.
    \begin{enumerate}
        \item Color $\singlebar a_{1}^i$ and $\singlebar b_{1}^i$ by $c_1$,
            all nodes $\singlebar a_{1}^{i'}$ and $\singlebar b_{1}^{i'}$ with $i'< i$ by $c_0$,
            and all nodes $\singlebar a_{1}^{i'}$ and $\singlebar b_{1}^{i'}$ with $i'> i$ by $c_2$.
        \item Similarly, color $\singlebar a_{2}^i$ and $\singlebar b_{2}^i$ by $c_2$,
            all nodes $\singlebar a_{2}^{i'}$ and $\singlebar b_{2}^{i'}$ with $i'< i$ by $c_0$,
            and all nodes $\singlebar a_{2}^{i'}$ and $\singlebar b_{2}^{i'}$ with $i'> i$ by $c_1$.
        \item Color all nodes $\doublebar a_{1}^{i'}$ and $\doublebar b_{1}^{i'}$ with $i'< i$ by $c_2$,
            and all nodes $\doublebar a_{1}^{i'}$ and $\doublebar b_{1}^{i'}$ with $i'\geq i$ by $c_0$.
        \item Similarly, color all nodes $\doublebar a_{2}^{i'}$ and $\doublebar b_{2}^{i'}$ with $i'< i$ by $c_1$,
            and all nodes $\doublebar a_{2}^{i'}$ and $\doublebar b_{2}^{i'}$ with $i'\geq i$ by $c_0$.
    \end{enumerate}
\end{enumerate}
Checking all edges gives that the above coloring is indeed a proper $3$-coloring of $G_{x,y}$, which completes the proof.
\end{proof}

Having constructed the family of lower bound graphs, we are now ready to prove Theorem~\ref{thm: 3-coloring lb}.

\begin{proofof}{Theorem~\ref{thm: 3-coloring lb}}
The edges in the cut $E(V_A, V_B)$ are the $6$ edges connecting $\set{c_a^0,c_a^1,c_a^2}$ and $\set{c_b^0,c_b^1,c_b^2}$, and $2$ edges from every $4$-cycle of the nodes of $F_{A_1} \cup T_{A_1} \cup F_{B_1} \cup T_{B_1}$ and $F_{A_2} \cup T_{A_2} \cup F_{B_2} \cup T_{B_2}$,
for a total of $\Theta(\log n)$ edges.
Note that $n\in \Theta(k)$ and $K=k^2\in\Theta(n^2)$.
Lemma~\ref{lemma:threefamily} shows that $\{G_{x,y}\}$ is a family of lower bound graphs with respect to $\disj_K$
and the predicate $\chi>3$,
so by applying Theorem~\ref{thm: general lb framework} on the above partition
we deduce that any algorithm in the \cgst{} model for deciding whether a given graph is $3$-colorable requires $\Omega(n^2/\log^2n)$ rounds.

Any algorithm that computes $\chi$ of the input graph,
or produces a $\chi$-coloring of it,
may be used to deciding whether $\chi\leq3$,
in $O(D)$ additional rounds.
Thus, the lower bound applies to these problems as well.
\end{proofof}

\begin{figure}[t]
	\begin{center}
		\includegraphics[
		trim=3cm 4.5cm 3cm 4cm,clip]{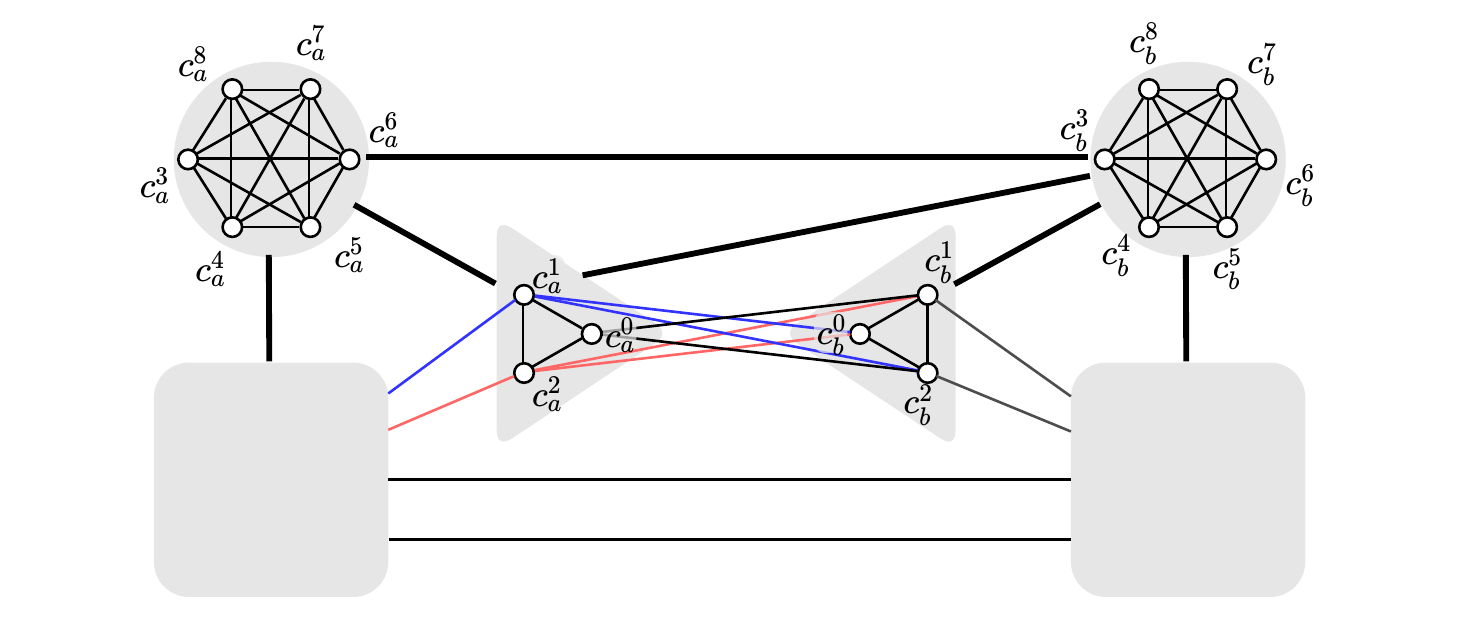}
	\end{center}
	\caption{A schematic figure of the lower bound graph for $c$-coloring,
		$c=9$
		(proof of Claim~\ref{claim: coloring veriants lb}).
		Bold lines between sets represent the existence of all the edges between the two sets.
		The squares represent the previous graph, except for the color nodes which are explicitly indicated.
		}
	\label{fig: c-coloring}
\end{figure}

~\\
\noindent\textbf{A lower bound for $c$-coloring:}
Our construction and proof naturally extend to handle $c$-coloring,
for any $c\geq 3$.
We prove the next theorem.

\begin{claim}
	\label{claim: coloring veriants lb}
	Any algorithm that decides if $\chi(G)\leq c$, 
	for an integer $3\leq c<n$ that may depend on $n$,
	requires $\Omega((n-c)^2/(c\log n+\log^2n))$ rounds.
\end{claim}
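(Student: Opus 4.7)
The plan is to lift the $3$-coloring lower bound of Lemma~\ref{lemma:threefamily} to $c$-coloring by augmenting the two color-triangles of the construction in Section~\ref{sec: coloring} into $c$-cliques, while leaving the structural part that encodes disjointness untouched. Concretely, I would introduce two new sets of auxiliary nodes $D_A=\{d_a^1,\ldots,d_a^{c-3}\}\subseteq V_A$ and $D_B=\{d_b^1,\ldots,d_b^{c-3}\}\subseteq V_B$, turn $\{c_a^0,c_a^1,c_a^2\}\cup D_A$ into a $c$-clique inside $V_A$ and symmetrically $\{c_b^0,c_b^1,c_b^2\}\cup D_B$ into a $c$-clique inside $V_B$, and additionally connect every node of $D_A$ to every non-color node in $V_A$ (and every node of $D_B$ to every non-color node in $V_B$). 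Optionally, a small $\Theta(c)$-size gadget may be added across the cut between $D_A$ and $D_B$ in order to make the palette-alignment argument fully explicit; all the other new edges lie inside a single side of the partition.

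Let $G'_{x,y}$ denote the resulting graph. The main lemma I would establish is that $G'_{x,y}$ is $c$-colorable iff $x$ and $y$ are not disjoint. For the ``if'' direction, extend a 3-coloring of $G_{x,y}$ given by Lemma~\ref{lemma:threefamily} by assigning to $D_A\cup D_B$ a common set of $c-3$ fresh colors. For the ``only if'' direction, any proper $c$-coloring $f$ of $G'_{x,y}$ must use all $c$ colors of the global palette on each of the two $c$-cliques; the $6$ original color nodes force $f(c_a^i)=f(c_b^i)$ for $i\in\{0,1,2\}$, and the global bound of $c$ colors then forces $f(D_A)=f(D_B)$ as sets. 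Since every non-color node in $V_A$ (respectively $V_B$) is adjacent to all of $D_A$ (respectively $D_B$), it must take one of the three shared ``base'' colors $f(c_a^0),f(c_a^1),f(c_a^2)$. The restriction of $f$ to the original non-color nodes is therefore a valid 3-coloring of $G_{x,y}$, and Lemma~\ref{lemma:threefamily} gives that $x,y$ are not disjoint.

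It then remains to apply Theorem~\ref{thm: general lb framework}. Adding $2(c-3)$ auxiliary nodes changes the node count by $\Theta(c)$, so $k=\Theta(n-c)$ and the input size is $K=k^2=\Theta((n-c)^2)$. The cut consists of the $\Theta(\log n)$ edges inherited from the base 3-coloring construction plus the $\Theta(c)$ edges of the optional alignment gadget, giving $|C|=\Theta(c+\log n)$. The resulting lower bound is $\Omega(K/(|C|\log n)) = \Omega((n-c)^2/(c\log n+\log^2 n))$.

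The main technical obstacle is the palette-alignment step in the ``only if'' direction: one has to argue that the sets $f(D_A)$ and $f(D_B)$ coincide, an equality that is driven by the global budget of $c$ colors together with the pairing of the six original color nodes across the cut, rather than by direct adjacencies between $D_A$ and $D_B$. Once this alignment is in place the three base colors on the two sides match, and the 3-coloring analysis of Lemma~\ref{lemma:threefamily} carries over verbatim to the induced coloring of the original non-color part.
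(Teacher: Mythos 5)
Your high-level plan mirrors the paper's: augment the two color-triangles into $c$-cliques using $c-3$ new nodes per side, connect the new nodes to every non-color node on the same side so that the original part must remain $3$-colored, and note that this only enlarges the cut by $\Theta(c)$ edges. The parameter accounting ($k=\Theta(n-c)$, $K=\Theta((n-c)^2)$, $|C|=\Theta(c+\log n)$) also matches. However, there is a genuine gap in the ``only if'' direction, precisely at the step you single out as the main obstacle.

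The assertion that ``the $6$ original color nodes force $f(c_a^i)=f(c_b^i)$'' is false once you have $c\geq 4$ colors available and no edges tying the new nodes on one side to the old color nodes on the other. The edges among the six color nodes are only $(c_a^i,c_b^j)$ for $i\neq j$; in a $c$-coloring this merely forbids $f(c_b^j)\in\{f(c_a^i):i\neq j\}$, and since $c_b^j$ is not adjacent to $D_A$ it has $c-2$ legal colors, not one. For instance with $c=6$ and palette $\{1,\ldots,6\}$, the assignment $f(c_a^0,c_a^1,c_a^2)=(1,2,3)$, $f(D_A)=\{4,5,6\}$, $f(c_b^0,c_b^1,c_b^2)=(4,5,6)$, $f(D_B)=\{1,2,3\}$ satisfies every cross edge among the six color nodes and both $c$-cliques, yet the two ``base'' palettes are disjoint. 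So the global budget plus the pairing of the six color nodes does not drive alignment; a cross-cut alignment gadget is essential, not optional, and an unspecified gadget ``between $D_A$ and $D_B$'' is not obviously the right one (a matching among them forces nothing useful, and any denser bipartite graph blows the cut budget or forces the palettes to be \emph{disjoint}).

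The paper resolves this by placing the alignment edges between $D_B$ (its new $c_b^i$) and the fixed triple $c_a^0,c_a^1,c_a^2$ across the cut: the $c-3$ new $c_b^i$'s form a clique that must avoid $\{f(c_a^0),f(c_a^1),f(c_a^2)\}$, hence occupies exactly the other $c-3$ colors; this forces $\{f(c_b^0),f(c_b^1),f(c_b^2)\}=\{f(c_a^0),f(c_a^1),f(c_a^2)\}$, and the original triangle-to-triangle edges then pin down $f(c_b^i)=f(c_a^i)$. These $3(c-3)=\Theta(c)$ cut edges are exactly where the $c\log n$ term in the denominator comes from. Once you replace your ``optional $D_A$--$D_B$ gadget'' by this mandatory $D_B$--$\{c_a^0,c_a^1,c_a^2\}$ gadget and drop the unsupported forcing claim about the six color nodes, the rest of your argument (including the reduction back to Lemma~\ref{lemma:threefamily} and the final count) goes through as you wrote it.
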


The proof of this claim is an extension of the proof of Theorem~\ref{thm: 3-coloring lb}.
Start with the graph $G_{x,y}$ defined above,
add new nodes denoted $c_a^i$, $i\in\set{3,\ldots,c-1}$,
and connect them to all of $V_A$,
and new nodes denoted $c_b^i$, $i\in\set{3,\ldots,c-1}$,
and connect them to all of $V_B$ and also to $c_a^0, c_a^1$ and $c_a^2$
(see Figure~\ref{fig: c-coloring}).
The nodes $c_a^i$ are added to $V_a$, and the rest are added to $V_b$,
which increases the cut size by $\Theta(c)$ edges.

Assume the extended graph is colorable by $c$ colors,
and denote by $c_i$ the color of the node $c_a^i$
(these nodes are connected by a clique, so their colors must be distinct).
The nodes $c_b^i$, $i\in\set{2,\ldots,c-1}$ form a clique,
and they are all connected to the nodes $c_a^0,c_a^1$ and $c_a^2$,
so they are colored by the colors $\set{c_3,\ldots,c_{c-1}}$,
in some arbitrary order.
All the original nodes of $V_A$ are connected to
$c_a^i$, $i\in\set{3,\ldots,c-1}$,
and all the original nodes of $V_B$ are connected to
$c_b^i$, $i\in\set{3,\ldots,c-1}$,
so the original graph must be colored by $3$ colors,
which we know is possible iff $x$ and $y$ are not disjoint.
Thus, the newly defined family $\set{G_{x,y}}$ is a family of lower bound graphs,
	and we can easily prove the claim.

\begin{proofof}{Claim~\ref{claim: coloring veriants lb}}
To construct $G_{x,y}$,
we added $2c-6$ nodes to the graph,
so now $K=k^2=\Theta((n-c)^2)$.
Thus, the new graphs constitute a family of lower bound graphs
with respect to $\disj_{K}$ and the predicate $\chi> c$,
the communication complexity of $\disj_{K}$
is in $\Omega(K^2)=\Omega((n-c)^2)$,
the cut size is $\Theta(c+\log n)$,
and Theorem~\ref{thm: general lb framework} completes the proof.
\end{proofof}

~\\
\noindent\textbf{A lower bound for $(4/3-\epsilon)$-approximation:} Finally, we extend our construction to give a lower bound
for approximate coloring.
That is,
we show a similar lower bound for computing a $(4/3-\eps)$-approximation to $\chi$
and for finding a coloring in $(4/3-\eps)\chi$ colors.

Observe that since $\chi$ is integral, any $(4/3-\epsilon)$-approximation algorithm must return the exact solution in case $\chi=3$. Thus, in order to rule out the possibility for an algorithm which is allowed to return a $(4/3-\eps)$-approximation which is not the exact solution, we need a more general construction.
For any integer $c$,
we show a lower bound for distinguishing between the case
$\chi\leq 3c$ and $\chi\geq 4c$.

\begin{claim}\label{claim: approx-coloring}
Given an integer $c$, any algorithm that distinguishes a graph $G$ with $\chi(G)\leq 3c$ from a graph with $\chi(G)\geq 4c$ requires $\Omega(n^2/(c^3\log^2n))$ rounds.
\end{claim}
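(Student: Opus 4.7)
The plan is to amplify the chromatic number gap $\chi(G_{x,y}) \in \set{3, \geq 4}$ established by Lemma~\ref{lemma:threefamily} into a gap $\chi(G'_{x,y}) \in \set{3c, \geq 4c}$ via a blowup construction, so that any $(4/3-\eps)$-approximation algorithm is forced to distinguish the two cases. Define $G'_{x,y} := G_{x,y}[K_c]$ by taking the lexicographic product: replace each vertex $v$ of $G_{x,y}$ by a clique $C_v$ of size $c$, and each edge $(u,v)$ by the complete bipartite graph between $C_u$ and $C_v$. Partition $V'_A := V_A \times \set{1,\ldots,c}$ and $V'_B := V_B \times \set{1,\ldots,c}$.

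The upper bound $\chi(G'_{x,y}) \leq 3c$ in the non-disjoint case follows by lifting the 3-coloring of $G_{x,y}$ guaranteed by Lemma~\ref{lemma:threefamily}: assign three disjoint $c$-palettes to the three original color classes, and color each cluster $C_v$ with the $c$ colors corresponding to the original color of $v$. The lower bound $\chi(G'_{x,y}) \geq 4c$ in the disjoint case is the main technical step, since the clique-based bound $\omega(G'_{x,y}) = c \cdot \omega(G_{x,y}) = 3c$ is insufficient. I would extend Lemma~\ref{lemma:threefamily} to the fractional coloring setting by reinterpreting a proper $\chi'$-coloring of $G'_{x,y}$ as an assignment $\phi : V(G_{x,y}) \to \binom{[\chi']}{c}$ in which adjacent vertices receive disjoint $c$-subsets. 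For each color $\alpha \in [\chi']$, the set $S_\alpha := \set{v : \alpha \in \phi(v)}$ is independent in $G_{x,y}$. When $\chi' < 4c$, an averaging/pigeonhole argument over $\alpha$, combined with the analogs of Claims~\ref{claim: 3col c0 exists} and~\ref{claim: 3col relating alices and bobs c0 nodes} applied to $S_\alpha$ in place of the $c_0$-colored vertices, produces indices $i, j$ for which $\set{a_1^i, a_2^j, b_1^i, b_2^j} \subseteq S_\alpha$. Independence of $S_\alpha$ then forces the edges $(a_1^i, a_2^j)$ and $(b_1^i, b_2^j)$ to be absent, i.e., $x[i,j] = y[i,j] = 1$, contradicting disjointness.

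For the resource accounting, let $N = |V(G_{x,y})|$, so the blown-up graph has $n = cN$ vertices and input size $K = \Theta(N^2) = \Theta(n^2/c^2)$. A naive lexicographic blowup turns each of the $\Theta(\log n)$ original cut edges into $K_{c,c}$, giving a cut of size $\Theta(c^2 \log n)$ and only an $\Omega(n^2/(c^4\log^2 n))$ lower bound; to save the extra factor of $c$, I would refine by leaving the cut-incident vertices---namely the $\Theta(\log n)$ bit-nodes and the six color-nodes---as singletons rather than expanding them to $K_c$. Each cut edge then contributes only $c$ new edges (from the $K_c$-side to the singleton side), for a total cut of $|C| = \Theta(c \log n)$, while the chromatic-number analysis above still applies because the role of these cut-incident vertices is to constrain the \emph{color palette} used by adjacent clusters, not to add clique structure. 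Applying Theorem~\ref{thm: general lb framework} with $\disj_K$ then yields $\Omega(K/(|C| \log n)) = \Omega(n^2/(c^3 \log^2 n))$.

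The main obstacle is the fractional-coloring extension of Lemma~\ref{lemma:threefamily}: one must verify that the pigeonholed color $\alpha$ gives a level set $S_\alpha$ rich enough to simultaneously hit all four combinatorial sets $A_1, A_2, B_1, B_2$ with matching indices, and that this argument continues to rule out $(4c-1)$-colorings even under the asymmetric blowup needed to control the cut size.
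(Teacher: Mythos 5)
Your plan is genuinely different from the paper's. The paper does \emph{not} take a lexicographic product: it builds $G^c_{x,y}$ from $c$ disjoint copies $G_{x,y}(1),\ldots,G_{x,y}(c)$, then adds all edges between $V_A(i)$ and $V_A(j)$ for $i\neq j$, and all edges between $V_B(i)$ and $V_B(j)$ for $i\neq j$ (the $V_A$--$V_B$ joins are \emph{not} added, so the cut stays $\Theta(c\log n)$, one copy's worth of cut edges per copy). The chromatic reasoning is then additive over copies: when $x,y$ are not disjoint, each copy is $3$-colored with a private palette, giving $3c$; when they are disjoint, each copy needs $4$ colors and the join structure forces at least $4c$ colors across copies. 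Your blowup route, in contrast, needs $\chi(G_{x,y}[K_c])\geq 4c$, which is $\chi_c(G_{x,y})\geq 4c$, and $\chi(G)\geq 4$ does not imply $\chi_c(G)\geq 4c$ in general---one needs $\chi_f(G_{x,y})\geq 4$, a strictly stronger statement about fractional chromatic number. You correctly flag this as your main obstacle, and it \emph{is} a real gap, not a routine verification: Lemma~\ref{lemma:threefamily} and Claims~\ref{claim: 3col c0 exists}, \ref{claim: 3col relating alices and bobs c0 nodes} are proved only for $3$-colorings, and extending them to $c$-fold colorings (equivalently to arbitrary independent sets rather than color classes of a $3$-coloring) would require a genuinely new argument that you have not supplied. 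The paper's copy-and-join construction sidesteps this entirely.

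There is also a concrete problem with your refinement ``leave the cut-incident vertices as singletons.'' In $G_{x,y}$ every cut edge has \emph{both} endpoints among the bit-nodes or the color-nodes, so if all of those are singletons the cut stays $\Theta(\log n)$, not $\Theta(c\log n)$ as you compute---your accounting is internally inconsistent with the construction as described. Worse, keeping them as singletons damages the chromatic analysis you need: the triangle $c_a^0,c_a^1,c_a^2$ remains a bare $K_3$ requiring only $3$ colors (not $3c$), and each singleton bit-node excludes only one color from its blown-up neighbors instead of $c$ colors. These constraints become far too weak to force $\chi\geq 4c$; it is quite plausible that $3c$ colors suffice for the asymmetric blowup even when $x,y$ are disjoint. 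The paper achieves the small cut differently: the cut-incident vertices are \emph{replicated} once per copy---so their chromatic constraints are multiplied by $c$---while only the $c$ intra-copy cuts of size $\Theta(\log n)$ each cross the $V_A/V_B$ partition, since the join edges fall entirely inside $V_A$ or entirely inside $V_B$.
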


To prove Claim~\ref{claim: approx-coloring} we show a family of lower bound graphs with respect to the $\disj_{K}$ function, where $K\in \Theta(n^2/c^2)$,
and the predicate $\chi\geq4c$ ($\true$)
or $\chi\leq 3c$ ($\false$).
The predicate is not defined for other values of $\chi$.

We create a graph $G^{c}_{x,y}$, composed of $c$ copies of $G_{x,y}$.
The $i$-th copy is denoted $G_{x,y}(i)$,
and its nodes are partitioned into $V_A(i)$ and $V_B(i)$.
Naturally, let $V_A=\cup_i V_A(i)$ and $V_B=\cup_i V_B(i)$.
We connect all the nodes of $V_A(i)$ to all nodes of $V_A(j)$,
for each $i\neq j$.
Similarly, we connect all the nodes of $V_B(i)$ to all the nodes of $V_B(j)$.
This construction guarantees that each copy is colored using different colors, and hence
if $x$ and $y$ are disjoint then $\chi(G^{c}_{x,y})\geq 4c$,
while otherwise $\chi(G^{c}_{x,y})=3c$.
Therefore, $G^{c}_{x,y}$ is a family of lower bound graphs.

\begin{proofof}{Claim~\ref{claim: approx-coloring}}
Note that $n\in \Theta(kc)$. Thus, $K=|x|=|y|=\Theta(n^2/c^2)$.
Furthermore, observe that for each $G_{x,y}(i)$, there are $O(\log k)$ edges in the cut,
so in total $G^{c}_{x,y}$ contains $O(c\log k)=O(c\log n)$ edges in the cut.
Since we showed that $G^{c}_{x,y}$ is a family of lower bound graphs,
we can apply Theorem~\ref{thm: general lb framework} to deduce that because of the lower bound for \setdis{},
any algorithm in the \cgst{} model for distinguishing between $\chi\leq 3c$ and $\chi\geq 4c$ requires at least $\Omega(n^2/(c^3\log^2 n))$ rounds.
\end{proofof}

For any $\epsilon>0$ and any $c$ it holds that $(4/3-\epsilon)3c<4c$.
Thus, we can choose $c$ to be an arbitrary constant to achieve the following theorem.

\begin{theorem}
\label{cor: 3col approx lb}
For any constant $\eps>0$, any algorithm that computes a $(4/3-\eps)$-approximation to $\chi$
requires $\Omega(n^2/\log^2n)$ rounds.
\end{theorem}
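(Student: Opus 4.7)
The plan is to derive Theorem~\ref{cor: 3col approx lb} as an almost immediate corollary of Claim~\ref{claim: approx-coloring} by instantiating $c$ as a constant that depends only on $\eps$. Concretely, for any fixed positive integer $c$, Claim~\ref{claim: approx-coloring} supplies a family of lower bound graphs $G^{c}_{x,y}$ whose chromatic number is exactly $3c$ when $x$ and $y$ are not disjoint and at least $4c$ when they are disjoint. This binary dichotomy of $\chi$ is precisely the handle we need.

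Next I would verify that any $(4/3-\eps)$-approximation algorithm for the chromatic number distinguishes these two cases. If the algorithm outputs a value $\hat\chi$ satisfying $\chi\le\hat\chi\le(4/3-\eps)\chi$, then in the case $\chi\le 3c$ we get
\[
\hat\chi\;\le\;(4/3-\eps)\cdot 3c\;=\;4c-3\eps c\;<\;4c,
\]
while in the case $\chi\ge 4c$ we get $\hat\chi\ge \chi \ge 4c$. Thus each node can compare its output to the threshold $4c$ and decide the gap version of coloring solved by Claim~\ref{claim: approx-coloring}. The same reduction applies to an algorithm that outputs an actual coloring using at most $(4/3-\eps)\chi$ colors, since the number of distinct colors it employs upper-bounds a valid $\hat\chi$ and can be aggregated in $O(D)$ extra rounds, which is absorbed into the lower bound.

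Finally I would plug $c=c(\eps)$, a constant chosen so that $(4/3-\eps)\cdot 3c<4c$ (any positive integer works), into the round bound of Claim~\ref{claim: approx-coloring}, giving $\Omega(n^2/(c^3\log^2n))=\Omega(n^2/\log^2n)$, which is the desired conclusion. There is essentially no technical obstacle here; the only points requiring care are (i) checking that an approximate coloring, and not just an approximate value, is handled by the same threshold argument, and (ii) noting that the argument carries over to randomized algorithms without change, since the randomized communication complexity of $\disj_K$ is also $\Omega(K)$ and Theorem~\ref{thm: general lb framework} already accounts for this case.
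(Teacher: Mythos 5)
Your proposal matches the paper's own proof, which also derives the theorem as an immediate corollary of Claim~\ref{claim: approx-coloring} by observing that $(4/3-\eps)\cdot 3c < 4c$ for any constant $c$, so a $(4/3-\eps)$-approximation distinguishes $\chi\le 3c$ from $\chi\ge 4c$, and plugging a constant $c$ into the $\Omega(n^2/(c^3\log^2 n))$ bound. The additional remarks you add about approximate colorings (as opposed to approximate values) and about randomization are correct and consistent with the paper's surrounding discussion.
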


As in the case of diameter, we note that our construction is not only for distinguishing $\chi=3$ from $\chi=4$, and thus it can be used to show a lower bound for multiplicative $(4/3-\eps)$-approximation even if a constant additive error is also allowed.

\section{Quadratic and Near-Quadratic Lower Bounds for Problems in P}
\label{sec:P}

In this section we support our claim that what makes problems hard for the \cgst{} model is not necessarily them being NP-hard problems. First, we address a class of subgraph detection problems, which requires detecting cycles of length $8$ and a given weight, and show a near-quadratic lower bound on the number of rounds required for solving it, although its sequential complexity is polynomial. Then, we define a problem which we call the \emph{Identical Subgraphs Detection} problem, in which the goal is to decide whether two given subgraphs are identical. While this last problem is rather artificial, it allows us to obtain a strictly quadratic lower bound for the \cgst{} model, for a \emph{decision} problem.

\subsection{Weighted Cycle Detection}
\label{sec:cycle}

In this section we show a lower bound on the number of rounds needed in order to decide if the graph contains a simple cycle of length $8$ and weight $W$, such that $W$ is a $\mbox{polylog}(n)$-bit value given as an input. Note that this problem can be solved easily in polynomial time in the sequential setting by simply checking all the $\binom{n}{8}\cdot 7!$ potential cycles of length $8$.

\begin{theorem}\label{thm: wCycles}
	Any distributed algorithm that decides if a weighted graph 
	contains a simple cycle of length $8$ and a given weight requires $\Omega(n^2/\log^2n)$ rounds.
\end{theorem}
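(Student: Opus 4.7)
The plan is to apply the framework of Theorem~\ref{thm: general lb framework} via a reduction from $\disj_K$ with $K=\Theta(n^2)$, combining its $\Omega(K)$ communication complexity with a construction of cut size $O(\log n)$. The graph $G_{x,y}$ is built from two parallel copies of the bit-gadget of Section~\ref{sec: gadget}, one on $A_1,B_1$ and the other on $A_2,B_2$, each set of size $k=\Theta(n)$. Setting $V_A=A_1\cup A_2\cup F_{A_1}\cup T_{A_1}\cup F_{A_2}\cup T_{A_2}$ and $V_B=V\setminus V_A$, the only cut edges are the $\Theta(\log n)$ bit-gadget edges. To enable short same-index connections, I would further add the ``matching'' cut edges $(f_{A_\ell}^h,f_{B_\ell}^h)$ and $(t_{A_\ell}^h,t_{B_\ell}^h)$ for $\ell\in\{1,2\}$ and every bit position $h$; this still leaves the cut of size $\Theta(\log n)$.

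Indexing $x,y\in\{0,1\}^{k^2}$ by pairs $(i,j)\in\{0,\ldots,k-1\}^2$, I would then add input-dependent weighted edges: if $x[i,j]=1$ add the edge $(a_1^i,a_2^j)\in V_A\times V_A$ with weight $\alpha(i,j):=2k^2+\delta(i,j)$, and if $y[i,j]=1$ add $(b_1^i,b_2^j)\in V_B\times V_B$ with weight $\beta(i,j):=W-\alpha(i,j)=4k^2-\delta(i,j)$, where $\delta(i,j):=ik+j$ is a bijection onto $\{0,\ldots,k^2-1\}$ and $W:=6k^2$. All skeleton edges receive weight $0$. The arithmetic properties one would use are that $\alpha(i,j)+\beta(i',j')=W$ iff $(i,j)=(i',j')$; every single $\alpha$ or $\beta$ lies strictly between $2k^2$ and $4k^2$ and hence differs from $W$; the sum of any two distinct $\alpha$'s is at most $6k^2-3<W$, the sum of any two distinct $\beta$'s is at least $6k^2+3>W$; and a direct check shows that no combination of $t\in\{3,4\}$ input-edge weights can equal $W$ (the three-$\alpha$ sum is at least $6k^2+3$, the three-$\beta$ sum exceeds $9k^2$, and the mixed cases are far from $W$).

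The key lemma would then be that $G_{x,y}$ contains a simple $8$-cycle of weight $W$ iff $x,y$ are not disjoint. If $x[i,j]=y[i,j]=1$, the cycle
\[
 a_1^i\to a_2^j\to p_2\to q_2\to b_2^j\to b_1^i\to p_1\to q_1\to a_1^i,
\]
where $(p_\ell,q_\ell)$ is the pair of bit-nodes giving a same-index $3$-path in gadget $\ell$ through the matching augmentation edges, is a simple $8$-cycle with weight $\alpha(i,j)+\beta(i,j)=W$. Conversely, since skeleton edges carry weight $0$, any simple $8$-cycle has total weight equal to the sum of its input-edge weights, and the arithmetic bounds above force this to equal $W$ only when the cycle uses exactly one $A$-edge $(a_1^i,a_2^j)$ and one $B$-edge $(b_1^i,b_2^j)$ with the same $(i,j)$, in which case $x[i,j]=y[i,j]=1$.

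The step I expect to be the main obstacle is the case analysis for the converse direction, i.e.\ ruling out simple $8$-cycles with $t\in\{0,1,2,3,4\}$ input edges that happen to sum to $W$. Cases $t\in\{0,1,3,4\}$ are ruled out purely by the weight bounds above; the $t=2$ case with two same-side input edges is ruled out by the respective $\alpha$- or $\beta$-sum bound; and the $t=2$ mixed case with $(i,j)\neq(i',j')$ is ruled out by the injectivity of $\delta$. With the family of lower bound graphs validated, Theorem~\ref{thm: general lb framework} applied to $\disj_K$ and the predicate ``\emph{contains a simple $8$-cycle of weight $W$}'' yields $\Omega(K/(|C|\log n))=\Omega(n^2/\log^2 n)$ rounds, as required.
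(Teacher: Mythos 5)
Your proposal is correct, but it takes a noticeably heavier route than the paper. You build two parallel bit-gadgets plus extra ``matching'' cut edges to enable same-index $3$-paths; the paper's construction for Theorem~\ref{thm: wCycles} does not use the bit-gadget at all. Instead, it attaches a star center $c_S$ to each of the four $k$-node sets $A_1,A_2,B_1,B_2$ via weight-$0$ edges, and places only two cut edges, $(c_{A_1},c_{B_1})$ and $(c_{A_2},c_{B_2})$. The witness $8$-cycle is then simply $a_1^i,c_{A_1},c_{B_1},b_1^i,b_2^j,c_{B_2},c_{A_2},a_2^j$, and the converse direction requires almost no graph-structural case analysis because a constant-size cut already forces the cycle to use the two input edges. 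The arithmetic trick is essentially the same as yours: the paper encodes $(i,j)$ as $k^3+ki+j$ on the $A$ side and $k^3-(ki+j)$ on the $B$ side, targeting $W=2k^3$, and uses the same cardinality bounds (one input edge is too light, three are too heavy, two same-side do not hit $W$, and the injection forces the matched indices). Because the paper's cut has constant size, its actual derived bound is $\Omega(n^2/\log n)$, a $\log n$ factor stronger than yours; your $\Theta(\log n)$ cut only yields the weaker $\Omega(n^2/\log^2 n)$, which happens to be what the theorem statement records. One minor point in your converse: the number of input edges in an $8$-cycle can in principle range up to $8$, not just to $4$, though the same lower bound on each input weight rules out every case with three or more, so the omission is harmless.
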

Similarly to the previous sections, to prove Theorem~\ref{thm: wCycles} we describe a family of lower bound graphs with respect to the \setdis{} function and the predicate $P$ that says that the graph contains a simple cycle of length $8$ and weight $W$.

~\\
\noindent\textbf{The fixed graph construction:}
The fixed graph construction $G=(V,E)$ is defined as follows (see Figure~\ref{fig: cycle}). The set of nodes contains four sets $A_1,A_2,B_1$ and $B_2$, each of size $k\geq3$. For each set $S\in \{A_1,A_2,B_1,B_2\}$ there is a node $c_S$, which is connected to each of the nodes in $S$ by an edge of weight $0$. In addition there is an edge between $c_{A_1}$ and $c_{B_1}$ of weight 0 and an edge between $c_{A_2}$ and $c_{B_2}$ of weight $0$.
We set $V_A=A_1\cup A_2\cup \set{c_{A_1},c_{A_2}}$ and $V_B=V\setminus V_A$.

\begin{figure}[t]
	\begin{center}
		\includegraphics[
		trim=2.5cm 2.8cm 2cm 2.5cm,clip]{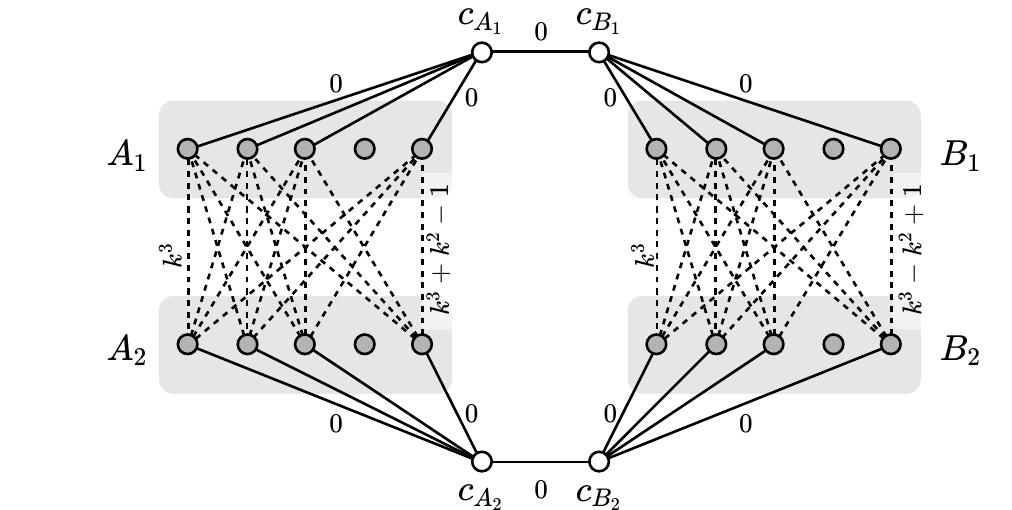}
	\end{center}
	\caption{Lower bound graph for weighted cycle detection}
	\label{fig: cycle}
\end{figure}

~\\
\noindent\textbf{Adding edges corresponding to the strings $x$ and $y$:}
Given two binary strings $x,y\in\set{0,1}^{k^2}$, we augment the fixed graph $G$ defined above with additional edges, which defines $G_{x,y}$. 
If $x[i,j]=1$, then we add an edge of weight $k^3+ki +j$ between the nodes $a_1^i$ and $a_2^j$. If $y[i,j]=1$, then we add an edge of weight $k^3-(ki+j)$ between the nodes $b_1^i$ and $b_2^j$. 
We denote by $\inedge$ the set of edges depending on the input, i.e. the edges in $(A_1\times A_2)\cup (B_1\times B_2)$.

\begin{claim}\label{2-edges-cycle}
	Any simple cycle of weight $2k^3$ contains exactly two edges from $\inedge$,
	one from $A_1\times A_2$ and one from $B_1\times B_2$.
\end{claim}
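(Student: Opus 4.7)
The plan is to exploit the fact that every fixed edge has weight $0$, so the weight of any simple cycle equals the sum of weights of the $\inedge$ edges it contains. Each $\inedge$ edge has weight tightly concentrated near $k^3$: an edge $(a_1^i,a_2^j)\in A_1\times A_2$ has weight $k^3+ki+j\in[k^3,\,k^3+k^2-1]$, while an edge $(b_1^i,b_2^j)\in B_1\times B_2$ has weight $k^3-(ki+j)\in[k^3-k^2+1,\,k^3]$. I would then proceed by case analysis on the number $t$ of $\inedge$ edges appearing in the cycle.

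The cases $t=0$ and $t=1$ immediately give a weight strictly below $2k^3$, so they are ruled out. For $t\geq 3$, the minimum possible total weight is $3(k^3-k^2+1)=3k^3-3k^2+3$, which exceeds $2k^3$ whenever $k\geq 3$ (and indeed the claim implicitly assumes $k\geq 3$). The heart of the argument is therefore $t=2$, which I would split into three sub-cases according to the types of the two $\inedge$ edges.

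Sub-cases (i) both edges in $A_1\times A_2$ and (ii) both in $B_1\times B_2$ are essentially base-$k$ uniqueness arguments. In (i), the total weight equals $2k^3+k(i_1+i_2)+(j_1+j_2)$, and equating with $2k^3$ together with $i_\ell,j_\ell\in\{0,\ldots,k-1\}$ forces $i_1=i_2=j_1=j_2=0$; but then both edges coincide with $(a_1^0,a_2^0)$, contradicting simplicity of the cycle. Sub-case (ii) is symmetric. In sub-case (iii), one edge of each type, the total is $2k^3+k(i_1-i_2)+(j_1-j_2)$; equality with $2k^3$ together with $|j_1-j_2|\leq k-1<k$ uniquely forces $i_1=i_2$ and $j_1=j_2$, which is compatible with the claim (the two edges remain distinct since they lie in disjoint edge sets). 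Combining the cases shows that the only surviving configuration is exactly two $\inedge$ edges, one from each side, as required.

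The main obstacle, such as it is, is arithmetic bookkeeping: I need the offset $k^3$ to dominate the "index payload" $k^2-1$ so that no single edge can contribute $2k^3$, three edges always overshoot, and the base-$k$ decomposition in the mixed sub-case uniquely determines the indices. These are all linear-in-$k$ inequalities built directly into the choice of edge weights, so once the case split is laid out the verification is routine; the cleverness lives entirely in the construction, not in the proof.
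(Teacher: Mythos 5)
Your proof is correct and takes essentially the same approach as the paper: both bound the number of $\inedge$ edges to exactly two via the weight concentration $[k^3-k^2+1,\,k^3+k^2-1]$, then rule out two edges on the same side by observing that two distinct $A_1\times A_2$ edges sum strictly above $2k^3$ (and symmetrically for $B_1\times B_2$). Your algebraic expansion $2k^3+k(i_1+i_2)+(j_1+j_2)$ in sub-case (i) is the same observation as the paper's ``all but one weigh strictly more than $k^3$'' phrasing, and your sub-case (iii) analysis, while correct, belongs to the subsequent Lemma rather than to this claim.
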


\begin{proof}
	The weight of each edge in $\inedge$ is in $\set{k^3-k^2+1,\ldots,k^3+k^2-1}$ (the extremes are $w(b^{k-1}_1,b^{k-1}_2)$ and $w(a^{k-1}_1,a^{k-1}_2)$, if those edges exist), and all other edges weigh $0$.
	A cycle of weight $2k^3$ must contain at least two edges from $\inedge$ since $k^3+k^2-1<2k^3$, and cannot contain three or more of these since $2k^3<3(k^3-k^2+1)$.
	
	Since the edges of $A_1\times A_2$ weigh at least $k^3$, and all but one of them weighs strictly more than that, two of these weigh more than $2k^3$. 
	Similarly, two edges of $B_1\times B_2$ weigh less than~$2k^3$.
\end{proof}

To prove that $\set{G_{x,y}}$ is a family of lower bound graphs,
we prove the following lemma.

\begin{lemma}\label{lemma: maincycles}
	The graph $G_{x,y}$ contains a simple cycle of length $8$ and weight $W=2k^3$ if and only if $x$ and $y$ are not disjoint.
\end{lemma}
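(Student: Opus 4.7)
Plan. The ``if'' direction will be handled by explicit construction of a witnessing cycle. Suppose $x,y$ are not disjoint, so there exist $i,j\in\set{0,\ldots,k-1}$ with $x[i,j]=y[i,j]=1$, giving us the edges $(a_1^i,a_2^j)$ of weight $k^3+ki+j$ and $(b_1^i,b_2^j)$ of weight $k^3-(ki+j)$. I will use the $8$-cycle
\[
(a_1^i,\ a_2^j,\ c_{A_2},\ c_{B_2},\ b_2^j,\ b_1^i,\ c_{B_1},\ c_{A_1},\ a_1^i),
\]
whose six ``center'' edges each have weight $0$, so its total weight is $(k^3+ki+j)+(k^3-(ki+j))=2k^3$, as required. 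Note that this cycle is simple because the eight nodes are pairwise distinct by construction.

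For the ``only if'' direction, I will start from a simple $8$-cycle of weight exactly $2k^3$ and apply Claim~\ref{2-edges-cycle}, which guarantees that such a cycle contains exactly two edges from $\inedge$: one edge $(a_1^i,a_2^j)$ with $x[i,j]=1$ (weight $k^3+ki+j$) and one edge $(b_1^{i'},b_2^{j'})$ with $y[i',j']=1$ (weight $k^3-(ki'+j')$). Since all other edges weigh $0$, the weight constraint forces
\[
(k^3+ki+j) + (k^3-(ki'+j')) = 2k^3,
\]
which simplifies to $ki+j = ki'+j'$. Because $0\le i,j,i',j'<k$, the pair $(i,j)$ is uniquely determined by the value $ki+j$, so $i=i'$ and $j=j'$. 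Hence $x[i,j]=y[i,j]=1$ and the two strings are not disjoint.

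The main technical content has already been distilled into Claim~\ref{2-edges-cycle} and into the careful choice of the weight-encoding $k^3\pm(ki+j)$, which is designed precisely so that the weight $2k^3$ singles out a matching pair of indices. There is no real obstacle beyond checking these two short computations and verifying the simplicity of the constructed $8$-cycle; no case analysis on the cycle's structure is needed in the ``only if'' direction because Claim~\ref{2-edges-cycle} already pins down which edges of $\inedge$ must occur.
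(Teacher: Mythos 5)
Your proof is correct and follows essentially the same route as the paper: the same witnessing $8$-cycle for the ``if'' direction, and in the ``only if'' direction an appeal to Claim~\ref{2-edges-cycle} followed by the same weight arithmetic (you phrase the conclusion via injectivity of $(i,j)\mapsto ki+j$, while the paper writes $k(i-i')=j'-j$ with $|j'-j|<k$, but these are the same observation).
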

\begin{proof}
	For the first direction, assume that $x$ and $y$ are not disjoint,
	and let $i,j\in\set{0,\ldots,k-1}$ be such that $x[i,j]=y[i,j]=1$. 
	The $8$-cycle $(a^i_1,c_{A_1},c_{B_1},b^i_1,b^j_2,c_{B_2},c_{A_2},a^j_2)$ 
	has weight $w(a^j_1,a^i_2)+w(b^i_1,b^j_2)=k^3+ki+j+k^3-ki-j=2k^3$, as needed.
	
	For the other direction, assume that the graph contains a simple cycle $C$ of length $8$ and weight $2k^3$. By Claim~\ref{2-edges-cycle}, $C$ contains 
	two edges of the form $(a^i_1,a^j_2)\in A_1\times A_2$ and $(b^{i'}_1,b^{j'}_2)\in B_1\times B_2$.
	Since all other edge weights in $C$ are $0$, we conclude
	$w(a^i_1,a^j_2)+w(b^{i'}_1,b^{j'}_2)=2k^3$,
	i.e. $k^3+ki+j +k^3-(ki'+j')=2k^3$, or $k(i-i')=(j'-j)$.
	The fact that $|j'-j|<k$ implies $i=i'$ and $j=j'$, completing the proof.
\end{proof}

Having constructed a family of lower bound graphs, we are now ready to prove Theorem~\ref{thm: wCycles}.
\begin{proofof}{Theorem~\ref{thm: wCycles}}
	Note that $n\in \Theta(k)$, and thus $K=|x|=|y|=\Theta(n^2)$. 
	Furthermore, the only edges in the cut $E(V_A, V_B)$ are the edges $(c_{A_1},c_{B_1})$ and $(c_{A_2},c_{B_2})$.
	Since Lemma~\ref{lemma: maincycles} shows that $\{G_{x,y}\}$ is a family of lower bound graphs, we apply Theorem~\ref{thm: general lb framework} on the above partition to deduce that any algorithm in the \cgst{} model for deciding whether a given graph contains a simple cycle of length $8$ and weight $W=2k^3$ requires at least $\Omega(K/\log n)=\Omega(n^2/\log n)$ rounds.
\end{proofof}

\subsection{Identical Subgraphs Detection}
\label{sec:idsubg}
In this section we show the strongest possible, quadratic lower bound,
for a global decision problem which can be solved in linear time in the sequential setting.

Consider the following graph problem.

\begin{definition}(The Identical Subgraphs Detection Problem)
	\label{def:identicalproblem}\newline
	Given a weighted graph $G=(V,E,w)$,
	a partition $V=V_A\dot\cup V_B$, $\size{V_A}=\size{V_B}$, 
	and node labeling $V_A=\{a^0,...,a^{k-1}\}$ and $V_B=\{b^0,...,b^{k-1}\}$, 
	the \emph{Identical Subgraphs Detection} problem is to determine whether the subgraph induced by $V_A$ is identical to the subgraph induced by $V_B$, in the sense that for each $i,j\in\set{0,\ldots, k-1}$ it holds that $(a^i,a^j)\in E$ if and only if $(b^i,b^j)\in E$ and $w(a^i,a^j)=w(b^i,b^j)$ if these edges exist.
\end{definition}

The identical subgraphs detection problem can be solved easily in linear time in the sequential setting by a single pass over the set of edges. However, as we prove next, it requires a quadratic number of rounds in the \cgst{} model, for any deterministic solution (note that this restriction did not apply in the previous sections).
We emphasize that in the distributed setting, the input to each node in $A$ or $B$ in this problem includes its enumeration as $a_i$ or $b_i$, and the weights of its edges.
The outputs of all nodes should be $\true$ if the subgraphs are identical,
and $\false$ otherwise.

\begin{theorem}\label{thm: Identical Subgraphs}
	Any deterministic algorithm for solving the identical subgraphs detection problem requires $\Omega(n^2)$ rounds.
\end{theorem}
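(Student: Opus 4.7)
The plan is to follow the standard framework of Theorem~\ref{thm: general lb framework}, but to reduce from $\eq_K$ with $K=\Theta(n^2\log n)$ instead of $\Theta(n^2)$, using edge \emph{weights} to pack $\Theta(\log n)$ input bits into each potential intra-subgraph edge. This extra $\log n$ factor in the input length exactly cancels the $\log n$ loss incurred by $O(\log n)$-bit messages in Theorem~\ref{thm: general lb framework}, producing a clean $\Omega(n^2)$ bound. Since the target is a deterministic lower bound, the reduction uses the deterministic communication complexity $\CC(\eq_K)=\Omega(K)$.

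Concretely, I would take $V_A=\set{a^0,\ldots,a^{k-1}}$ and $V_B=\set{b^0,\ldots,b^{k-1}}$ with $k=\Theta(n)$, turn each of $V_A,V_B$ into a weighted clique, and add a single fixed edge $(a^0,b^0)$ so that the graph is connected and the cut is $\size{E(V_A,V_B)}=1$. Given inputs $x,y\in\set{0,1}^K$ with $K=\binom{k}{2}\lceil\log n\rceil=\Theta(n^2\log n)$, I would partition $x$ into $\binom{k}{2}$ blocks of $\lceil\log n\rceil$ bits each and use each block to fix the weight of one edge of the $V_A$-clique, according to a canonical ordering of the pairs $(i,j)$ with $i<j$; the string $y$ determines the weights of the $V_B$-clique analogously. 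Only edges inside $V_A$ depend on $x$ and only edges inside $V_B$ depend on $y$, and by the choice of weights the two induced subgraphs are identical iff $x=y$, so $\set{G_{x,y}}$ is a family of lower bound graphs with respect to $\eq_K$ and the predicate $P$ stating that ``the subgraphs induced by $V_A$ and $V_B$ are identical''.

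Applying Theorem~\ref{thm: general lb framework} with $\size{C}=1$ and $\CC(\eq_K)=\Omega(n^2\log n)$ then yields the claimed deterministic lower bound of $\Omega(n^2\log n/\log n)=\Omega(n^2)$ rounds. The point I expect to be most delicate, and hence the main obstacle, is arguing that the reduction really harvests the full $\Theta(\log n)$ bits per potential edge: one must notice that Definition~\ref{def:family} explicitly allows the \emph{weight} (and not only the existence) of an edge in $V_A\times V_A$ to depend on $x$, and that because $W=\poly(n)$ each such weight still fits into a single $O(\log n)$-bit message, so the simulation behind Theorem~\ref{thm: general lb framework} indeed charges at most $O(\log n)$ bits per cut edge per round. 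The remaining verifications---items~\ref{ItemInLBGraphs: va}--\ref{ItemInLBGraphs: pandf} of Definition~\ref{def:family}, the connectivity of $G_{x,y}$, and the usual Alice/Bob simulation across the single cut edge---are then routine.
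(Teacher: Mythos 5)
Your proposal matches the paper's proof essentially verbatim: the same two weighted $k$-cliques on $V_A$ and $V_B$ joined by the single cut edge $(a^0,b^0)$, the same packing of $\Theta(\log n)$ weight bits into each of the $\binom{k}{2}$ intra-clique edges to obtain $K=\Theta(n^2\log n)$, and the same reduction from $\eq_K$ via Theorem~\ref{thm: general lb framework}. The observation you single out as delicate---that Definition~\ref{def:family} allows edge \emph{weights} to depend on the input, so each edge carries $\Theta(\log n)$ bits---is exactly what the paper exploits, and the rest of your argument also lines up with the paper's.
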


To prove Theorem~\ref{thm: Identical Subgraphs} we describe a family of lower bound graphs.

~\\
\noindent\textbf{The fixed graph construction:}
The fixed graph $G=(V,E)$ is composed of two disjoint cliques on sets of $k$-nodes each,
denoted $V_A=\{a^0,...,a^{k-1}\}$ and $V_B=\{b^0,...,b^{k-1}\}$,
and one extra edge $(a^0,b^0)$
(see Figure~\ref{fig: identical-subg}).

\begin{figure}[t]
	\begin{center}
		\includegraphics[scale=1.2,
		trim=2cm 5cm 5cm 2cm,clip]{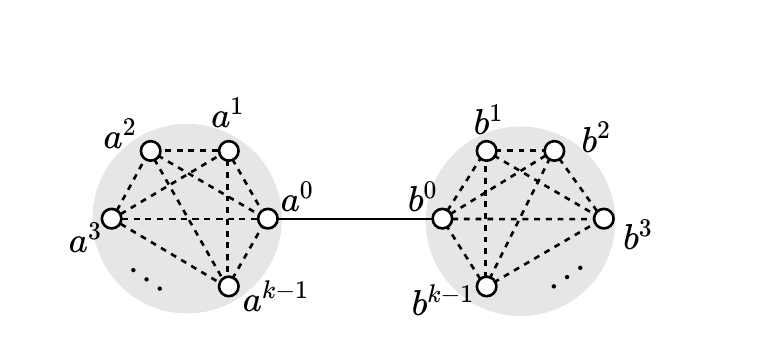}
	\end{center}
	\caption{Lower bound graph for the identical subgraphs detection problem}
	\label{fig: identical-subg}
\end{figure}

~\\
\noindent\textbf{Adding edge weights corresponding to the strings $x$ and $y$:}
Given two binary strings $x$ and $y$, each of $K=\binom{k}{2}\log n $ bits, augment the graph $G$ with additional edge weights as follows, to define $G_{x,y}$. 
For simplicity, assume that $x$ and $y$ are vectors of $(\log n)$-bit numbers, each having $\binom{k}{2}$ entries enumerated as $x[i,j]$ and $y[i,j]$, with $i<j$, $i,j\in \set{0,\ldots,k-1}$. 
For each such $i$ and $j$ set the weights $w(a^i,a^j)=x[i,j]$ and $w(b^i,b^j)=y[i,j]$,
and set $w(a^0,b^0)=0$.
Note that $\set{G_{x,y}}$ is a family of lower bound graphs with respect to $\eq_K$ and the predicate $P$ that says that the subgraphs are identical in the aforementioned sense.

\begin{proofof}{Theorem~\ref{thm: Identical Subgraphs}}
	Note that $n\in \Theta(k)$, and thus $K=|x|=|y|=\Theta(n^2\log n)$. Furthermore, the only edge in the cut $E(V_A, V_B)$ is the edge $(a^0,b^0)$. Since $\set{G_{x,y}}$ is a family of lower bound graphs, we can apply Theorem~\ref{thm: general lb framework} on the above partition to deduce that because of the lower bound for $\eq_K$, any deterministic algorithm in the \cgst{} model for solving the identical subgraphs detection problem requires at least $\Omega(K/\log n)=\Omega(n^2)$ rounds.
\end{proofof}

In a deterministic distributed algorithm for the identical subgraphs detection problem running on our family of lower bound graphs,
information about essentially all the edges and weights in the subgraphs induced on $V_A$ or $V_B$ needs to be sent across the edge $(a^0,b^0)$. This might raise the suspicion that this problem is reducible to learning the entire graph, making the lower bound trivial.
To argue that this is far from being the case,
we present a randomized algorithm that solves the identical subgraphs detection problem in $O(D)$ rounds and succeeds w.h.p.
This has the additional benefit of providing the strongest possible separation between deterministic and randomized complexities for global decision problems in the \cgst{} model, as the former is $\Omega(n^2)$ and the latter is at most $O(D)$.

\begin{theorem}
	\label{thm: Identical Subgraphs randomized alg}
	There is a randomized that solves the identical subgraphs detection problem
	with probability at least $1-O(1/n^2)$	in $O(D)$ rounds.
\end{theorem}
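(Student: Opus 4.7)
The plan is to reduce the identical-subgraphs test to a single polynomial-identity check over a large prime field, and to realize that check via one broadcast and one convergecast over a BFS tree. The key observation is that the two induced subgraphs are identical precisely when two low-degree polynomials---one assembled from the edges inside $V_A$ and one from the edges inside $V_B$---coincide as formal polynomials; evaluating both at a common random point then collapses each side to a single $O(\log n)$-bit field element that can be delivered to a leader in $O(D)$ rounds. This is exactly what reconciles the $\Omega(n^2)$ deterministic lower bound of Theorem~\ref{thm: Identical Subgraphs} with a $O(D)$ randomized upper bound: a random evaluation point compresses the $\Theta(n^2\log n)$ bits of input into an $O(\log n)$-bit fingerprint that still distinguishes identical from non-identical subgraphs with overwhelming probability.

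Concretely, I would first build a BFS tree rooted at the minimum-id node and spend one additional round in which every node sends its label (its side and its enumeration index) to all of its neighbors, so that along every edge the endpoints know the pair $(i,j)$ of indices involved. The root then samples a prime $p=\Theta(n^5)$ and a uniform $r\in\mathbb{F}_p$ and broadcasts $(p,r)$ down the tree in $O(D)$ rounds. Each node $a^i\in V_A$ locally computes
\[X_i \;=\; \sum_{\substack{j>i\\ (a^i,a^j)\in E}} w(a^i,a^j)\cdot r^{ik+j} \pmod{p},\]
and each $b^i\in V_B$ computes the analogous $Y_i$. Two convergecasts, pipelined on the same tree, then aggregate $X=\sum_i X_i$ and $Y=\sum_i Y_i$ at the root in $O(D)$ rounds, after which the root broadcasts the single bit $[X=Y]$ as the answer.

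For correctness, set $w_A(i,j):=w(a^i,a^j)$ if that edge exists and $0$ otherwise, and define $w_B$ analogously; then $X=P_A(r)$ and $Y=P_B(r)$ for the polynomials $P_A(t)=\sum_{i<j}w_A(i,j)\,t^{ik+j}$ and $P_B(t)=\sum_{i<j}w_B(i,j)\,t^{ik+j}$. Equality of the two induced subgraphs is exactly equivalent to $P_A\equiv P_B$ as formal polynomials, so if the subgraphs are identical the algorithm always answers correctly; if they differ, $P_A-P_B$ is a nonzero polynomial of degree strictly less than $k^2\leq n^2$, and by the Schwartz--Zippel lemma the probability that $r$ is one of its roots is at most $n^2/p=O(1/n^3)$, well within the required $O(1/n^2)$ error. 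The only point that requires care, and which I view as the main accounting step, is verifying that every message carries either the pair $(p,r)$ or a partial sum in $\mathbb{F}_p$---hence $O(\log n)$ bits---so that the broadcast of $(p,r)$ and the two convergecasts each fit the \cgst{} bandwidth and finish in $O(D)$ rounds; this is immediate since $p=\poly(n)$.
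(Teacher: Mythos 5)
Your argument is correct, but it proves the theorem by a genuinely different fingerprint than the paper uses. The paper adapts the classic randomized \textsc{Equality} protocol: the whole induced subgraph on each side is packed into a $K=\Theta(n^2\log n)$-bit string, the root picks a uniformly random prime $p$ among the first $K^2$ primes, each side aggregates $\bar x\bmod p$ (resp.\ $\bar y\bmod p$) up a BFS tree, and the root compares; the randomness lives in the choice of $p$, and correctness is the usual ``a wrong $x\neq y$ can have at most $\log_2 2^K = K$ prime factors'' argument, giving error $1/K$. You instead fix the field $\mathbb{F}_p$, view each side as the coefficient vector of a polynomial $P_A,P_B$ of degree $<k^2$, and put all the randomness into the evaluation point $r$, invoking Schwartz--Zippel. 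Both are $O(D)$-round, $O(\log n)$-bits-per-message solutions with $O(\log n)$ bits of shared randomness and $1/\poly(n)$ error, so they buy essentially the same thing; your version exposes the algebraic structure of the encoding a bit more cleanly, while the paper's avoids any discussion of $W$ by explicitly unpacking weights into bits before fingerprinting.

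One small bookkeeping gap you should close: for $P_A-P_B$ to be a \emph{nonzero} polynomial over $\mathbb{F}_p$ whenever the subgraphs differ, you need $p>W$, so that two distinct weights in $\{0,\dots,W\}$ remain distinct mod $p$. Fixing $p=\Theta(n^5)$ does not guarantee this, since the paper only assumes $W=\poly(n)$ with no bound on the exponent. The fix is immediate --- choose $p$ to be any prime in $(\max(W,n^3),\,2\max(W,n^3))$, which still has $O(\log n)$ bits and still gives Schwartz--Zippel error at most $k^2/p=O(1/n)$ (and you can push this to $O(1/n^2)$ or better by taking $p\geq n^4$). With that adjustment your proof is complete and correct.
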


\begin{proof}
Our starting point is the following randomized algorithm for the $\eq_K$ problem, presented, e.g., in~\cite[Exersise 3.6]{KushilevitzN:book96}. Alice chooses a prime number $p$ among the first $K^2$ primes uniformly at random. She treats her input string $x$ as a binary representation of an integer $\bar x=\sum_{\ell=0}^{K-1} 2^\ell x_\ell$, and sends $p$ and $\bar x\pmod p$ to Bob. Bob similarly computes $\bar y$, compares $\bar x\bmod p$ with $\bar y\bmod p$, and returns $\true$ if they are equal and false otherwise. The error probability of this protocol is at most $1/K$.

We present a simple adaptation of this algorithm for the identical subgraph detection problem. Consider the following encoding of a weighted induced subgraph on $V_A$: for each pair $i,j$ of indices, we have $\lceil\log W \rceil+1$ bits, indicating the existence of the edge and its weight (recall that $W\in\poly n$ is the upper bound on the edge weights). This weighted induced subgraph is thus represented by a $K\in O(n^2\log n)$ bit-string, denoted $x = x_0,\ldots, x_{K-1}$, and each pair $(i,j)$ has a set $S_{i,j}$ of indices representing the edge $(a^i,a^j)$ and its weight.
The bits $\set{x_\ell\mid \ell\in s_{i,j}}$ are known to both $a^i$ and $a^j$, and in the algorithm we use the node with smaller index in order to encode these bits. Similarly, a $K\in O(n^2\log n)$ bit-string, denoted $y = y_0,\ldots, y_{K-1}$ encodes a weighted induced subgraph on $V_B$.

\textbf{The Algorithm.}
Given a graph with nodes enumerated as in Definition~\ref{def:identicalproblem}, 
the algorithm starts with an arbitrary node, say $a^0$, and constructs a BFS tree from it, which completes in $O(D)$ rounds.
Then, $a^0$ chooses a prime number $p$ among the first $K^2$ primes uniformly at random and sends $p$ to all the nodes over the tree, in another $O(D)$ rounds.

Each node $a^i$ computes the sum $\sum_{j>i}\sum_{\ell\in S_{i,j}} x_\ell 2^\ell \mod p$, and the nodes then aggregate these local sums modulo $p$ up the tree, until $a^0$ computes the sum
$\bar x\mod p = \sum_{j\neq i}\sum_{\ell\in S_{i,j}} x_\ell 2^\ell \mod p$. A similar procedure is then invoked by $a^0$ (not by $b^0$) w.r.t.~$\bar{y}$. Finally, $a^0$ compares $\bar x\mod p$ and $\bar y\mod p$, and downcasts over the BFS tree its output, which is $\true$ if these values are equal and is $\false$ otherwise.

If the subgraphs are identical, $a^0$ always returns $\true$, while otherwise their encoding differs in at least one bit, and as in the case of $\eq_K$, $a^0$ returns $\true$ falsely with probability at most $1/K\in O(1/n^2)$.
\end{proof}

\begin{figure}[t]
	\begin{center}
		\includegraphics[scale=1.2,
		trim=2cm 5cm 5cm 2cm,clip]{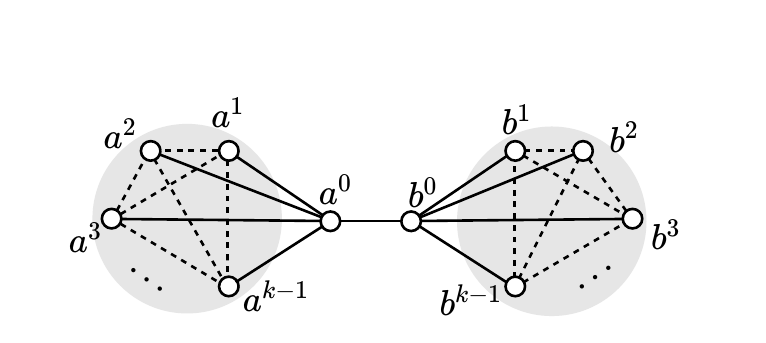}		
	\end{center}
	\caption{Lower bound graph for the balanced hourglass problem}
	\label{fig: hourglass}
\end{figure}

A more artificial problem can give even a larger gap:
given the labeled graph $G$ defined in our lower bound construction
with a weight function satisfying $w(a^0,a^j)=0$ and $w(b^0,b^j)=0$ for all $j$,
the \emph{balanced hourglass problem} is to determine  whether 
$w(a^i,a^j)=w(b^i,b^j)$ for all $i,j\in\set{0,\ldots,k-1}$
(see Figure~\ref{fig: hourglass})
The same lower and upper bounds hold: for the lower bound,
the inputs are of $\binom{k-1}{2}\log n$ bits, which is asymptotically the same;
for the upper bound,
we get $O(D)=O(1)$.
Thus, this problem gives asymptotically the maximal possible gap for any decision in the \cgst{} model.

\section{Weighted APSP}
\label{sec:APSP}

In this section we use the following, natural extension of Definition~\ref{def:family}, in order to address more general 2-party functions, as well as distributed problems that are not decision problems.

For a function $f:\set{0,1}^{K_1}\times\set{0,1}^{K_2}\to \set{0,1}^{L_1}\times\set{0,1}^{L_2}$, we define a family of lower bound graphs in a similar way as Definition~\ref{def:family}, except that we replace item~\ref{ItemInLBGraphs: pandf} in the definition with a generalized requirement that says that for $G_{x,y}$, the values of the of nodes in $V_A$ uniquely determine the left-hand side of $f(x,y)$, and the values of the of nodes in $V_B$ uniquely determine the right-hand side of $f(x,y)$. Next, we argue that theorem similar to Theorem~\ref{thm: general lb framework} holds for this case.

\begin{theorem}
\label{thm: general lb framework APSP}
Fix a function $f:\set{0,1}^{K_1}\times\set{0,1}^{K_2}\to \set{0,1}^{L_1}\times\set{0,1}^{L_2}$ and a graph problem $P$. If there is a family $\set{G_{x,y}}$ of lower bound graphs with $C = E(V_A, V_B)$ then any deterministic algorithm for solving $P$ requires $\Omega (\CC(f)/\size{C}\log n)$ rounds, and any randomized algorithm for deciding $P$ requires $\Omega (\CC^R(f)/\size{C}\log n)$ rounds.
\end{theorem}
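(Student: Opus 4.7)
The plan is to mimic the proof of Theorem~\ref{thm: general lb framework} almost verbatim, with two small adjustments to handle (i) the asymmetry in input lengths $K_1,K_2$ and in output lengths $L_1,L_2$, and (ii) the fact that the output is no longer a single shared bit but is split between the two players. First I would let $ALG$ be any algorithm solving $P$ on the \cgst{} model in $T$ rounds, and give Alice the input $x\in\{0,1\}^{K_1}$ and Bob the input $y\in\{0,1\}^{K_2}$. As in the original proof, items~\ref{ItemInLBGraphs: va} and~\ref{ItemInLBGraphs: vb} of Definition~\ref{def:family} let Alice build the subgraph of $G_{x,y}$ induced on $V_A$ using only $x$, and Bob build the subgraph induced on $V_B$ using only $y$; the partition is disjoint, so each player controls all edges internal to their side.

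Next I would describe the simulation. Round by round, each player runs $ALG$ locally for every node on their side; messages sent across edges inside $V_A$ (resp.~$V_B$) are handled internally by Alice (resp.~Bob) for free, while messages sent across cut edges $C=E(V_A,V_B)$ are exchanged explicitly. Each such exchange costs $O(\log n)$ bits per cut edge per round, so the total communication used is $O(T\cdot |C|\cdot \log n)$ bits.

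After $T$ rounds, Alice knows the output of every node in $V_A$ and Bob knows the output of every node in $V_B$. Here I would invoke the generalized third property of the family of lower bound graphs: the outputs on $V_A$ determine the first component of $f(x,y)\in\{0,1\}^{L_1}\times\{0,1\}^{L_2}$, and the outputs on $V_B$ determine the second component. Thus the simulation constitutes a two-party protocol $\pi$ computing $f(x,y)$, with $\CC(\pi)=O(T\cdot|C|\cdot\log n)$. Rearranging gives $T=\Omega(\CC(f)/(|C|\log n))$ for deterministic $ALG$; for randomized $ALG$ that succeeds with high probability, the same simulation yields a randomized protocol with success probability at least $2/3$ (possibly after amplification or by the standard w.h.p.-implies-$2/3$ observation), hence $T=\Omega(\CC^R(f)/(|C|\log n))$.

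The only step that requires any care is the passage from ``each player knows the outputs on its side'' to ``the pair $(x,y)\mapsto f(x,y)$ has been computed,'' since in the generalized setting neither player alone learns the full $f(x,y)$. This is harmless because the definition of communication complexity for functions with two-sided output is precisely that at the end of the protocol Alice outputs the left component and Bob outputs the right component; the lower bounds $\CC(f)$ and $\CC^R(f)$ are stated with respect to this convention, so no extra communication or argument is needed beyond noting that the generalized item~\ref{ItemInLBGraphs: pandf} is exactly what is required. I expect no real obstacle beyond checking that this output convention matches the one implicit in the communication-complexity bounds one later plugs in.
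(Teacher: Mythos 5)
Your proposal matches the paper's argument: the paper itself proves this theorem by noting it is the same simulation as Theorem~\ref{thm: general lb framework}, with the only change being that the generalized item~\ref{ItemInLBGraphs: pandf} lets Alice's side determine her component of $f$ and Bob's side his. Your discussion of the two-sided-output convention for $\CC(f)$ and $\CC^R(f)$ is the right observation, and otherwise the proposal is essentially verbatim the paper's proof.
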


The proof is similar to that of Theorem~\ref{thm: general lb framework}. Notice that the only difference between the theorems, apart from the sizes of the inputs and outputs of $f$, are with respect to item~\ref{ItemInLBGraphs: pandf} in the definition of a family of lower bound graphs. However, the essence of this condition remains the same and this is all that is required for the proof: The values that a solution to $P$ assigns to nodes in $V_A$ determine the output of Alice for $f(x,y)$, and the values that a solution assigns to nodes in $V_B$ determine the output of Bob.

\subsection{A Linear Lower Bound for Weighted APSP}\label{sec:linearlb}

Nanongkai~\cite{Nanongkai14} showed that any algorithm in the \cgst{} model for computing a $\poly(n)$-approximation for weighted all pairs shortest paths (APSP) requires at least $\Omega(n/\log n)$ rounds. In this section we show that a slight modification to this construction yields an $\Omega(n)$ lower bound for computing exact weighted APSP. As explained in the introduction, this gives a separation between the complexities of the weighted and unweighted versions of APSP. At a high level, while we use the same simple topology for our lower bound as in~\cite{Nanongkai14}, the reason that we are able to shave off the extra logarithmic factor is because our construction uses $O(\log{n})$ bits for encoding the weight of each edge out of many optional weights, while in~\cite{Nanongkai14} only a single bit is used per edge for encoding one of only two options for its weight.

\begin{theorem}\label{Thm:wAPSP}
Any algorithm for computing weighted all pairs shortest paths requires at least $\Omega(n)$ rounds.
\end{theorem}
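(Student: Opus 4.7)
The plan is to adapt Nanongkai's simple star-like construction by making each input-carrying edge encode $\Theta(\log n)$ bits via its weight, rather than a single bit, so the total input size grows from $\Theta(n)$ to $\Theta(n\log n)$ while the cut remains of constant size. Concretely, I would define $V_A = \{a, u_1,\ldots, u_{k}\}$ and $V_B = \{b, v_1,\ldots, v_{k}\}$ with $k = \Theta(n)$, connected only by the single cut edge $(a,b)$ of weight $0$. The inputs $x,y$ are viewed as vectors of $k$ weights, each in $\{1,\ldots,n\}$ and hence encoded by $\Theta(\log n)$ bits, so $K = |x| = |y| = \Theta(n\log n)$. The edge weights depending on $x$ are $w(a,u_i) = x[i]$ for every $i$, and those depending on $y$ are $w(b,v_j) = y[j]$ for every $j$, in line with items~\ref{ItemInLBGraphs: va} and~\ref{ItemInLBGraphs: vb} of the family definition.

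Next, I would argue that computing APSP on $G_{x,y}$ forces the two sides to learn each other's inputs. Since the only cut edge is $(a,b)$ with weight $0$, the unique shortest path from $a$ to $v_j$ traverses $(a,b)$ and then the edge $(b,v_j)$, so $\dist(a,v_j) = y[j]$; symmetrically $\dist(b,u_i) = x[i]$. Hence after any APSP algorithm terminates, node $a$ (simulated by Alice) holds all of $y$ in its local output, and node $b$ (simulated by Bob) holds all of $x$. In the generalized framework of Theorem~\ref{thm: general lb framework APSP}, this makes $\{G_{x,y}\}$ a family of lower bound graphs for the two-party function $f(x,y) = (y,x)$, whose (deterministic or randomized) communication complexity is $\Omega(|x|+|y|) = \Omega(n\log n)$ simply because each of Alice and Bob must learn a fresh string of that length held by the other party.

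Finally, I would apply Theorem~\ref{thm: general lb framework APSP} with $|C| = |\{(a,b)\}| = 1$. Each round of the \cgst{} simulation across the cut transmits $O(\log n)$ bits, so a $T$-round algorithm produces an $O(T\log n)$-bit protocol for $f$, yielding $T = \Omega(n\log n / \log n) = \Omega(n)$. This matches the statement of Theorem~\ref{Thm:wAPSP}, and the saving of the $\log n$ factor over~\cite{Nanongkai14} comes precisely from packing $\Theta(\log n)$ bits into each input-carrying edge weight instead of one, so that the size of the input saturates the per-edge bandwidth of the \cgst{} model.

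The only delicate point, and the one I would treat most carefully, is verifying that $f(x,y) = (y,x)$ legitimately fits the generalized lower bound framework: Alice's required output is the vector of distances $(\dist(a,v_j))_j$ held by a node she simulates, and likewise for Bob, so item~\ref{ItemInLBGraphs: pandf} (in its generalized form) is automatically satisfied by any correct APSP algorithm. Once this is spelled out, and it is noted that $n \in \Theta(k)$ so that $K = \Theta(n\log n)$, the rest is a direct plug-in to Theorem~\ref{thm: general lb framework APSP}.
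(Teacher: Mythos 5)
Your proof is correct and takes essentially the same approach as the paper: a star-like construction with a single cut edge $(a,b)$ of weight $0$, where each input-carrying edge packs $\Theta(\log n)$ bits into its weight so that $K=\Theta(n\log n)$, and the APSP output at $a$ (resp. $b$) reveals the other side's string, giving $\Omega(K/\log n)=\Omega(n)$ via Theorem~\ref{thm: general lb framework APSP}. The only superficial difference is that the paper makes the construction one-sided (Bob has no input and merely needs to learn Alice's $x$, i.e.\ $f(x,\bot)=(\bot,x)$), whereas you symmetrize it with two stars and $f(x,y)=(y,x)$; both variants are equally valid and yield the same bound.
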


The reduction is from the following, perhaps simplest, two-party communication problem. Alice has an input string $x$ of size $K$ and Bob needs to learn the string of Alice. 
In terms of the above definition, this problem is computing the function $f:\set{0,1}^{K}\times\set{0,1}^{0}\to \set{0,1}^{0}\times\set{0,1}^{K}$
defined by $f(x,\bot)=(\bot,x)$.
Any algorithm (possibly randomized) for solving this problem requires at least $\Omega(K)$ bits of communication, by a trivial information theoretic argument.

Notice that the problem of having Bob learn Alice's input is not a binary function as addressed in Section~\ref{sec:preliminaries}. Similarly, computing weighted APSP is not a decision problem, but rather a problem whose solution assigns a value to each node (which is its vector of distances from all other nodes). We therefore use the extended Theorem~\ref{thm: general lb framework APSP} stated above.

~\\
\noindent\textbf{The fixed graph construction:}
The fixed graph $G = (V,E)$ is composed of $n-2$ integers, $A=\{a_0,...,a_{n-3}\}$,
all connected to an additional node $a$,
which is  connected to another node $b$.
Set $V_A=A\cup\set{a}$ and $V_B=\set{b}$.

~\\
\noindent\textbf{Adding edge weights corresponding to the string $x$:} 
Given a binary string $x$ of size $K=(n-2)\log n$,
assume for simplicity that $x$ is a vector of $n-2$ numbers, each represented by $O(\log n)$ bits.
To define $G_{x}$, add to $G$ the edge weights $w(a^i,a)=x[i]$ for $i\in\set{0,\ldots,n-3}$, and $w(a,b)=0$.
It is straightforward to see that $G_{x}$ is a family of lower bound graphs for the function $f$.

\begin{proofof}{Theorem~\ref{Thm:wAPSP}}
To prove Theorem~\ref{Thm:wAPSP}, note that $K=|x|=\Theta(n\log n)$,
and that the cut $E(V_A, V_B)$ has a single edge, $(a,b)$.
Since $\{G_{x}\}$ is a family of lower bound graphs with respect to $f$ on $K$ bits, Theorem~\ref{thm: general lb framework APSP} implies that any algorithm in the \cgst{} model for computing weighted APSP requires $\Omega(K /\log n)=\Omega(n)$ rounds.
\end{proofof}

\subsection{The Alice-Bob Framework Cannot Give a Super-Linear Lower Bound for Weighted APSP}
\label{sec:APSPalicebob}
In this section we argue that a reduction from any 2-party function with a fixed partition of the graph into Alice and Bob's sides is incapable of providing a super-linear lower bound for computing weighted all pairs shortest paths in the \cgst{} model.
A more detailed inspection of our analysis shows a stronger claim:
our claim also holds for algorithms for the \cgstbcast{} model, where in each round each node must send the same $O(\log{n})$-bit message to all of its neighbors. The following theorem states our claim.

\begin{theorem}
\label{thm:noAliceBob}
Let $f:\set{0,1}^{K_1}\times\set{0,1}^{K_2}\to \set{0,1}^{L_1}\times\set{0,1}^{L_2}$ be a function and let $G_{x,y}$ be a family of lower bound graphs w.r.t.\ $f$ and the weighted APSP problem. When applying Theorem~\ref{thm: general lb framework APSP} to $f$ and $G_{x,y}$, the lower bound obtained for the number of rounds for computing weighted APSP is at most linear in $n$.
\end{theorem}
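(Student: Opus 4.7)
The plan is to show that any function $f$ admitting such a family of lower bound graphs for weighted APSP satisfies $\CC(f) = O(|C|\cdot n\log n)$; plugged into Theorem~\ref{thm: general lb framework APSP}, this caps the derived round lower bound at $\Omega(\CC(f)/(|C|\log n)) = O(n)$, as required. To bound $\CC(f)$ I will exhibit a concrete Alice--Bob protocol that computes the APSP output on $G_{x,y}$.

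The first step is a structural remark extracted from Definition~\ref{def:family}: since cut edges lie in $V_A\times V_B$, items~\ref{ItemInLBGraphs: va} and~\ref{ItemInLBGraphs: vb} force their existence and weights to be independent of both $x$ and $y$, so both parties know them a priori. Call a vertex of $V_A$ (respectively $V_B$) a \emph{portal} if it is incident to some cut edge, and write $P_A\subseteq V_A$, $P_B\subseteq V_B$ for the portal sets, with $|P_A|,|P_B|\le|C|$.

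The protocol then proceeds as follows. Alice will use $x$ to construct $G_{x,y}[V_A]$, compute the shortest-path distance $d_A(u,p)$ inside this subgraph for every $u\in V_A$ and $p\in P_A$, and send all these distances to Bob; Bob will do the symmetric thing with $d_B(v,q)$ for $v\in V_B$ and $q\in P_B$. This exchange costs $O(|C|\cdot n\log n)$ bits. Both parties then form the \emph{portal graph} $H$ on vertex set $P_A\cup P_B$, with edges carrying the weights $d_A(p,p')$ between portals in $P_A$, $d_B(p,p')$ between portals in $P_B$, and the original weight on each cut edge. A path-decomposition argument---any $G_{x,y}$-path breaks into maximal single-side segments separated by cut edges, each segment shortcut by the corresponding $H$-edge---will give $d_H(p,q) = d_{G_{x,y}}(p,q)$ for all $p,q\in P_A\cup P_B$, so both parties end up knowing all pairwise portal distances.

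From this shared information, Alice recovers $d_{G_{x,y}}(u,v)$ for every $u\in V_A$ and $v\in V$: when $v\in V_A$ she takes
\[
d_{G_{x,y}}(u,v)=\min\!\Bigl(d_A(u,v),\ \min_{p,p'\in P_A} d_A(u,p)+d_H(p,p')+d_A(p',v)\Bigr),
\]
and when $v\in V_B$ she takes $\min_{p\in P_A,\,q\in P_B} d_A(u,p)+d_H(p,q)+d_B(q,v)$, using only distances she either computed locally or received. Bob will do the symmetric computation for his side. By the generalized lower-bound-graph condition, these APSP outputs at $V_A$ and $V_B$ determine the respective halves of $f(x,y)$, so the protocol computes $f$ with $O(|C|\cdot n\log n)$ bits total. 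The main step requiring honest verification is the identity $d_H(p,q)=d_{G_{x,y}}(p,q)$ on portals; this is a standard multi-terminal shortcut argument and is the only mildly nontrivial piece of the plan, everything else being bookkeeping.
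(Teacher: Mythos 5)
Your proposal is correct and uses essentially the same idea as the paper's proof via Lemma~\ref{thm:AliceBobCompute}: exchange each side's portal-to-node distances across the cut (at cost $O(|V(C)|\,n\log n)=O(|C|\,n\log n)$ bits, since cut edges are input-independent by items~\ref{ItemInLBGraphs: va}--\ref{ItemInLBGraphs: vb}), then reconstruct all shortest-path distances, which by the generalized lower-bound-graph condition determines $f$, capping the derived bound at $O(n)$. The only cosmetic difference is that you build a portal graph $H$ on $P_A\cup P_B$ and piece together distances with a three-term minimum, whereas the paper has Alice build a virtual graph $G_A'$ on $V_A\cup C_B$ with $C_B\times C_B$ shortcut edges so that $G_A'$-distances already equal $G$-distances for pairs in $V_A'$, leaving only a two-term minimum over a single $C_B$ portal for targets in $V_B\setminus C_B$.
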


Roughly speaking, we show that given an input graph $G=(V,E)$ with a partition $V=V_A\dot\cup V_B$, such that the graph induced by the nodes in $V_A$ is simulated by Alice and the graph induced by nodes in $V_B$ is simulated by Bob, Alice and Bob can compute weighted all pairs shortest paths by communicating $O(n\log n)$ bits of information for each node touching the cut $C=E(V_A,V_B)$ induced by the partition. 
In this way, we show that any attempt to apply Theorem~\ref{thm: general lb framework APSP} cannot give a lower bound higher than $\Omega(n)$:
we consider an arbitrary function $f$, and an arbitrary family of lower bound graphs with respect to a function $f$ and the weighted APSP problem,
defined according to the extended definition from the beginning of Section~\ref{sec:APSP}.
We then prove that Alice and Bob can compute weighted APSP, which determines their output for $f$, by exchanging only $O(|V(C)|n\log n)$ bits, where $V(C)$ is the set of nodes touching $C$.
This implies that $\CC(f)$ is at most $O(|V(C)|n\log n)$. 
Thus, the lower bound obtained by Theorem~\ref{thm: general lb framework APSP} cannot be better than $\Omega(n)$, and hence no super-linear lower can be deduced by this framework.

\begin{figure}
	\begin{center}
		\begin{subfigure}[t]{0.3\textwidth}
			\begin{center}
				\includegraphics[
				trim=2cm 5.5cm 5cm 2.5cm,clip]{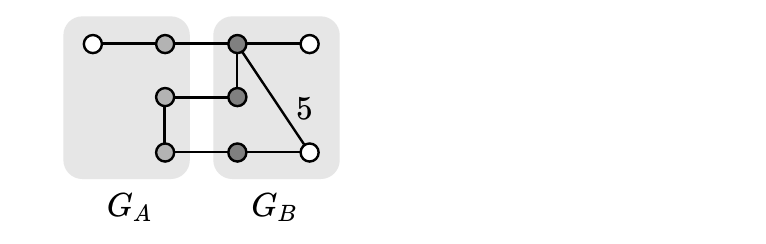}
			\end{center}
			\caption{A graph split into $G_A$ and $G_B$.
			The shaded nodes are $C$: light-shaded are $C_A$ and dark-shaded are $C_B$}
		\end{subfigure}
	\hfill
		\begin{subfigure}[t]{0.3\textwidth}
			\begin{center}
				\includegraphics[
				trim=2cm 5.5cm 5cm 2.5cm,clip]{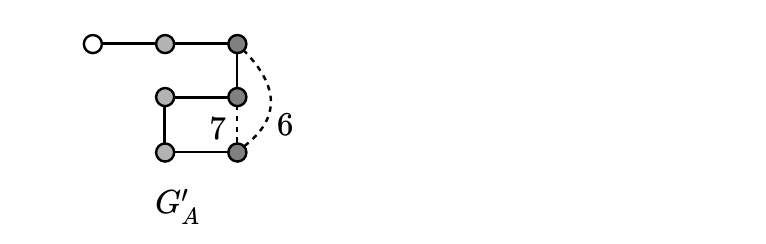}
			\end{center}
			\caption{The virtual graph $G'_A$ constructed by Alice}
		\end{subfigure}
	\hfill
		\begin{subfigure}[t]{0.3\textwidth}
			\begin{center}
				\includegraphics[
				trim=2.5cm 5.5cm 5cm 2.5cm,clip]{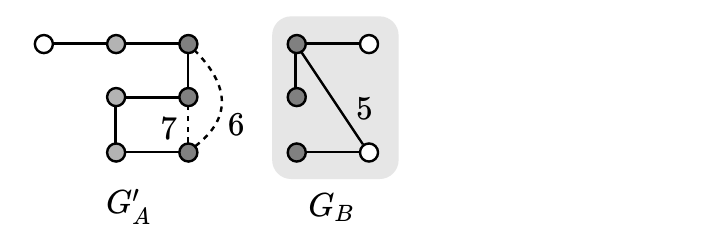}
			\end{center}
			\caption{Using $G'_A$ and distances in $G_B$, Alice computes 
			the distances from $V_A$ to all the graph node}
		\end{subfigure}
	\end{center}
	\caption{Lemma~\ref{thm:AliceBobCompute} and its proof applied to a specific graph. All unmarked edges weight 1}
	\label{fig: distance alg}
\end{figure}

Formally, given a graph $G=(V_A\dot\cup V_B,E)$, 
let $G_A=(V_A,E_A)$ be the subgraph induced by the nodes in $V_A$ and let $G_B=(V_B,E_B)$ be the subgraph induced by the nodes in $V_B$ 
(see Figure~\ref{fig: distance alg}(a)).
Let $C=E(V_A,V_B)$,
and let $V(C)$ denote the nodes touching the cut $C$, with $C_A=V(C)\cap V_A$ and $C_B=V(C)\cap V_B$.
For a graph $H$, denote the weighted distance between two nodes $u,v$ by $\wdist_{H}(u,v)$.

\begin{lemma}
\label{thm:AliceBobCompute}
Let $G=(V_A\dot\cup V_B,E,w)$ be a weighted graph. 
Suppose that $G_A$,  $C_B$, $C$ and the values of $w$ on $E_A$ and $C$ are given as input to Alice, and that $G_B$, $C_A$, $C$ and the values of $w$ on $E_B$ and $C$ are given as input to Bob.

Then, Alice can compute the distances in $G$ from all nodes in $V_A$ to all nodes in $V$ and Bob can compute the distances from all nodes in $V_B$ to all the nodes in $V$, using $O(\size{V(C)}n\log n)$ bits of communication.
\end{lemma}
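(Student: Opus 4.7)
The plan is to use the classical ``portal'' idea: since every shortest path that crosses between $V_A$ and $V_B$ enters and leaves through the cut, the nodes of $V(C)$ can serve as portals that carry all the information Alice and Bob need to exchange about their side of the graph. Concretely, I would have each player first compute all-pairs shortest paths on their own side using only local information (no communication), then exchange a small amount of information: the distances in their subgraph from each of their cut endpoints to each of their nodes.

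In more detail, Alice locally computes $\wdist_{G_A}(u,v)$ for every $u,v\in V_A$, and Bob locally computes $\wdist_{G_B}(u,v)$ for every $u,v\in V_B$. Then Bob sends Alice the values $\wdist_{G_B}(c,v)$ for every $c\in C_B$ and every $v\in V_B$, and symmetrically Alice sends Bob the values $\wdist_{G_A}(c,u)$ for every $c\in C_A$ and every $u\in V_A$. Since edge weights are polynomial in $n$, each distance fits in $O(\log n)$ bits, so the total communication is $O(\size{C_B}\cdot\size{V_B}\log n + \size{C_A}\cdot\size{V_A}\log n) = O(\size{V(C)}n\log n)$. Alice then constructs a virtual graph $G'_A$ on the vertex set $V_A\cup C_B$ whose edges are $E_A\cup C$, with original weights, augmented by an edge $(c,c')$ of weight $\wdist_{G_B}(c,c')$ for each $c,c'\in C_B$; she runs a centralized shortest-path computation on $G'_A$ to obtain $\wdist_{G'_A}(u,z)$ for every $u\in V_A$ and $z\in V_A\cup C_B$, and for $u\in V_A$, $v\in V_B$ she outputs $\min_{c\in C_B}\bigl(\wdist_{G'_A}(u,c)+\wdist_{G_B}(c,v)\bigr)$. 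Bob does the analogous computation with $G'_B$ on $V_B\cup C_A$.

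The main obstacle is proving correctness of the virtual-graph shortcut, namely that $\wdist_{G'_A}(u,z)=\wdist_G(u,z)$ for $u\in V_A$ and $z\in V_A\cup C_B$, and hence that Alice's final formula recovers $\wdist_G(u,v)$. The upper bound $\wdist_{G'_A}(u,z)\le \wdist_G(u,z)$ is easy: any shortest path in $G$ decomposes into maximal segments inside $V_A$ (whose weight appears directly in $E_A$), cut edges, and maximal segments inside $V_B$ between two cut endpoints $c,c'\in C_B$ (whose weight is at least $\wdist_{G_B}(c,c')$, i.e.\ the weight of the virtual shortcut in $G'_A$). For the lower bound $\wdist_{G'_A}(u,z)\ge \wdist_G(u,z)$, I would expand any path in $G'_A$ by replacing each virtual shortcut $(c,c')$ by a corresponding actual shortest path inside $G_B$; this produces a walk in $G$ of the same total weight, hence $\wdist_G(u,z)$ is at most that weight. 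Finally, for $v\in V_B\setminus C_B$ any shortest $u$-$v$ path enters $V_B$ through some $c\in C_B$, so the distance equals $\wdist_G(u,c)+\wdist_{G_B}(c,v)$ for the optimal $c$, matching Alice's formula. Once correctness is established, plugging the communication bound into Theorem~\ref{thm: general lb framework APSP} yields the theorem.
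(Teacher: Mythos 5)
Your proposal is essentially identical to the paper's proof: same exchange of $\{\wdist_{G_B}(c,v)\mid c\in C_B, v\in V_B\}$ and its symmetric counterpart, same virtual graph $G'_A$ on $V_A\cup C_B$ with edges $E_A\cup C$ plus shortcut cliques on $C_B$ weighted by $\wdist_{G_B}$, the same final formula $\min_{c\in C_B}(\wdist_{G'_A}(u,c)+\wdist_{G_B}(c,v))$ for nodes outside $C_B$, and the same two-sided path-replacement argument for correctness with the same $O(\size{V(C)}n\log n)$ communication count. No substantive differences.
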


\begin{proof}
	We describe a protocol for the required computation. For each node $u\in C_B$, Bob sends to Alice the weighted distances in $G_B$ from $u$ to all nodes in $V_B$, that is, Bob sends $\{\wdist_{G_B}(u,v) \mid u\in C_B, v\in V_B\}$
	(or $\infty$ for pairs of nodes not connected in $G_B$).
	Alice constructs a virtual graph $G_A'=(V_A',E_A',w_A')$ 
	(see Figure~\ref{fig: distance alg}(b)) 
	with the nodes $V_A' = V_A\cup C_B$ and edges $E_A'=E_A\cup C \cup (C_B\times C_B)$. 
	The edge-weight function $w_A'$ is defined by $w_A'(e)=w(e)$ for each $e\in E_A\cup C$, and by $w_A'(u,v)=w_{G_B}(u,v)$ for $u,v\in C_B$, as received from Bob. Alice then computes the set of all weighted distances in $G_A'$, $\{\wdist_{G_A'}(u,v) \mid u, v\in V_A'\}$.
	
	Alice assigns her output for the weighted distances in $G$ as follows
	(see Figure~\ref{fig: distance alg}(c)). 
	For two nodes $u,v\in V_A\cup C_B$, Alice outputs their weighted distance in $G_A'$, $\wdist_{G_A'}(u,v)$. For a node $u\in V_A'$ and a node $v\in V_B\setminus C_B$, Alice outputs $\min\{\wdist_{G_A'}(u,x)+\wdist_{G_B}(x,v)\mid x\in C_B\}$, where $\wdist_{G_A'}$ is the distance in $G_A'$ as computed by Alice, and $\wdist_{G_B}$ is the distance in $G_B$ that was sent by Bob.
	
	For Bob to compute his required weighted distances, similar information is sent by Alice to Bob, that is, Alice sends to Bob the weighted distances in $G_A$ from each node $u\in C_A$ to all nodes in $V_A$.
	Bob constructs the analogous graph $G_B'$ and outputs his required distance. The next paragraph formalizes this for completeness, but may be skipped by a convinced reader.
	
	Formally, Alice sends $\{\wdist_{G_A}(u,v) \mid u\in C_A, v\in V_A\}$. Bob constructs $G_B'=(V_B',E_B',w_B')$ with $V_B' = V_B\cup C_A$ and edges $E_B'=E_B\cup C \cup (C_A\times C_A)$. The edge-weight function $w_B'$ is defined by $w_B'(e)=w(e)$ for each $e\in E_B\cup C$, and $w_B'(u,v)$ for $u,v\in C_A$ is defined to be the weighted distance between $u$ and $v$ in $G_A$, as received from Alice (or $\infty$ if they are not connected in $G_A$). Bob then computes the set of all weighted distances in $G_B'$, $\{\wdist_{G_B'}(u,v) \mid u, v\in V_B'\}$. Bob assigns his output for the weighted distances in $G$ as follows. For two nodes $u,v\in V_B\cup C_A$, Bob outputs their weighted distance in $G_B'$, $\wdist_{G_B'}(u,v)$. For a node $u\in V_B'$ and a node $v\in V_A\setminus C_A$, Bob outputs $\min\{\wdist_{G_B'}(u,x)+\wdist_{G_A}(x,v)\mid x\in C_A\}$, where $\wdist_{G_B'}$ is the distance in $G_B'$ as computed by Bob, and $\wdist_{G_A}$ is the distance in $G_A$ that was sent by Alice.

~\\\textbf{Complexity.}
	Bob sends to Alice the distances from all nodes in $C_B$ to all node in $V_B$, which takes $O(\size{C_B}\size{V_B}\log n)$ bits, and similarly Alice sends
	$O(\size{C_A}\size{V_A} \log n)$ bits to Bob.
	Since $\size{V_A}\leq n$, $\size{V_B}\leq n$ and $\size{C_A}+\size{C_B}=\size{V(C)}$, we have 
	$\size{C_B}\size{V_B}\log n+\size{C_A}\size{V_A} \log n\leq \left(\size{C_A}+\size{C_B}\right)n\log n=
	\size{V(C)}n\log n$,
	and the players exchange a total of $O(\size{V(C)}n\log n)$ bits.
	
	~\\\textbf{Correctness.}
	By construction, for every edge $(u,v)\in C_B\times C_B$ in $G_A'$ with weight $\wdist_{G_A'}(u,v)$, there is a corresponding shortest path $P_{u,v}$ of the same weight in $G_B$. Hence, for any path $P'=(v_0,v_1,\ldots,v_k)$ in $G_A'$ between $v_0,v_k \in V_A'$, there is a corresponding path $P_{v_0,v_k}$ of the same weight in $G$, where $P$ is obtained from $P'$ by replacing every two consecutive nodes $v_i, v_{i+1}$ in $P\cap C_B$ by the path $P_{v_i,v_{i+1}}$ in $G_B$. Thus, $\wdist_{G_A'}(v_0,v_k)\geq \wdist_{G}(v_0,v_k)$.
	
	On the other hand,  for any shortest path $P=(v_0,v_1,\ldots,v_k)$ in $G$ connecting $v_0,v_k \in V_A'$, there is a corresponding path $P'$ of the same weight in $G_A'$, where $P'$ is obtained from $P$ by replacing any sub-path $(v_i,\ldots,v_j)$ of $P$ contained in $G_B$ and connecting $v_i,v_j\in C_B$ by the edge $(v_i,v_j)$ in $G_A'$. Thus, $\wdist_{G}(v_0,v_k)\geq \wdist_{G_A'}(v_0,v_k)$. Alice thus correctly computes the weighted distances between pairs of nodes in $V_A'$.
	
	It remains to argue about the weighted distances that Alice computes to nodes in $V_B\setminus C_B$. Any shortest path $P$ in $G$ connecting a node $u\in V_A'$ and a node $v\in V_B\setminus C_B$ must cross at least one edge of $C$ and thus must contain a node in $C_B$. Therefore, $\wdist_G(u,v) = \min\{\wdist_{G}(u,x)+\wdist_{G}(x,v)\mid x\in C_B\}$. Recall that we have shown that $\wdist_{G_A'}(u,x)=\wdist_{G}(u,x)$ for any
	$u,x\in V_A'$. The sub-path of $P$ connecting $x$ and $v$ is a shortest path between these nodes, and is contained in $G_B$,
	so $\wdist_{G_B}(x,v)=\wdist_{G}(x,v)$. Hence, the distance $\min\{\wdist_{G_A'}(u,x)+\wdist_{G_B}(x,v)\mid x\in C_B\}$ returned by Alice is indeed equal to $\wdist_G(u,v)$.
	
	The outputs of Bob are correct by an analogous arguments, completing the proof.
\end{proof}

\begin{proofof}{Theorem~\ref{thm:noAliceBob}}
Let $f:\set{0,1}^{K_1}\times\set{0,1}^{K_2}\to \set{0,1}^{L_1}\times\set{0,1}^{L_2}$ be a function and let $G_{x,y}$ be a family of lower bound graphs w.r.t.~$f$ and the weighted APSP problem. By Lemma~\ref{thm:AliceBobCompute}, Alice and Bob can compute the weighted distances for any graph in $G_{x,y}$ while exchanging at most $O(|V(C)|n\log{n})$ bits, which is in $O(|C|n\log{n})$ bits. Since $G_{x,y}$ is a family of lower bound graphs w.r.t.~$f$ and weighted APSP, item~\ref{ItemInLBGraphs: pandf} in the definition of lower bound graphs implies that they can use the solution of the APSP problem to compute $f$ without further communication, implying $\CC(f)=O(\size{C}n\log n)$. 
Therefore, when applying Theorem~\ref{thm: general lb framework APSP} to $f$ and $G_{x,y}$, the lower bound obtained for the number of rounds for computing weighted APSP is $\Omega(\CC(f)/|C|\log{n})$, which is no higher than a bound of $\Omega(n)$.
\end{proofof}

~\\
\textbf{Extending to $t$ players:} We argue that generalizing the Alice-Bob framework to a shared-blackboard multi-party setting is still insufficient for providing a super-linear lower bound for weighted APSP. Suppose that we increase the number of players in the above framework to $t$ players, $P_0,\dots,P_{t-1}$, each simulating the nodes in a set $V_i$ in a partition of $V$ in a family of lower bound graphs w.r.t.\ a $t$-party function $f$ and  weighted APSP. That is, the outputs of nodes in $V_{i}$ for an algorithm $ALG$ for solving a problem $P$ in the \cgst{} model, uniquely determines the output of player $P_i$ in the function $f$. The function $f$ is of the form $f:\{0,1\}^{K_0}\times\cdots\times\{0,1\}^{K_{t-1}} \to \{0,1\}^{L_0}\times\cdots\times\{0,1\}^{L_{t-1}}$.

The communication complexity $\CC(f)$ is the total number of bits written on the shared blackboard by all players. Denote by $C$ the set of cut edges, that is, the edge whose endpoints do not belong to the same set $V_i$. Then, if  $ALG$ is an $R$-round algorithm, we have that writing $O(R|C|\log{n})$ bits on the shared blackboard suffice for computing $f$, and so $R=\Omega(\CC(f)/|C|\log{n})$.

Consider the weighted APSP problem. 
Let $f$ be a $t$-party function and let $G_{x_0,\dots,x_{t-1}}$ be a family of lower bound graphs w.r.t.\ $f$ and weighted APSP. 
The players first write all the edges in $C$ on the shared blackboard, for a total of $O(|C|\log{n})$ bits. Then, each player $P_i$ writes the weighted distances from all nodes in $V_i$ to all nodes in $V(C)\cap V_i$. This requires no more than $O(|V(C)|n\log{n})$ bits.

It is easy to verify that every player $P_i$ can now compute the weighted distances from all nodes in $V_i$ to all nodes in $V$, in a manner that is similar to that of Lemma~\ref{thm:AliceBobCompute}.

This gives an upper bound on $\CC(f)$, i.e.\ $\CC(f)=O(\size{V(C)}n\log n)$. 
A lower bound obtained by a reduction from $f$ is $\Omega(\CC(f)/|C|\log{n})$, which is no larger than $\Omega(|V(C)|n\log{n}/(|C|\log{n}))$. 
But $|V(C)|\leq 2|C|$, so the lower bound cannot actually be larger than $\Omega(n)$, as claimed.

~\\
\textbf{Remark 1:} Notice that the $t$-party simulation of the algorithm for the \cgst{} model does not require a shared blackboard and can be done in the peer-to-peer multiparty setting as well, since simulating the delivery of a message does not require the message to be known globally. This raises the question of why would one consider a reduction to the \cgst{} model from the stronger shared-blackboard model to begin with. Notice that our argument for $t$ players does not translate to the peer-to-peer multiparty setting, because it assumes that the edges of the cut $C$ can be made global knowledge within writing $|C|\log{n}$ bits on the blackboard. However, what our extension above shows is that if there is a lower bound that is to be obtained using a reduction from peer-to-peer $t$-party computation, \emph{it must use a function $f$ that is strictly harder to compute in the peer-to-peer setting than in the shared-blackboard setting}.

~\\
\textbf{Remark 2:} We suspect that a similar argument can be applied for the framework of non-fixed Alice-Bob partitions (e.g.,~\cite{SarmaHKKNPPW12}), but this requires precisely defining this framework, which we do not addressed in this paper.

\section{Streaming Lower Bounds}
\label{sec:semi-stream}
In this section we show how our super-linear lower bound constructions can be used to prove lower bounds in the streaming model in a straightforward manner. 
We start by defining a family of lower bound graphs for the semi-streaming model.

\begin{definition}(Family of Streaming Lower Bound Graphs) \label{def:streaming-family}
	\newline
	Fix an integer $K$, a function $f:\set{0,1}^K\times\set{0,1}^K\to\set{\true,\false}$ and a graph predicate $P$. 
	A family of graphs $\set{G_{x,y}=(V,E_{x,y})\mid x,y\in\set{0,1}^K}$ with a partition of the edges $E=E_A\dot\cup E_B$ is said to be a family of \emph{streaming lower bound graphs w.r.t.~$f$ and $P$} if the following properties hold:
	\begin{enumerate}
		\item \label{ItemInSLBGraphs: va} 
		Only the existence of edges in $E_A$ may depend on $x$;
		\item \label{ItemInSLBGraphs: vb}
		Only the existence of edges in $E_B$ may depend on $y$;
		\item \label{ItemInSLBGraphs: pandf}$G_{x,y}$ satisfies the predicate $P$ iff $f(x,y)=\true$.
	\end{enumerate}
\end{definition}

This definition is a variant of Definition~\ref{def:family}. 
Unsurprisingly, the existence of such a family implies a lower bound in a way similar to Theorem~\ref{thm: general lb framework}, as stated next.
Note that for the semi-streaming model, the cut size does not play a role.

\begin{theorem}
	\label{thm: streaming lb framework}
	Fix a function $f:\set{0,1}^K\times\set{0,1}^K\to\set{\true,\false}$ and a predicate $P$. If there is a family $\{G_{x,y}\}$ of streaming lower bound graphs w.r.t.~$f$ and $P$ then any semi-streaming algorithm for deciding $P$ in $R$ passes and $M$ bits of memory requires $RM=\Omega (\CC(f))$ rounds, and any randomized semi-streaming algorithm for deciding $P$ in $R$ passes and $M$ bits of memory requires $RM=\Omega (\CC^R(f))$ rounds.
\end{theorem}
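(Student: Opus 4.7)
The plan is to follow the same simulation paradigm as in Theorem~\ref{thm: general lb framework}, but with the streaming algorithm's \emph{memory contents} playing the role that the cut messages played before. Concretely, suppose $ALG$ is a (possibly randomized) semi-streaming algorithm that decides $P$ in $R$ passes using at most $M$ bits of memory. Given inputs $x,y\in\set{0,1}^K$ to Alice and Bob, Alice builds the edges of $E_A$ in $G_{x,y}$ and Bob builds the edges of $E_B$; by items~\ref{ItemInSLBGraphs: va} and~\ref{ItemInSLBGraphs: vb} of Definition~\ref{def:streaming-family}, each player has all the information needed to produce her/his part of the edge set without any communication. Since the streaming model must work against an adversarial edge order, Alice and Bob are free to fix the order in which the edges are presented to $ALG$: in every pass, first all of $E_A$ (in some canonical order) and then all of $E_B$.

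With this edge order, I would describe the simulation pass-by-pass. In each pass, Alice locally runs $ALG$ on the edges of $E_A$ starting from the current memory state; after consuming her edges she sends the resulting $M$-bit memory state to Bob. Bob then continues the same pass by running $ALG$ on $E_B$, and at the end of the pass sends the updated $M$-bit memory state back to Alice so that the next pass can begin. After $R$ passes both players hold the final memory state, and in particular both know the output of $ALG$; by item~\ref{ItemInSLBGraphs: pandf} of Definition~\ref{def:streaming-family}, this output determines $f(x,y)$, which they therefore output. Since the players exchange at most $2M$ bits per pass, the total communication is at most $2RM$ bits.

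By the definition of communication complexity, $2RM\ge\CC(f)$ in the deterministic case, so $RM=\Omega(\CC(f))$; in the randomized case, a standard success-amplification argument (or the observation that $\CC^R$ is defined for success probability at least $2/3$, which any high-probability streaming algorithm satisfies) yields $RM=\Omega(\CC^R(f))$. I expect no serious technical obstacle here: the only subtle points are (i) that we may choose the stream order adversarially, which is legitimate because streaming lower bounds must hold against the worst-case order, and (ii) that for randomized algorithms Alice and Bob can share the random tape of $ALG$ as public randomness, which is allowed in the randomized communication model up to constant factors and therefore does not affect the asymptotic bound.
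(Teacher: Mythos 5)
Your proof is correct and follows essentially the same simulation argument as the paper: Alice processes $E_A$, hands the $M$-bit memory state to Bob, who processes $E_B$, iterating over the $R$ passes, for a total of $O(RM)$ bits of communication, which immediately yields the bound via $\CC(f)$ and $\CC^R(f)$. The only cosmetic difference is that the paper has Bob alone produce the final answer (saving one message and getting $R(2M-1)$ rather than $2RM$), which does not affect the asymptotics, and your added remarks on adversarial stream order and shared randomness are accurate clarifications that the paper leaves implicit.
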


\begin{proof}
	Let $ALG$ be a semi streaming algorithm 
	for deciding $P$ in $R$ passes and $M$ bits of memory.
	Given inputs $x,y \in \set{0,1}^K$ to Alice and Bob, respectively, 
	Alice and Bob simulate the execution of the algorithms, as follows.
	To simulate a pass of the algorithm,
	Alice executes the algorithm using the edges of $E_A$ as input---she can do so since these edges depend only on $x$.
	Alice then sends the current state of the memory to Bob,
	who continues the execution of $ALG$ from the point where Alice stopped, using the edges of $E_B$ as input---he can do so since these edges depend only on $y$.
	This concludes a simulation of a single pass; if there are more passes left, 
	Bob sends the current memory state to Alice, who continues the execution from this state.
	Otherwise, Bob knows the value of $P$, and by Item~\ref{ItemInSLBGraphs: pandf} in Definition~\ref{def:streaming-family}, he can compute $f(x,y)$.
	
	To simulate a pass, Alice sent to Bob at most $M$ bits, and in all passes but the last,
	Bob sent Alice another $M$ memory bits at the most.
	This sums to $R(2M-1)$ bits of communication,
	using which the players have computed $f(x,y)$ correctly.
	The lower bounds follows directly from the lower bounds for $\CC(f)$ and $\CC^R(f)$.
\end{proof}

Any family of lower bound graphs for the \cgst{} model is also a family of streaming lower bound graphs: use exactly the same graphs, and set $E_A$ to be the edges in $V_A\times V_A$ and in $C$,
and $E_B$ be the edges in $V_B\times V_B$.
This implies the following, simple corollary.

\begin{corollary}
	\label{cor: streaming lb from cgst graphs}
	Fix a function $f:\set{0,1}^K\times\set{0,1}^K\to\set{\true,\false}$ and a predicate $P$. 
	If there is a family $\{G_{x,y}\}$ of lower bound graphs for the \cgst{} model w.r.t.~$f$ and $P$ then any semi-streaming algorithm for deciding $P$ using $R$ passes and $M$ bits of memory requires $RM=\Omega (\CC(f))$ rounds, and any randomized semi-streaming algorithm for deciding $P$ using $R$ passes and $M$ bits of memory requires $RM=\Omega (\CC^R(f))$ rounds.
\end{corollary}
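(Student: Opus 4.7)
\begin{proofof}{Corollary~\ref{cor: streaming lb from cgst graphs}}
The plan is to show that any family of lower bound graphs for the \cgst{} model, as in Definition~\ref{def:family}, can be reinterpreted directly as a family of streaming lower bound graphs in the sense of Definition~\ref{def:streaming-family}, after which Theorem~\ref{thm: streaming lb framework} yields the conclusion with no further work.

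Concretely, given the family $\{G_{x,y}=(V,E_{x,y})\}$ with partition $V=V_A\dot\cup V_B$ and cut $C=E(V_A,V_B)$, I would define the edge partition
\[
E_A := (V_A\times V_A)\cap E_{x,y} \;\cup\; C, \qquad E_B := (V_B\times V_B)\cap E_{x,y},
\]
so that $E_{x,y}=E_A\dot\cup E_B$. The only substantive check is that items~\ref{ItemInSLBGraphs: va} and~\ref{ItemInSLBGraphs: vb} of Definition~\ref{def:streaming-family} hold. By item~\ref{ItemInLBGraphs: va} of Definition~\ref{def:family} the only edges of $E_{x,y}$ whose existence may depend on $x$ lie in $V_A\times V_A$, and these are contained in $E_A$; moreover the cut edges $C$ are fixed by the construction (they depend on neither input), so including them in $E_A$ does not create any $y$-dependence. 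The symmetric statement for $E_B$ follows from item~\ref{ItemInLBGraphs: vb}. Item~\ref{ItemInSLBGraphs: pandf} is literally the same condition as item~\ref{ItemInLBGraphs: pandf} of Definition~\ref{def:family}, so it is inherited.

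Having verified the three conditions, $\{G_{x,y}\}$ together with the partition $(E_A,E_B)$ is a family of streaming lower bound graphs w.r.t.\ $f$ and $P$. Applying Theorem~\ref{thm: streaming lb framework} immediately yields $RM=\Omega(\CC(f))$ for any deterministic semi-streaming algorithm deciding $P$ in $R$ passes with $M$ bits of memory, and $RM=\Omega(\CC^R(f))$ in the randomized case.

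There is no real obstacle here; the only subtlety to flag is the handling of the cut edges $C$, which must be assigned to exactly one side of the streaming partition. Since the \cgst{} definition forces $C$ to be input-independent, the assignment is free and the reduction goes through verbatim.
\end{proofof}
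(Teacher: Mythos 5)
Your proof is correct and takes essentially the same approach as the paper: the paper also observes that any family of lower bound graphs for \cgst{} is a family of streaming lower bound graphs by assigning the edges in $V_A\times V_A$ together with the cut $C$ to $E_A$ and the edges in $V_B\times V_B$ to $E_B$, and then invokes Theorem~\ref{thm: streaming lb framework}. You spell out the verification of the three conditions of Definition~\ref{def:streaming-family} a bit more explicitly, but the argument is the same.
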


With this corollary in hand, the lower bounds from Section~\ref{sec:nphard}
easily extend to a series of lower bounds in the semi-streaming model.

\begin{theorem}\label{thm:streamingLBs}
	Any algorithm in the semi-streaming model for the following problems 
	that uses $R$ passes and $M$ bits of memory requires $RM=\Omega(n^2)$.
	\begin{enumerate}
		\item\label{strmLB:MVC}
		Computing a minimum vertex cover or deciding whether there is a minimum vertex cover of a given size.
		\item\label{strmLB:MaxIS}
		Computing a maximum independent set or deciding whether there is an independent set of a given size.
		\item\label{strmLB:MaxCLQ}
		Computing a maximum clique or deciding whether there is a clique of a given size.
		\item\label{strmLB:3col}
		Computing a coloring of a graph with a minimal number of colors or deciding whether there is a coloring with a given number of colors.
		\item\label{strmLB:cycle}
		Deciding if a graph contains an $8$-cycle of a given weight.
		\item\label{strmLB:idsubg}
		Deterministically deciding the identical subgraphs detection problem.
	\end{enumerate}

In addition, any such algorithm for deciding whether a graph is $c$-colorable, for an input parameter $3\leq c<n$ that may depend on $n$, 
requires $RM=\Omega((n-c)^2)$,
and any algorithm distinguishing $\chi(G)\leq 3c$ from $\chi(G)\geq 4c$,
for $c=c(n)\geq3$, requires $RM=\Omega((n/c)^2)$.
\end{theorem}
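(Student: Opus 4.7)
The plan is to apply Corollary~\ref{cor: streaming lb from cgst graphs} uniformly to each item, feeding it the families of CONGEST lower bound graphs already produced in Sections~\ref{sec:nphard} and~\ref{sec:P}. For each problem I would identify the underlying two-party function $f$, read off the input size $K$, and then the corollary directly yields $RM = \Omega(\CC(f))$ (or $\Omega(\CC^R(f))$ in the randomized case). Since all the CONGEST constructions already lay out the partition of nodes, the relevant function, the predicate, and the size of $K$, the bulk of the proof is simply aggregating these facts.

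First I would dispatch items~\ref{strmLB:MVC}, \ref{strmLB:MaxIS}, \ref{strmLB:3col} and~\ref{strmLB:cycle}. The families constructed in Sections~\ref{sec:mvc}, \ref{sec: coloring} and~\ref{sec:cycle} are lower bound graphs with respect to $\disj_K$ where $K \in \Theta(n^2)$; since $\CC^R(\disj_K) \in \Omega(K)$, the corollary gives $RM \in \Omega(n^2)$ for both deterministic and randomized algorithms. Item~\ref{strmLB:idsubg} is handled by the construction of Section~\ref{sec:idsubg}, which uses $\eq_K$ with $K \in \Theta(n^2 \log n)$ and $\CC(\eq_K) \in \Omega(K)$, producing $RM \in \Omega(n^2 \log n) \subseteq \Omega(n^2)$ deterministically. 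The two $c$-coloring extensions are analogous: the construction of Claim~\ref{claim: coloring veriants lb} uses $\disj_K$ with $K \in \Theta((n-c)^2)$, yielding $RM \in \Omega((n-c)^2)$, and the construction of Claim~\ref{claim: approx-coloring} uses $\disj_K$ with $K \in \Theta((n/c)^2)$, yielding $RM \in \Omega((n/c)^2)$.

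The one item that requires a small extra observation is~\ref{strmLB:MaxCLQ} (maximum clique), because MaxIS and MaxClique are not equivalent in the streaming model. My plan is to take the MaxIS family $\set{G_{x,y}}$ and use its pointwise complement $\set{\bar G_{x,y}}$. The key point is that Definition~\ref{def:streaming-family} requires an edge partition $E = E_A \dot\cup E_B$ with $E_A$-edges depending only on $x$ and $E_B$-edges only on $y$; but ``depends only on $x$'' is preserved under complementation, since the non-existence of an edge is just as much a function of $x$ alone as its existence. Thus the same edge partition works for $\bar G_{x,y}$, and the predicate ``$G_{x,y}$ has an independent set of size $M$'' becomes ``$\bar G_{x,y}$ has a clique of size $M$,'' so Corollary~\ref{cor: streaming lb from cgst graphs} delivers the same $RM \in \Omega(n^2)$ bound.

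The main obstacle, modest as it is, is exactly the clique case: one has to be slightly careful to confirm that the streaming lower bound framework really is preserved by complementation (for the CONGEST model this would be subtler because the cut size matters, but for streaming only the edge partition does, which makes this clean). Beyond that, everything is bookkeeping: the parameters $K$ and the communication lower bounds have all been established earlier in the paper, so the proof reduces to tabulating them and invoking Corollary~\ref{cor: streaming lb from cgst graphs} seven times.
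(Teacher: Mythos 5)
Your proposal is correct and takes essentially the same route as the paper: apply Corollary~\ref{cor: streaming lb from cgst graphs} to the existing CONGEST families, and for maximum clique pass to the complement graph, checking that the edge partition into Alice- and Bob-dependent edges is preserved under complementation. You are in fact slightly more careful than the paper's own write-up in correctly singling out $\eq_K$ (rather than $\disj_K$) as the underlying function for the identical-subgraphs item.
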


\begin{proof}
	Theorem~\ref{thm:streamingLBs} is obtained from Corollary~\ref{cor: streaming lb from cgst graphs} and the constructions from Sections~\ref{sec:nphard} and~\ref{sec:P}.
	
	Item~\ref{strmLB:MVC} follows from the lower bound graph of Section~\ref{sec:mvc},
	as proven in Lemma~\ref{mainLemmaVC}. 
	Item~\ref{strmLB:MaxIS} follows from the same lemma, as the complement of a minimum vertex cover is a maximum independent set.
	
	For Item~\ref{strmLB:MaxCLQ},
	consider the complement graph of the graph  from Lemma~\ref{mainLemmaVC}:
	a graph with the same node set but \emph{complement edges}.
	An independent set translates into a clique in the complement graph,
	each non-edge is either fixed, depends solely on Alice's input,
	or depends solely on Bob's input,
	and so the complement graph is a family of lower bound graph 
	w.r.t.~$\disj$ and maximum clique.
	Theorem~\ref{thm: streaming lb framework} completes the proof.

	Item~\ref{strmLB:3col} follows from the construction in Section~\ref{sec: coloring},
	and specifically from Lemma~\ref{lemma:threefamily}.
	Item~\ref{strmLB:cycle} follows from the construction in Section~\ref{sec:cycle},
	and specifically from Lemma~\ref{lemma: maincycles}.
	Item~\ref{strmLB:idsubg} follows from the construction in Section~\ref{sec:idsubg},
	and specifically from the proof of Theorem~\ref{thm: Identical Subgraphs}.
	In all these cases,
	the communication complexity problem is $\disj_K$, where $K\in\Theta(n^2)$.
	
	The $c$-coloring lower bound follows the construction in the proof of Claim~\ref{claim: coloring veriants lb}, where $K\in\Theta((n-c)^2)$. The result for distinguishing $\chi(G)\leq 3c$ from $\chi(G)\geq 4c$ follows from the proof of Claim~\ref{claim: approx-coloring}, where $K\in\Theta\left((n/c)^2\right)$.
\end{proof}

\section{Discussion}
We introduced the bit-gadget, a powerful tool for constructing graphs with small cuts. Using the bit-gadget, we were able to prove new lower bounds for the \cgst{} model for fundamental graph problems, such as computing the exact or approximate diameter, radius, minimum vertex cover, and the chromatic number of a graph. 

Our lower bound for computing the radius answers an open question that was raised by Holzer and Wattenhofer\cite{HolzerW12}. 
Notably, our lower bound for computing the diameter implies a large gap between the complexity of computing a ($3/2$)-approximation, which can be done in $\widetilde{O}(\sqrt{n})$ rounds\cite{HolzerPRW14}, and the complexity of computing a ($3/2-\epsilon$)-approximation, which we show to require $\widetilde{\Omega}(n)$ rounds. 
As there are no known lower bounds for computing a ($3/2$)-approximation, an intriguing open question that immediately arises is the complexity of ($3/2$)-approximation.

Furthermore, our bit-gadget allows us to show the first super-linear lower bounds for the \cgst{} model, raising a plethora of open questions. First, we showed for some specific problems, namely, computing a minimum vertex cover, a maximum independent set and a $\chi$-coloring, that they are nearly as hard as possible for the \cgst{} model. 
However, we know that approximate solutions for some of these problems can be obtained much faster, in a polylogarithmic number of rounds or even less. 
For vertex cover, our lower bound can easily be amplified to any constant additive approximation\footnote{We thank David Wajc for pointing this out.}, but this also leaves huge gaps in our understanding of the trade-offs between approximation factors and efficiency for this problem.
Thus, a family of specific open questions is to characterize the exact trade-offs between approximation factors and round complexities for various optimization problems.


Finally, we propose a more general open question which addresses a possible classification of complexities of global problems in the \cgst{} model. Some such problems have complexities of $\Theta(D)$, such as constructing a BFS tree. Others have complexities of $\tilde{\Theta}(D+\sqrt{n})$, such as finding an MST. Some problems have near-linear complexities, such as unweighted APSP. And now we know about the family of hardest problems for the \cgst{} model, whose complexities are near-quadratic. Do these complexities capture all possibilities, as far as natural global graph problems are concerned? Or are there such problems with a complexity of, say, $\Theta(n^{1+\delta})$, for some constant $0<\delta<1$? A similar question was recently addressed in~\cite{ChangP17} for LCL problems the \local{} model, and we propose investigating the possibility that such a hierarchy exists for the \cgst{} model for certain classes of problems.

~\\
\textbf{Acknowledgement:} We are grateful to Yuval Emek for many discussions and fruitful comments about the connection between communication complexity and streaming lower bounds, and for pointing out that our technique can be useful for streaming models. We also thank Ohad Ben Baruch, Michael Elkin, Yuval Filmus, Merav Parter and Christoph Lenzen for useful discussions. 

{\small\bibliography{bib}}

\end{document}